\newcommand{\sqb}{\emph{Baseline Uniform Grid}\;}
\newcommand{\sq}{\emph{Uniform Grid}}
\newcommand{\qq}{\emph{Quadtree}}
\newcommand{\SQB}{\textsf{UG$_{Baseline}$}}
\newcommand{\SQ}{\textsf{UG}}
\newcommand{\QQ}{\textsf{QUAD}}
\newcommand{\CPU}{\textsf{CPU-ST}}
\newcommand{\lstparallel}[1]{\mathbf{parallel_{#1}}}
\newcommand{\hide}[1]{}
\providecommand{\DontPrintSemicolon}{\dontprintsemicolon}
\begin{document}

\newtheorem{proof}{Proof}
\newtheorem{lemma}{Lemma}
\newtheorem{property}{Property}
\newtheorem{definition}{Definition}
\setcounter{secnumdepth}{4}



\title{Manycore processing of repeated range queries\\over massive moving objects observations}
\author[1]{Francesco Lettich \thanks{lettich@dais.unive.it}}
\author[1]{Salvatore Orlando \thanks{orlando@unive.it}}
\author[1]{Claudio Silvestri \thanks{silvestri@unive.it}}
\author[2]{Christian Jensen \thanks{csj@cs.aau.dk}}
\affil[1]{Dipartimento di Scienze Ambientali, Informatica e Statistica, Università Ca' Foscari, Via Torino 155, Venezia, Italy}
\affil[2]{Department of Computer Science, Aalborg University, Selma Lagerlöfs Vej 300, DK-9220 Aalborg Ø, Denmark}


\maketitle

\begin{abstract} 

The ability to timely process significant amounts of continuously updated spatial data is mandatory for an increasing
number of applications. Parallelism enables such applications to face this data-intensive challenge and allows the
devised systems to feature low latency and high scalability.
In this paper we focus on a specific data-intensive problem, concerning the repeated processing of huge amounts
of range queries over massive sets of moving objects, where the spatial extents of queries and objects are continuously
modified over time.
To tackle this problem and significantly accelerate query processing we devise a hybrid CPU/GPU pipeline that
compresses data output and save query processing work. The devised system relies on an ad-hoc spatial index leading
to a problem decomposition that results in a set of independent data-parallel tasks. The index is based on a point-region
quadtree space decomposition and allows to tackle effectively a broad range of spatial object distributions, even those
very skewed. 
Also, to deal with the architectural peculiarities and limitations of the GPUs, we adopt non-trivial GPU data
structures that avoid the need of locked memory accesses and favour coalesced memory accesses, thus enhancing the
overall memory throughput.
To the best of our knowledge this is the first work that exploits GPUs to efficiently solve repeated range queries
over massive sets of continuously moving objects, characterized by highly skewed spatial distributions. In comparison
with state-of-the-art CPU-based implementations, our method highlights significant speedups in the order of 14x-20x,
depending on the datasets, even when considering very cheap GPUs.

\end{abstract}

\section{Introduction}
\label{sec:intro}		

An increasing number of applications need to process massive
spatial workloads. Specifically, we consider applications in settings
where spatial data is continuously produced over time and needs to be processed
rapidly, e.g., scenarios involving mobile device infrastructures, Massively Multiplayer Online Games (MMOG), behavioural simulations where the behaviours of agents
may affect other agents within a given range, and so on.
In these applications, very large populations of continuously \emph{moving
objects} frequently update their positions and issue \emph{range queries} in order to 
look for other objects within their interaction area. The resulting massive workloads pose
new challenges to data management techniques.

\begin{figure}[h]
\centering
    \includegraphics[width=.5\textwidth]{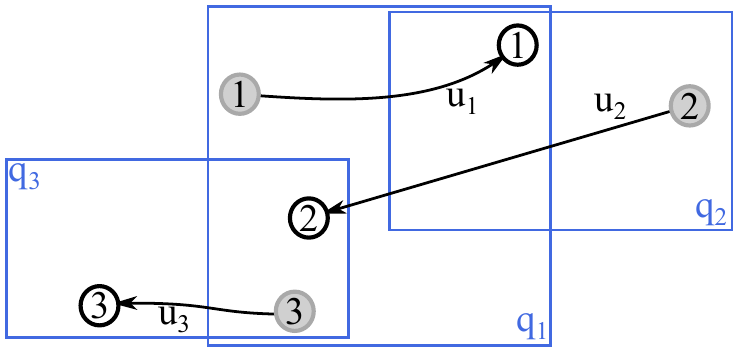}
    \caption{Moving objects, issued queries \textit{q} and position updates \textit{u}.}
    \label{fig:scenario}
\end{figure}

Figure \ref{fig:scenario} shows an instance of this setting with three
objects, namely $o_1$, $o_2$, and $o_3$. Object positions are represented as circles with object identifiers inside, 
while updates are depicted as arrows (labeled $u_1,u_2,u_3$) connecting previous
positions (represented by gray circles) with current positions. Range queries are
shown as rectangles and labeled as $q_1,q_2,q_3$. The result set of query
$q_3$, when executed after $u_2$, is $\{o_2, o_3\}$.

To enable parallel processing and optimizations, and thus manage
the targeted workloads in a scalable manner, we discretize the time in
intervals (\textit{ticks}), assign location updates and queries to the ticks in
which they occur, and process the updates and queries in the resulting
batches such that the query results are reported after the end of each tick.
This approach has the effect of replacing the processing of a large
number of independent and asynchronous queries with the \emph{iterated processing
of spatial joins} between the last  
known positions of all the moving objects at the end of a
tick and the queries issued during the tick. In other words we trade (slightly) delayed processing of queries for increased throughput, and therefore care is needed to ensure acceptable delays.
%

To achieve high performance and scalability we exploit a platform encompassing an
off-the-shelf general-purpose microprocessor (CPU) coupled with a
\textit{Graphics Processing Unit} (GPU), which is a highly
parallel \emph{manycore} architecture featuring hundreds of processing cores \cite{Satish2009, lee2010debunking}. To benefit from GPUs exploitation,
limitations and peculiarities of these architectures must be carefully taken into
account. Specifically, individual GPU cores are slower than those
of a typical CPU, while some memory access patterns may cause serious performance degradation due to
contention and serialization of memory accesses. Effective query processing techniques must address
these limitations:  multiple cores must work together to efficiently process queries in parallel for most of the time, and must 
coordinate their activities to ensure high memory bandwidth. 

With these goals in mind, we present the \qq\ method (hereinafter referred as \QQ), whose design allows an easy and elegant parallelization.
The key idea behind the method is to partition the problem space using a \emph{point-region quadtree} inducing a regular grid; in turn, the cells of the grid represent a set of independently solvable problems, each one associated with a \emph{data-parallel task} runnable on a GPU streaming multiprocessor, also capturing and adapting to possibly skewed spatial distributions.
This strategy allows us to obtain a quasi uniform distribution of the workloads among coarse-grained tasks -- each task corresponding to a single cell of the index -- with the aim of improving the  overall efficiency of the system and maximizing the performance. In order to demonstrate the importance of this aspect we also introduce a baseline spatial index, whose space decomposition relies on the usage of a \emph{uniform grid}. We call this method \sq\ (hereinafter referred as \SQ).

Both \QQ\ and \SQ\ preprocess in parallel the data and store consecutively object and queries
falling in the same grid cell, thus optimizing memory accesses. Further, to avoid the use of blocking writes and to ensure high throughput, both methods compute the query results by means of a two phase-approach using a particular \textit{bitmap} intermediate representation previously introduced by the authors \cite{lettich2014gpu}.


To the best of our knowledge this is the first work that exploits the GPUs
to efficiently solve repeated range queries over massive
moving objects observations, having care to tackle effectively skewed spatial distributions as well.
There are few existing works that use the GPUs for spatial query
processing, but they consider substantially different problems, as detailed in
Section \ref{sec:relwork}.

In a previous work \cite{lettich2014gpu} we already introduced the general framework for repeatedly computing on GPU sets of range queries on streams of objects, but we limited ourselves to unrealistic uniform spatial distributions and to a spatial index relying on a simple uniform grid whose coarseness directly depends on the size of the queries. In this sense, this work aims to extend and improve the previous work by introducing a cleverer space index and a novel set of optimizations.

The main contributions of this work can be summarized as follows: 
\begin{itemize}
\item we define a hybrid CPU-GPU pipeline to process batches of range queries, which effectively exploit the GPU computational power while taking care of its architectural features and limitations. Thanks to its flexibility,  the pipeline can be adapted to different spatial indices.
\item we introduce a set data structures which allow the pipeline to perform operations that concurrently write interlaced lists of results, using coalesced memory accesses that avoid race conditions.

\item while we adopt the usual query splitting approach to make the data-parallel tasks induced by the indices completely independent, we take advantage of subquery areas that \textit{completely cover} index cells to \textit{save work} during the query processing (in terms of amounts of containment tests performed), and \textit{compress} the information the GPU has to send back to the CPU when notifying the query results.  This
optimization is particularly important for facing one of the main data management challenges of our problem, 
related to the size of the aggregate output returned at each tick: for example, when the percentage  of  objects that issue  a  range query is $100\%$, this size is quadratic in the number of moving objects~\cite{lettich2014gpu}.

\item we carry out an extensive set of experiments to compare \QQ\ with \SQ. The structural regularity and simplicity characterizing the uniform grids, onto which \SQ\ relies, is a well-known feature that fits very well the GPUs characteristics, but have structural limitations - mainly related to the inability to fully capture the skewness possibly characterizing the spatial data - which may seriously hinder the performances.

On the other hand, \QQ\ is able to automatically adapt to very skewed spatial object distributions, assuring very good performances for a broad range of spatial distributions. We demonstrate this claim by comparing \QQ\ with \SQ. We also compare \QQ\ with the state-of-the-art (for what relates to the problem considered) sequential CPU algorithm, namely the \textit{Synchronous Traversal} algorithm \cite{sowell2013experimental,brinkhoff1993efficient}.
%
\end{itemize}

The paper is structured as follows: in Section \ref{sec:preliminaries} we proceed describing the problem setting and state the problem addressed. In Section \ref{sec:gpuOverview} we cover the architectural specifics of the GPUs as well as the general constraints to be considered when designing hybrid CPU/GPU processing pipelines coupled with proper data structures. In Section \ref{sec:computeIteratedJoins} we present the spatial indices used by \QQ\ and \SQ\ to partition the workload, while in Section \ref{sec:commonParts} we extensively detail the pipeline as well as the customizations needed by \QQ\ and \SQ. Sections \ref{sec:setup} and \ref{sec:experiments} present an extensive set of experimental studies which show (i) the benefits coming from the usage of the proposed range query processing pipeline and data-structures, (ii) how \QQ\ spatial indexing is better with respect to the one used by \SQ\ and (iii) how \QQ\ outperforms the state-of-the-art sequential CPU competitor as well as outperform, or being on par with, \SQ. Finally, in Section \ref{sec:relwork} we cover the related work while Section \ref{sec:conclusions} gives the conclusions.

\section{Problem setting and statement}
\label{sec:preliminaries}

In this section we provide definitions that capture the problem setting and
the problem we aim to solve.

\subsection{Problem setting}
\label{sec:scenarioDef}

We consider a set of points $O=\{o_1, \ldots, o_n\}$ moving in
two-dimensional Euclidean space $\mathbb{R}^2$, where the position of
object $o_i$ is given by the function $pos_i: \mathbb{R}_{\geq 0}
\rightarrow \mathbb{R}^2$ mapping time instants into spatial positions.

These points model objects that issue position updates and range
queries as they move.  Let ${\cal P}_i = \langle
p_i^{t_0},\ldots,p_i^{t_k},\ldots \rangle$,\ $t_j < t_{j+1}$, be the
time-ordered sequence position updates issued by $o_i$, where
$p_i^{t_j} = pos_i(t_j)$ is a position update.  A range query issued
by object $o_i$ at time $t$ is denoted by $q_i^t = (x^a, x^b, y^a,
y^b)$, where $(x^a, y^a)$ and $(x^b, y^b)$ are the lower left and
upper right corners of a rectangle. Thus, ${\cal Q}_i = \langle
q_i^{t_0},\ldots,q_i^{t_k},\ldots \rangle$, $t_j < t_{j+1}$, is the
time-ordered sequence of queries issued by $o_i$.

Given the above, the most recently known position of $o_i$ before time
$t$, $t\geq t_0$, is denoted as $\hat{p}_i^t$ and defined as follows:
\[\hat{p}_i^t = p_i^{t_k} \in {\cal P}_i \text{ if } t_{k} < t \le t_{k+1}\]
Similarly, the most recent query issued by $o_i$ before time $t$, $t\geq
t_0$, is $\hat{q}_i^t$:
\[\hat{q}_i^t = q_i^{t_k} \in {\cal Q}_i \text{ if } t_{k} < t \le t_{k+1}\]

We assume that the processing of a query can be delayed to a certain
extent in order to optimize the overall system throughput. We process
queries using the most up-to-date information available.  

\begin{definition}[\textsf{Result set of a range query}]
\label{def:rangeQuery}
The result of query $q_i^t$ when computed at time $t'$, $t_0 \le t \le t'$, is
denoted by $res(q_i^t, t')$ and is defined as follows.
$$res(q_i^t, t') = \{ o_j \in O \ |\  \hat{p}_j^{t'} \in_s q_i^t\}, $$
where $\hat{p}_j^{t'} \in_s q_i^t$ denotes that $\hat{p}_j^{t'} =
(x,y)$ belongs to the query rectangle $q_i^t$, i.e., $x^a \le x \le
x^b$ and $y^a \le y \le y^b$.
\end{definition}

Assuming that updates $u_1, \dots, u_3$ in Figure~\ref{fig:scenario}
are the most recent ones before $t'$, we have $res(q_3^t,t') = \{o_2,
o_3\}$, which includes also object $o_3$ that issued the query.

\subsection{Batch processing}
 
To obtain high throughput when facing massive workloads due to
frequent updates and queries issued by huge populations of
moving objects, we quantize time into \emph{ticks} (time intervals) with the objective
of processing updates and queries in batches on a per-tick basis.
Assuming that the initial time is 0 and the tick duration is $\Delta
t$, the $k$-th time tick $\tau_k$ is the time interval $[k \cdot
\Delta t,\ \ (k+1) \cdot \Delta t)$.
Specifically, we aim to collect object position updates and queries that arrive
during a tick, and process them at the end
of the tick. 
%
If an object submits more than one update
and query during a time tick, only the most recent ones are
processed. 

Let $P^{\tau_k}=\{\hat{p}_1^{(k+1) \cdot \Delta t}, \ldots,
\hat{p}_n^{(k+1) \cdot \Delta t}\}$ be the last known positions of all 
objects at the beginning of the $(k+1)$-th tick, and 
%
$Q^{\tau_k}=\{q_1^{\tau_k}, \ldots, q_n^{\tau_k}\}$ be
the most recent queries issued during the $k$-th tick, where:

\[ q_i^{\tau_k} = \left\{ 
  \begin{array}{l l}
    \hat{q}_i^{(k+1) \cdot \Delta t} & \quad \text{\small If object $o_i$
issues any query during the $k$-th tick.}\\[.2cm]
    \bot & \quad \text{\small Otherwise.}
  \end{array} \right.\]

\noindent Note that if object $o_i$
does not issue any query during the tick, then $q_i^{\tau_k}=\bot$.

Figure~\ref{fig:timeline} captures the temporal aspects of the
previous example. The timeline is partitioned into ticks $\tau_1$, $\tau_2$, $\ldots$ of duration
$\Delta t$.
\begin{figure}[h!]
        \centering
                \includegraphics[width=.42\columnwidth]{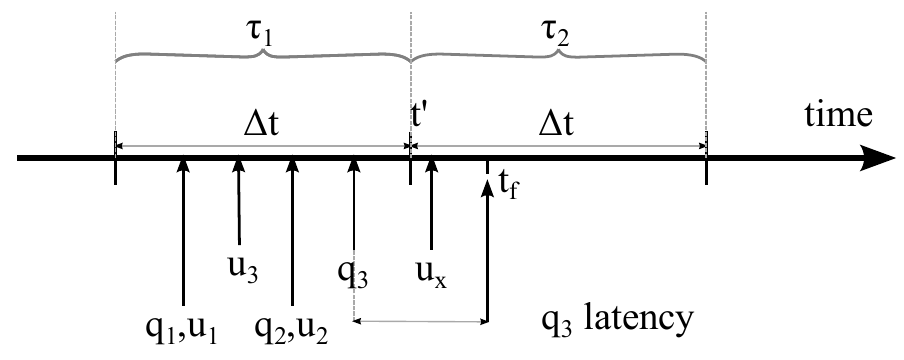}
                \caption{Timeline}
                \label{fig:timeline}
\end{figure}
Objects issue updates and queries independently and
asynchronously. For example object $o_3$ sends a query and an update
separately. Incoming updates and queries are batched based on the
ticks. For example, update $u_x$ belongs to the batch of $\tau_2$. The
batches are processed at the beginning of the next tick. Thus, at time $t'$ (at the beginning of
$\tau_2$) we start processing all updates and
queries arrived during $\tau_1$. We complete the processing of the batch, thus making available the query results, 
at time $t_f$, hopefully before the end of $\tau_2$.

\subsection{Query semantics}

The procedure for computing queries as described above ensures
serializable query processing and implements the timeslice query
semantics, where query results are consistent with the
database state at a given time, usually when we start processing the query. 
This is a popular choice in traditional databases
and is commonly adopted in literature~\cite{dittrich2011movies,gray1993transaction,vsidlauskas2012parallel,kornacker1997concurrency}.



For example, in Figure~\ref{fig:timeline}, 
the computation results are returned at time $t_f$ for the first batch. Since the object update $u_x$ occurs at time $t_u$, 
$t_u > t'$,
i.e., after we
start processing $q_3$, $u_x$ is
not considered even if it arrives before the results are returned. 
In this way,
the query result is consistent with the database state at time $t'$, when we start processing query $q_3$.

\subsection{Quality of Service - Query latency}
\label{sec: qos considerations}

On the one hand, the processing of updates and queries on large batches
can be expected to improve system throughput. On the other hand, we
assume that some applications, e.g., MMOG applications, are sensitive
to the delays with which query results are returned.
Thus, it is important to be able to assess the \emph{latency} of
query processing, 
which is affected by the number of queries and
updates that arrive during a tick, the tick duration, and the
computational capabilities of the system.

\begin{definition}[\textsf{Latency, Queueing, Computation time}]
\label{def:latency}
Assume that the processing of query $q_i^t$, issued at time $t$,  starts at time $t'$ and
completes at time $t_f$. We define the following durations: 
$\mathit{latency\_time}(q_i^t) = t_f - t$, $\mathit{queueing\_time}(q_i^t) = t' -
t$, and $\mathit{processing\_time}(q_i^t) = t_f - t'$, so that 
$\mathit{latency\_time}(q_i^t) = \mathit{queueing\_time}(q_i^t)  + \mathit{processing\_time}(q_i^t)$.
\end{definition}


%

We can now generalize the concept of latency to all the queries issued during
a tick, all executed in batch at the beginning of the next tick.

\begin{definition}[\textsf{Tick latency}]
\label{def:latency-tick}
The latency of the queries $Q^{\tau_k}$ arrived during the
$k$-th tick is defined as follows:
$$\mathit{Tick\_Latency}(Q^{\tau_k}) = \max_{i\in\{1, \ldots, n\},\ \ q_i^{\tau_k} \neq \bot} \mathit{latency\_time}(q_i^{\tau_k})$$
\end{definition}

Given an application-dependent maximum latency threshold $\lambda$, a
system satisfies the application's \emph{QoS Latency Requirement} if,
for each tick $k$, $\mathit{Tick\_Latency}(Q^{\tau_k})  \leq \lambda$.

The tick duration $\Delta t$ should be chosen such that even queries
issued at the beginning of a tick are answered within time duration
$\lambda$. 
Since query processing is delayed till the beginning of the next tick,
the \emph{worst-case latency} for a query $q_i^t$ takes place when $q_i^t$ 
is issued at the beginning of a tick:
in this case, the latency is the sum of $\Delta t = \mathit{queueing\_time}(q_i^t)$ and 
$\mathit{processing\_time}(q_i^t)$. The following lemma states a simple, sufficient
criteria to select $\Delta t$ or to verify whether a given execution
time satisfies the latency requirement.

\begin{lemma}
\label{lemma:timely}
Let $\Delta t^k_{exe}$ be the time to process all queries in the $k$-th
batch. Given a tick duration $\Delta t$ and a latency requirement
$\lambda$,  then if $\Delta t + \Delta t^k_{exe} \le \lambda$, the execution
satisfies the latency requirements, i.e.,
$\textit{Tick\_Latency}(Q^{\tau_k}) \le \lambda$.

From the above we can derive the following sufficient
condition for $\textit{Tick\_Latency}(Q^{\tau_k}) \le \lambda$:
\begin{equation}
\label{eq:timely}
\Delta t > \lambda - \Delta t \ge \Delta t^k_{exe}.
\end{equation}
\end{lemma}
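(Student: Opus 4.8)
The plan is to prove the latency bound by analysing the two additive components of $\mathit{latency\_time}$ from Definition~\ref{def:latency} separately --- the queueing time and the processing time --- for an arbitrary non-null query of the batch, and then taking the maximum demanded by Definition~\ref{def:latency-tick}. First I would bound the queueing time. Any query $q_i^{\tau_k}\neq\bot$ is, by construction, issued at some instant $t\in[k\cdot\Delta t,\,(k+1)\cdot\Delta t)$, while the batch discipline forces its processing to start only at the beginning of the following tick, i.e.\ at $t'=(k+1)\cdot\Delta t$. Hence $\mathit{queueing\_time}(q_i^{\tau_k})=t'-t=(k+1)\cdot\Delta t-t\le\Delta t$, the bound being attained exactly in the worst case already highlighted in the text, namely when the query is issued at the very beginning of the tick ($t=k\cdot\Delta t$).

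Next I would treat the processing time. Since every query of the batch is answered together in a single batched run of duration $\Delta t^k_{exe}$, they all terminate at the common instant $t_f=t'+\Delta t^k_{exe}$, so $\mathit{processing\_time}(q_i^{\tau_k})=t_f-t'=\Delta t^k_{exe}$ uniformly over the batch. Combining the two estimates through the identity $\mathit{latency\_time}=\mathit{queueing\_time}+\mathit{processing\_time}$ gives $\mathit{latency\_time}(q_i^{\tau_k})\le\Delta t+\Delta t^k_{exe}$ for every $i$; taking the maximum over all $i$ with $q_i^{\tau_k}\neq\bot$ yields $\mathit{Tick\_Latency}(Q^{\tau_k})\le\Delta t+\Delta t^k_{exe}$. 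Under the hypothesis $\Delta t+\Delta t^k_{exe}\le\lambda$ this immediately gives $\mathit{Tick\_Latency}(Q^{\tau_k})\le\lambda$, which is the first assertion.

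For the derived condition \eqref{eq:timely} I would merely rearrange: $\Delta t+\Delta t^k_{exe}\le\lambda$ is equivalent to $\lambda-\Delta t\ge\Delta t^k_{exe}$, which is precisely its rightmost inequality, so this part adds nothing new to the latency guarantee itself. The extra leftmost inequality $\Delta t>\lambda-\Delta t$ (equivalently $\Delta t>\lambda/2$) is not required for the latency bound; rather, by transitivity it forces $\Delta t>\Delta t^k_{exe}$, i.e.\ it guarantees that each batch finishes strictly before its own tick ends, which is the sustainability (no-backlog) property alluded to by the ``hopefully before the end of $\tau_2$'' remark. The only genuinely non-routine steps I anticipate are pinpointing the worst-case issue time $t=k\cdot\Delta t$ that saturates the queueing bound, and observing that all queries of a batch share a single processing window (hence a common $t_f$ and a common processing time $\Delta t^k_{exe}$); everything else is elementary arithmetic on the definitions.
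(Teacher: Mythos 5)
Your proof is correct and follows essentially the same route as the paper, which justifies this lemma informally via the preceding worst-case discussion: queueing time is at most $\Delta t$ (attained when the query is issued at the start of the tick), processing time is the common batch duration $\Delta t^k_{exe}$, and the leftmost inequality $\Delta t > \lambda - \Delta t$ is, as you note, the paper's separate assumption that a batch completes before the end of the next tick rather than part of the latency bound itself. Your write-up is merely more explicit than the paper's, which states the lemma without a dedicated proof environment.
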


Above, we have that $\Delta t > \lambda - \Delta t$ because the
processing of the queries accumulated during a tick is assumed to be
completed before the end of the next tick.

The computational capabilities of a system influences the choice of
the tick duration and the fulfillment of the latency requirement in
Lemma~\ref{lemma:timely}.

\begin{lemma}
\label{lemma:timely2}
Let $\beta$ be the system bandwidth, expressed in terms of the number of queries processed per time unit, 
and let $Q_{max}$ be the maximum number of
queries that can occur during a tick. Then a sufficient condition for
the system to meet the QoS latency requirement (based on threshold  $\lambda$) is:
%
\begin{equation}
\label{eq:timely2}
\beta \ge  \frac{Q_{max}}{\lambda - \Delta t}
\end{equation}
%
\end{lemma}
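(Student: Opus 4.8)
The plan is to reduce this claim to the sufficient criterion already established in Lemma~\ref{lemma:timely}, and then to convert the bound on the per-batch execution time $\Delta t^k_{exe}$ into a bound on the bandwidth $\beta$. Concretely, I would show that the hypothesis $\beta \ge Q_{max}/(\lambda - \Delta t)$ forces $\Delta t^k_{exe} \le \lambda - \Delta t$ for every tick $k$, which is exactly the hypothesis of Lemma~\ref{lemma:timely}.

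First I would recall that, by Lemma~\ref{lemma:timely}, the requirement $\mathit{Tick\_Latency}(Q^{\tau_k}) \le \lambda$ holds whenever $\Delta t + \Delta t^k_{exe} \le \lambda$, i.e.\ whenever $\Delta t^k_{exe} \le \lambda - \Delta t$. I would also note that $\lambda - \Delta t > 0$, as already observed immediately after Lemma~\ref{lemma:timely} (where $\Delta t > \lambda - \Delta t \ge \Delta t^k_{exe} \ge 0$). Hence it suffices to exhibit a condition on $\beta$ that guarantees $\Delta t^k_{exe} \le \lambda - \Delta t$.

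Next I would make explicit the definitional link between bandwidth and execution time. Since $\beta$ is the number of queries processed per unit time, processing the $|Q^{\tau_k}|$ queries of the $k$-th batch takes time $\Delta t^k_{exe} = |Q^{\tau_k}|/\beta$. Because at most $Q_{max}$ queries can occur during any tick, we have $|Q^{\tau_k}| \le Q_{max}$, and therefore $\Delta t^k_{exe} \le Q_{max}/\beta$. Chaining the bounds, requiring $Q_{max}/\beta \le \lambda - \Delta t$ forces $\Delta t^k_{exe} \le \lambda - \Delta t$, which by the first step is sufficient for the latency requirement. Since $\lambda - \Delta t > 0$ and $\beta > 0$, this last inequality is equivalent, after rearranging, to $\beta \ge Q_{max}/(\lambda - \Delta t)$, which is precisely~(\ref{eq:timely2}).

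There is no genuinely hard step here; the argument is a short chain of inequalities resting on Lemma~\ref{lemma:timely}. The only point that deserves care is justifying the linear throughput model $\Delta t^k_{exe} = |Q^{\tau_k}|/\beta$: this is really the operational meaning of $\beta$ as a sustained processing rate (queries per unit time), and the worst-case substitution $|Q^{\tau_k}| \le Q_{max}$ is what promotes a bound on a particular batch into a condition that holds uniformly for every tick $k$, thereby yielding the QoS guarantee.
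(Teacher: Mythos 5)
Your proof is correct and follows essentially the same route as the paper's: both reduce the claim to the sufficient criterion of Lemma~\ref{lemma:timely}, bound the per-batch execution time by $Q_{max}/\beta$ using the operational meaning of the bandwidth, and rearrange $Q_{max}/\beta \le \lambda - \Delta t$ into the stated condition. Your version merely makes explicit the intermediate step $\Delta t^k_{exe} = |Q^{\tau_k}|/\beta \le Q_{max}/\beta$, which the paper compresses into one sentence.
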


\begin{proof}
According to Equation~\ref{eq:timely}, in order to respect the timeliness,
we have to process all queries in a time $\Delta t^k_{exe}$ such that  $\Delta t^k_{exe} \leq 
\lambda - \Delta t < \Delta t$. Given the bandwidth $\beta$, 
the maximum execution time to process all queries of a tick is $\frac{Q_{max}}{\beta}$. Hence, 
$\frac{Q_{max}}{\beta} \leq \lambda - \Delta t$ must hold, from which the lemma follows.
\end{proof}

For a given latency requirement $\lambda$, if we increase the tick
duration $\Delta t$, this increases $Q_{max}$ and decreases $\lambda -
\Delta t$. So, if we increase $\Delta t$, in order to
satisfy Equation \ref{eq:timely2} we have to compute more queries in
less time, and thus it may happen that bandwidth $\beta$  becomes insufficient
to support the requested workload respecting the given latency threshold, i.e., $\beta <  \frac{Q_{max}}{\lambda - \Delta t}$. 


\subsection{Problem statement}
\label{sec:problem}


We can finally state the problem of computing repeated range queries over massive streams of moving objects observations, by discretizing the time in intervals (ticks), synchronizing query processing according to these ticks, and iteratively computing 
queries in batch mode.
 
Given \emph{(i)} a set of $n$ objects $O$, \emph{(ii)} a partitioning of the time
domain into ticks $[\tau_k]_{k \in \mathbb{N}}$ of duration $\Delta
t$, \emph{(iii)} a query latency requirement $\lambda$, and \emph{(iv)} a sequence
of pairs $[(P_{\tau_k},Q_{\tau_k})]_{k \in \mathbb{N}}$, where
$P_{\tau_k}$ is the up-to-date object positions at the end of
$\tau_k$, and $Q_{\tau_k}$ is the set of the last issued queries
during $\tau_k$, 
we have that the \textbf{iterated batch processing} of queries $Q_{\tau_k}$ over the corresponding 
$P_{\tau_k}$, $k \in
  \mathbb{N}$, \textbf{yields} $[R_{\tau_k}]_{k \in \mathbb{N}}$, i.e., a \textbf{sequence of pairs}, each composed of a query and the list of the corresponding results:
\[R_{\tau_k} = \{
(q_i^{\tau_k}, \ res(q_i^{\tau_k}, \ (k+1) \cdot \Delta t)) \ \ | \ \
q_i^{\tau_k} \neq \bot \ \wedge \ q_i^{\tau_k} \in
Q_{\tau_k} \}.\]

The \textbf{processing time} of each batch of queries $Q_{\tau_k}$ 
must be \textbf{upper
boun\-ded} as follows, to satisfy the query latency requirement
$\lambda$:
\[\mathit{processing\_time}(Q_{\tau_k}) \le
(\lambda - \Delta t) < \Delta t\]
%



\section{Graphics Processing Units}
\label{sec:gpuOverview}

GPUs are based on massively parallel computing architectures that
feature thousands of \textit{cores} grouped in \textit{streaming
multiprocessors}
\footnote{For the purposes of this work we will use the NVIDIA CUDA terminology to refer to the GPUs architectural features and peculiarities, as well as to describe software targeted to GPUs, since CUDA represents the dominant framework in the context of general purpose computing on GPUs.}
(hereinafter denoted by SMs for brevity) coupled with several gigabytes of high-bandwidth RAM. 
During the latest years these devices sparked a consistent interest due to their ability in performing
general purpose computations which, together with the amount of available cores
in their architectures, offer a potential for substantial
performance gains when compared to the performance of traditional
CPUs.

Due to the architecture of these devices, and depending on the problem considered, effectively exploiting their computational power is usually far from trivial. 
Specifically, each GPU processing core is slower than a typical CPU and has limitations on
its access to device memory, resulting in potential contentions unless
specific conditions are satisfied~\cite{hong2009analytical}. Moreover, GPU cores have to 
coordinate their actions, which is usually a complex issue considered their architectural
organization.

Proper algorithms, designed with the architectures
of the GPUs in mind, are needed in order to maximize the performances and
obtain significant gains with respect to CPU-based algorithms, a goal which
is not always possible to pursue \cite{lee2010debunking} depending on the
characteristics of the targeted problem.

In order to exploit effectively the computational power of a GPU, memory accesses 
should generally have high spatial locality in order to exploit GPUs demand-fetched caches as much as possible \cite{jia2012characterizing}. In addition, we have to ensure, whenever possible, that all the cores of an SM profit from memory block transfers, by forcing coalescing of parallel data transfers: this avoids serial memory accesses and consequent performance degradation due to 
a sub-optimal usage of memory hierarchies.
%

Moreover, GPUs feature several types of memories ranging from private 
thread registers and fast shared memory, which are both shared among the core groups
of each SM, to global memory, which has a lower throughput but it is of significant
size and represents the contact point with the CPU host. 
To achieve consistent performances 
a programmer has to be aware of this complex memory hierarchy by 
orchestrating and managing explicitly memory transfers
between different memories.

Finally, workload partitioning is paramount when designing GPU
algorithms since unbalances may create inactivity bubbles
across the streaming multiprocessors and seriously cripple the
performance.

A GPU consists of an array of $n_{SM}$ multithreaded SMs,
each with $n_{core}$ cores, yielding a total number of $n_{SM}\cdot
n_{core}$ cores. Each SM is able to run \emph{blocks} of
\emph{threads}, namely \emph{data-parallel tasks}, with the threads 
in a block running concurrently on the cores of an SM.  Since a block typically has many more threads
than the cores available in a single SM, only a subset of threads, called
\emph{warp}, can run in parallel at a given time instant. Each warp consists of $sz_{warp}$ \emph{synchronous, data parallel 
threads}, executed by an SM according to a SIMD paradigm \cite{hong2009analytical,patterson2012comparch}.
Due to this behavior, it is important to avoid branching inside the
same block of threads. It is worth remarking that at warp level no synchronization mechanisms are 
needed to guarantee data dependencies among threads, thanks to the underlying scheduling. 
%
Finally, a function designed to be executed on GPU is called \textit{kernel}.

\subsection{Main algorithmic design issues}
\label{sec:design issues}

Considering the specificities of the problem described in Section \ref{sec:problem}, five main design issues shall drive the design of the hybrid CPU/GPU pipeline in charge of the query processing. First, we have  to find a proper way to distribute the workload evenly among the GPU streaming multiprocessors, since unbalances typically create inactivity bubbles. Second, 
we need to avoid contention/serialization when accessing the GPU device memory, in order to favour spatial locality, thus 
properly taking advantage from the complex GPU memory hierarcies. Third, we should
compress the data the GPU has to send back to the CPU during the query processing, since in our problem 
the output is typically much larger than the input. Fourth, \emph{(iv)} expensive synchronization mechanisms among concurrent threads should be avoided, since these are typically very costly in terms of performance. Finally, for each pipeline task executed on the GPU  the unit of parallelization (either objects or queries, or partitions of queries) should be carefully chosen according to the task specificities.

\section{Spatial indexing and data structures}
\label{sec:computeIteratedJoins}


In this section we discuss the inspiring principles behind \QQ\ and \SQ, 
along with the architectural features of GPUs that impact on the spatial indices and data structures design. 


When processing repeated range queries, the same procedure is repeated for each tick. Thus, 
for the sake of readability, hereinafter we omit the subscript that indicates the tick, and denote by  
$P$, $Q$, and $R$, respectively, the up-to-date object positions, the non-obsolete queries, and the result set associated 
with a generic tick.

\subsection{Design considerations}

A brute-force approach for computing repeated range queries entails $O(|P|\cdot|Q|)$ containment checks per tick. By using spatial indices it is possible to prune out consistent amounts of pairs of queries and object locations that do not intersect.
However, when choosing or designing an appropriate index,
we have to consider its pruning power along with its maintenance costs.
For example, regular grid indices are generally reported to have low maintenance costs, and thus are suitable for update-intensive settings~\cite{sid11}.
Another aspect is the number of cores and the memory hierarchy
provided by the underlying computing platform. Given the same workload, different indices may
be the best option for different platforms.
With massively parallel platforms such as  GPUs, the regularity
characterizing spatial indices based on regular grids is attractive,
as it enables fast and efficient parallel index updating and querying. Even if 
tree-based spatial indices are able to distribute objects evenly 
among the index cells (the tree leaves), 
we have to avoid navigating the tree,  
since this may severely hinder efficiency due to poor data locality when accessing the memory.

In a previous work \cite{lettich2014gpu} we devised a simple uniform grid-based spatial indexing used to partition the workload and prune out useless containment tests. In that work we chose to determine the size of the index grid cells on the basis of the 
query size: the rationale was to reduce the amount of index cells to be considered when processing each query.   
However, solutions based on uniform grids generally cannot cope efficiently with skewed spatial distributions. 
To solve this issue, in this paper we propose the \QQ{} method, which relies on a tree-based recursive spatial indexing, induced by point-region quadtrees.
%
To ensure an unbiased comparison between \QQ\ and uniform grid-based spatial indices, 
we also introduce the \SQ\ method as a baseline.
\SQ\ relies on a simple uniform grid-based spatial index, without any a-priori constraint on the size of the grid cells.
Also,  \SQ\  integrates all the optimizations conceived for \QQ\ (such optimizations are detailed in Section \ref{sec:optimizations}).


\subsection{Overview of the methods}
\label{sec:methods overview}

In the following we give an overview of both \SQ\ and \QQ.

\SQ\ materializes at each tick a uniform grid over the minimum bounding rectangle enclosing the object positions. 
The only way \SQ\ can cope with data skewness is by changing the coarseness of the grid, targeting a coarseness tradeoff on the basis of the object densities in crowded areas and loosely populated ones.

\QQ\ still yields at each tick an index over the same bounding rectangle, but the index cells are of varied size as 
\QQ\ is able to dynamically tune their size according to local object densities. 
To this end \QQ\ utilizes a point-region quadtree, which entails a space partitioning that ensures a pretty balanced distributions of objects among the index cells even in presence of skewed data. 
Even if tree data structures are, in principle, difficult to manage on GPUs, 
the direct relationship between the quadtrees structural properties and the Morton codes~\cite[Ch.\ 2]{har2011geometric}\cite{raman2008converting}
 open up to the possibility of implementing efficient massively parallel quadtree construction and lookup algorithms on GPUs\cite{lauterbach2009fast}.

Regardless of the index adopted, 
queries can be processed concurrently according to a \emph{per-query parallelization}.
%
%
More specifically, since both indices induce a partition of the space, where each disjoint space tile corresponds to a cell 
of either \SQ\ or \QQ, we virtually split queries according to the space partition, thus producing a \emph{subquery} for each index cell a query intersects.
Indeed, each subquery yields an \emph{independent subtask}, which we process in parallel by only accessing the 
objects falling in the associated index cell. Note that this approach 
decreases the overall amount of containment checks, although the splitting yields more subtasks to process as 
the same query is processed several times, once for each relevant index cell. 

It is worth pointing out that we can have subqueries whose areas entirely cover small index cells.
This allows us to strongly optimize the computation: first, we can compress the output, since all the objects of 
these cells falls into the subquery areas; second, for the same reason we can avoid processing 
covering subqueries, thus saving computation time (see the \textit{covering} queries optimization, Section \ref{sec:optimizations}).


Both \SQ\ and \QQ\ use an ad-hoc lock-free data structure based on bitmaps~\cite{lettich2014gpu}, to manage the result sets while they are produced on GPU. This design choice entails a further post-processing step needed to enumerate the final results contained in this data structure.
To prove the merit of this choice we consider a more basic baseline, i.e., 
a variant of \SQ, namely the \sqb\ (\SQB) method, which uses atomic operations to ensure the consistency of the result sets content. 

\subsection{Space partitioning and indexing}
\label{sec:partitioningOverview}

In the context of parallel query processing, there are two main reasons for partitioning and indexing the data according to a given space partitioning approach: the first one, also common to sequential query processing, is to avoid redundant computations and access to irrelevant data, while ensuring fast access to relevant information. The second reason, which is triggered by the ability to process data in parallel, 
is to ensure independent computations, avoid redundant work and balance the workload among the processing unit cores.

In the following we introduce the two space partitioning methods onto which \SQ\ and \QQ\ rely. 
Both methods aim to adaptively partition the Minimum Bounding Rectangle (\textsf{MBR}) containing all the object positions during any tick. We denote this \textsf{MBR} by ${\cal G}=(x^{\cal G}_a,y^{\cal G}_a,x^{\cal G}_b,y^{\cal G}_b)$, where 
$(x^{\cal G}_a,y^{\cal G}_a)$ and $(x^{\cal G}_b,y^{\cal G}_b)$ represent the lower-left and the upper-right corners of the MBR.
Aside from the specific ways through which the methods define the geometries and enumerate grid cells, both of them assign queries and object locations according to the same mapping functions (see Section \ref{sec:mapping functions}).

\subsubsection{Uniform grid-based partitioning}
\label{sec:gridOverview}
\SQ\ partitions the space by superimposing a uniform grid ${\cal C}$, whose cells are of equal size, 
over ${\cal G}$.

\begin{definition}[\textsf{MBR partitioning into a uniform grid}]
  \label{def:MBRPartitioning}
\textit{${\cal G}$ is partitioned according to a uniform grid ${\cal C}$ of $N \cdot
  M$ cells of width $W$ and height $H$ such that the cell $c_{ij}$ covers
  the following region:
$$(x^{\cal G}_a + i \cdot W,\ \ \ x^{\cal G}_a + (i+1) \cdot W,\ \ \
y^{\cal G}_a + j \cdot H,\ \ \ y^{\cal G}_a + (j+1) \cdot H).$$
\noindent 
To ensure that the grid covers ${\cal G}$, constants $N$, $M$, $W$,
and $H$ are chosen so that $x^{\cal G}_a + N \cdot W\ \geq\ x^{\cal
  G}_b$ and $y^{\cal G}_a + M \cdot H\ \geq\ y^{\cal G}_b$ hold.
We associate with each cell $c \in \mathcal{C}$ an integer ID, which enforces a total order among the index cells, by preserving spatial locality.}
\end{definition}

\subsubsection{Quadtree-based partitioning}
\label{sec:quadPartitioningOverview}
In the \emph{quadtree based partitioning} case, ${\cal G}$ is covered by a quadtree-induced grid ${\cal C}$, determined on the basis of the local densities of moving objects. In this case ${\cal G}$ is therefore partitioned into a set of cells 
corresponding to the quadtree leaves.

\begin{definition}[\textsf{MBR partitioning into a quadtree-induced regular grid}]
\label{def:QuadPartitioning}
\textit{${\cal G}$ is partitioned into a set of variably sized
cells belonging to grid ${\cal C}$, induced by a point-region quadtree.
Given a constant $th_{quad}$, denoting the maximum amount of objects allowed inside a single quadrant/cell of the final grid, 
we have that each cell of ${\cal C}$
corresponds to a quadtree leaf, and contains an amount of object not greater than $th_{quad}$.
We associate with each cell $c \in \mathcal{C}$  an integer ID, which enforces a total order among the index cells, by preserving spatial locality.}

\end{definition}



\subsubsection{Mapping of moving objects and queries to space partitions}
\label{sec:mapping functions}

Given an index $\cal C$ derived by \QQ\ or \SQ, we assign objects and queries to the index cells. Since the area of any query can intersect several cells of $\cal C$, this entails a partition of the area.
We call this operation \textbf{query splitting}, which potentially yields a set of \textit{subqueries} for each query.
Finally, each subquery can be univocally assigned to a single index cell.

\begin{definition}[\textsf{Mapping functions for object locations and subqueries}]
\label{def:objPosMap}
\textit{Given the set of cells of a grid ${\cal C}$, we have two \textbf{mapping functions} $f: P \rightarrow {\cal C}$ and $g: Q \rightarrow 2^{\cal C}$ \textbf{map}. Function $f$ maps 
each object location $p \in P$ to the cell $f(p)$ that contains $p$.
Function $g$ maps each query $q \in Q$ to a set of cells $g(q)$, whose intersection with $q$ is not empty. We use the term \textbf{subqueries} to denote the restrictions of a query $q$ to each of these cells. Moreover, we call the operation performed by $g$ \textbf{query splitting}. 
Finally, each subquery is classified as \textbf{intersecting} or \textbf{covering}, according to the fact that it partially/entirely covers the associated cell.}
\end{definition}

\begin{figure}
\centering
    \includegraphics[width=.35\textwidth]{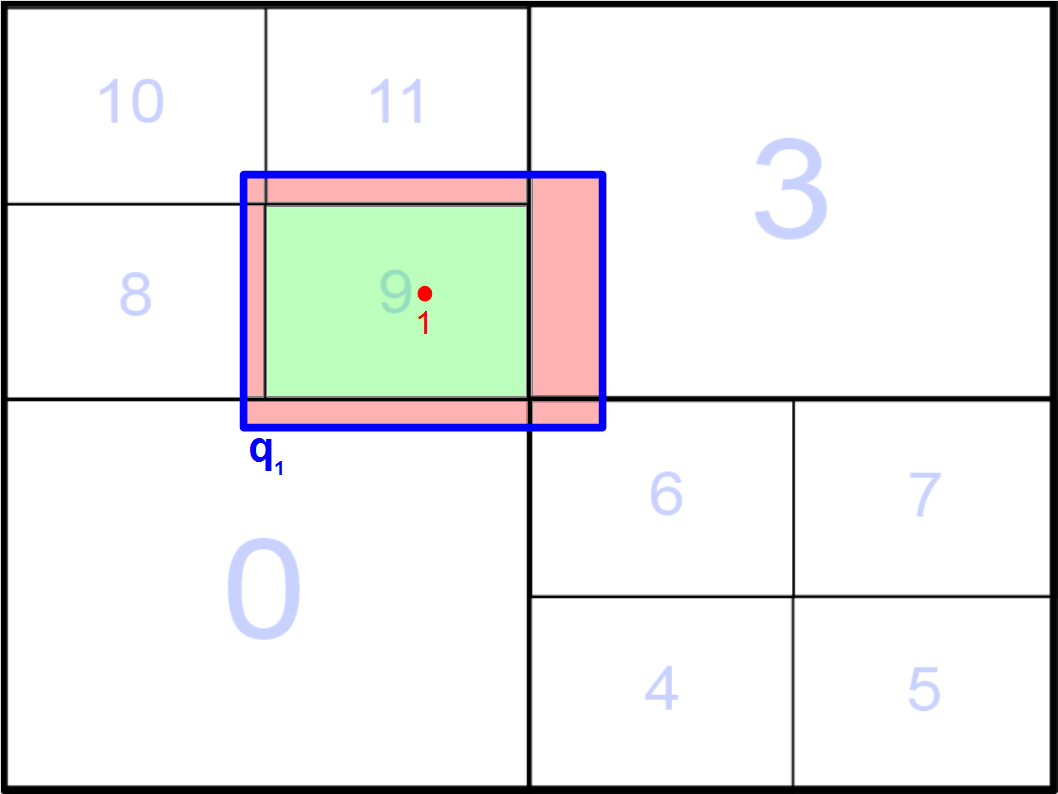}
    \caption{Simple mapping example over a quadtree-induced grid.}
    \label{fig:mappingQueryObjects}
\end{figure}

A simple example about how $f$ and $g$ operate is given in Figure \ref{fig:mappingQueryObjects}: \textit{f} maps object 1 to the cell having ID 9, while \textit{g} splits query $q_1$ (issued by object 1) over 7 different cells. Among these, six are \textit{intersecting} ones (highlighted in pink, namely the subqueries intersecting cells with IDs 0, 3, 6, 8, 10, and 11) while one is a \textit{covering} subquery (highlighted in green, covering the cell with ID 9).
%



\subsection{Data structures}
\label{sec:dataStructuresDesign}

As it will be pointed out in in Section \ref{sec:commonParts}, both \QQ\ and \SQ\ rely on a hybrid CPU/GPU processing pipeline, a pattern quite common in the context of General Purpose Computing on GPUs \cite{sengupta07,merrill11}.
Each stage of the pipeline performs a set of transformations on the data in order to produce a final output. 
To this end, 
%
the design of data structures should (i) allow data to be concurrently accessed  with minimal use of atomic operations or barriers,
thus avoiding locking related penalties; (ii) permit the use of coalesced memory accesses, in order to maximize the memory throughput; (iii) exploit spatial locality, whenever possible, in order to maximize the benefits deriving from coalescing and caching.
In the following subsections we introduce the relevant data structures used by our approach.

\subsubsection{Moving objects and queries data structures and their layout}
\label{sec:dataRep}

\begin{wrapfigure}{r}{0.35\textwidth}
\vspace{-1em}
\begin{center}
    \includegraphics[width=.35\columnwidth]{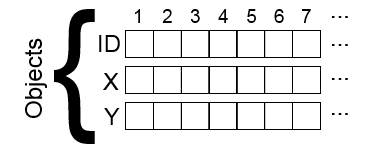}
    \caption{Example of a structure of vectors used on a set of objects described by 3-tuples.}
    \label{fig:structofvec}
\end{center}
\vspace{-1em}
\end{wrapfigure}

Given a set of $n$-tuples representing a class of entities (in our context an object location or a query), the tuples elements are logically arranged by means of a 
\textit{structure of vectors} (also known as \emph{structure of streams} or \emph{structure of arrays}) layout \cite[Ch.33]{Pharr2005}. This layout groups a set of $n$ vectors, each one representing a single element of the tuples, and aligns the vectors elements with respect to the entities they are associated with. An example of such arrangement is given in Figure \ref{fig:structofvec}.

The main benefits of this representation derive from the observation that it does not require complex pointer
arithmetic, and it naturally makes possible to exploit coalesced memory accesses.
Moreover, it is a representation commonly used in well established GPU algorithms, thus allowing an efficient interplay (and code reuse) between the operations making up the processing pipeline.

%
%

%

Consider that we aim at generating \emph{independent tasks}, each one associated with a \emph{subquery} and a specific \emph{active cell}, 
i.e., a cell with at least one object and one subquery, where 
each task processes a single subquery over the objects of the associated cells (see Sections \ref{sec:filterOverview} and \ref{sec:decodeOverview} for more details). 
In light of this, it is convenient to properly arrange the structures of vectors
associated with objects and subqueries (each set has its own structure)
in order to exploit data spatial locality and boost  memory throughput. To this end, we have to arrange entities falling inside the same grid cell in contiguous memory locations (memory blocks). 
Therefore, first object locations and subqueries are sorted by the IDs of the associated index 
cells. 
As a side-note, we observe that the same originating query can be stored in several memory blocks, since function $g$ 
potentially yields multiple intersecting cells for each query. 
Second, block boundaries are stored in a table, by distinguishing between the sub-blocks storing object locations and sub-blocks storing subqueries. This allow us to directly access the data belonging to any cell. The reader may refer to Section \ref{par:indexDetOverviewQuad} 
for more details about the sorting operations performed in order to achieve such arrangements.

%

\subsubsection{Intermediate bitmap representations of query result set}
\label{sec:bitmapRep}

One of the main issues is related to 
the efficient collection of possibly huge sets of query results.
According to the problem statement given in Section \ref{sec:problem}, 
the result of a single tick is described in terms of a set of pairs, each one consisting of an identifier associated with the object issuing a query and a set of identifiers related to the objects falling inside the query result set. 
Since query results are produced concurrently, contentions when writing them out could seriously cripple 
massive parallelism.

To avoid this issue we exploit and improve the two-phase approach we introduced in a previous work \cite{lettich2014gpu}, which relies on two intermediate data structures based on a \textit{bitmap} layout in order
to eliminate the need of threads synchronizing mechanisms while maximizing the overall memory throughput and 
minimize the amount of space used to store intermediate results on GPU. 
Since these data structures are strongly tied to the design of the algorithms in charge of the aforementioned operations, we postpone their description to Sections \ref{sec:filterOverview} and \ref{sec:decodeOverview}.

\section{Query processing pipeline}
\label{sec:commonParts}

The core computation to process each set of range queries can be surely ascribed to the containment tests between objects locations and query areas. Considering the potential huge amount of containment tests and results each tick can yield, this apparently simple and straightforward operation is very expensive.
In order to improve its efficiency we embed this computation in a pipeline of concatenated operations. The various stages of the pipeline prepare the spatial index for improving the efficiency of the containment tests, compute the containment test outcomes in an intermediate format for efficiency reasons, and post-processes these results to produce the final query results.

In the following subsections we introduce the high-level pipeline, common to all methods.
We also discuss the main design differences between \SQ, \SQB\ and \QQ\ in the implementation of each pipeline phase.

\subsection{Pipeline description}
\label{sec:pipelineOverview}
The data entering the processing pipeline at the end of each tick are first processed to select the index parameters and build an empty index (phase 1, \textit{index creation}). Then (phase 2, \textit{moving object and query indexing}), objects and subqueries are mapped to index cells, and finally are sorted so that those contained in the same cell are stored in contiguous memory locations. 
The subsequent phase computes the containment tests between range queries and object locations (phase 3, \textit{filtering with bitmap encoding}) producing an intermediate bit-encoded output which is structured to avoid contentions in memory access (issues \emph{(ii)} and \emph{(iv)} in Section \ref{sec:design issues}). These intermediate results need a final post-processing phase to extract the final results (phase 4, \textit{bitmap decoding}).  
Each phase takes advantage of a tight cooperation between the GPU and the CPU. 

One of the key features of \QQ\ and \SQ\ is the ability to split the computation of each query among the space partitioning elements (cells) it intersects to reduce the total amount of containment tests. This entails the creation of a new set of \textit{subqueries} originating from the query set $Q$.
The distinction between \SQ{} (\SQB{}) and \QQ{} is related to the way they partition the space, that is, how a grid is materialized over the space and how objects locations and subqueries are mapped to grid cells. These aspects involve just the phases 1 and 2 of the pipeline, since the remaining ones directly use the cell identifiers associated with object locations and subqueries to determine which object locations and subqueries are relevant for a specific operation. 

For this reason, in the following we describe phases 1 and 2 separately for \SQ{} (\SQB{}) and \QQ{}, while the remaining ones can be described regardless of the involved spatial index.

\subsection{Index creation and indexing in \SQ{} and \SQB{}.}
\label{par:indexDetOverviewSQ}

%
The performances of this method are significantly affected by the cells size used for $\cal C$. Choosing a suitable value is challenging since it depends on several factors, from the spatial distribution of data, to the opportunity of avoiding part of the computations thanks to optimizations that are triggered locally by grid and query based conditions (e.g., by exploiting the \textit{covering} subqueries optimization described in Section \ref{sec:optimizations}). 

In a previous work \cite{lettich2014gpu} we were able to optimally determine the cell size of a 
uniform grid index, by assuming unrealistic uniform spatial distributions of objects.
%
%
%
%
Since the optimal granularity cannot be decided for any kind of dataset, but we need to still use uniform grid
indexes as baselines for \QQ, we exploit an oracle to choose the grid coarseness for both \SQ{} and \SQB{}. In practice, we determine the optimal grid coarseness parameter, for each tick and any kind of dataset, by performing parameter sweeping, and finally selecting the parameters that are the most favorable to \SQ{} and \SQB{} in each comparison. 
Accordingly, the goal of the index creation phase for the baselines \SQ{} and \SQB{} is simply the ad-hoc choice of the best grid granularity, in order to maximize the performance of the subsequent phases.

\subparagraph*{Index creation (\SQ{}/\SQB{}).}
Since we already know the optimal grid cell size, the goal of the \emph{index creation} phase in \SQ{}/\SQB{} is to determine the minimum rectangle ${\cal G}$ that bounds all the objects (\emph{MBR}). The computation of the MBR is based on a GPU parallel reduction operation over the set of object positions and queries yielding the minimum and maximum coordinates.

Once ${\cal G}$ is set up, we use the cell size determined by the oracle to materialize an optimal uniform grid $\cal C$ over ${\cal G}$, so that objects and queries can be indexed accordingly.

Each cell of ${\cal C}$ is naturally associated with a pair $(i,j)$, identifying the row and the columns of each cell.
However, we adopt a transformation of $(i,j)$ into a uni-dimensional identifier $CellID$, derived from $(i,j)$ by interleaving the binary representations of the two coordinates, thus obtaining the Morton code $z(i,j)$\footnote{To this end, we adopt an optimized bitwise algorithm.}.

\subparagraph*{Moving objects and queries indexing (\SQ{}/\SQB{}).}
%

Given an index $\cal C$,  function $f$ (Definition~\ref{def:objPosMap}) maps a generic object location $p \in P$ to a cell $c \in \cal C$. In \SQ{}/\SQB{} the function consists of a simple algebraic expression that determines grid coordinates (which indeed correspond to a unidimensional Morton code identifying the cell) 
from object locations.
This  is implemented on the GPU by applying  function $f$ in parallel to all elements of $P$, 
thus obtaining a vector whose elements represent cell identifiers corresponding to each object location.

Still on the basis of index $\cal C$,  function $g$ (Definition~\ref{def:objPosMap}) maps a generic range query $q \in Q$ to a set of cells in $\cal C$. The corners of each query $q$ are mapped to grid coordinates, then a nested loop is used to enumerate the identifiers of cells intersected by the query.  
Since containment tests are superfluous for cells completely covered by $q$, the corresponding subqueries are marked as \textit{covering} to enable the optimizations described in Sec.\ref{sec:optimizations}.

In our GPU implementation of $g$, each query \textit{q} is processed by a GPU thread that produces a set of triples \textit{(queryID, cellID, coveringFlag)}\footnote{In practical terms, the \emph{coveringFlag} can be properly embedded inside the integer representing \emph{cellID}.}
, each one representing an intersecting (covering) subquery. To avoid output write contentions without resorting to blocks and synchronization, a two-pass approach is adopted: the first 
dry-run pass determines the amount of triples per query, while the second pass writes out the triples to the correct positions in the output vector by exploiting the information created during the first pass. During the second pass, each subquery is also classified according to the intersecting/covering dichotomy.

The overall complexity of this phase is equal to $O(|P| + 2|Q| + |Q| + |\hat{Q}|) = O(|P| + 3|Q| + |\hat{Q}|)$: $|P|$ is due to the object locations indexing, $2|Q|$ is due to the two-pass approach, $|Q|$ is the cost to pay for the exclusive prefix sum performed between the first and the second pass needed to determine the subqueries locations in memory; finally, $|\hat{Q}|$ is related to the subqueries written out during the second pass.\\

\subparagraph*{Sorting (\SQ{}/\SQB{}).}

\begin{wrapfigure}{!hr}{0.4\columnwidth}
\vspace{-1.5em}
\begin{center}
    \includegraphics[width=0.4\columnwidth]{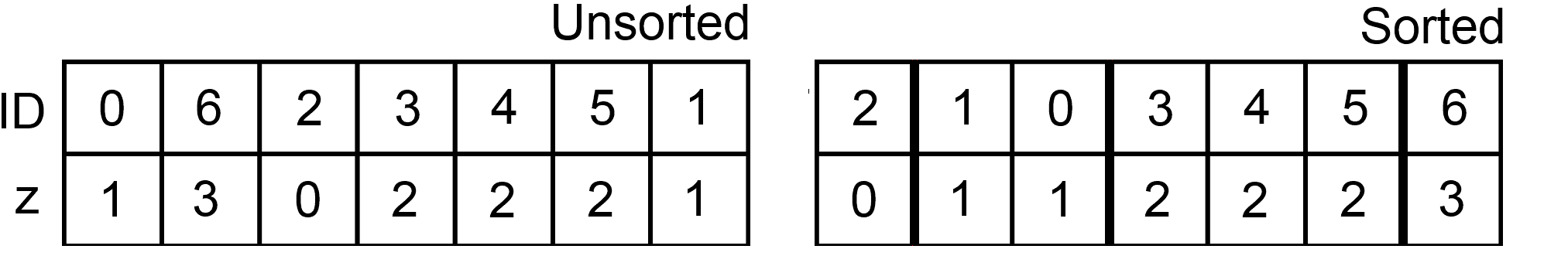}
    \caption{A simple example of a GPU-based sorting, based on the structure of vectors representation, of 8 entities according to their Morton codes. The discontinuities among the codes (thicker lines) determine the set of entities belonging to each cell.}
    \label{fig:sorting}
\end{center}
\vspace{-1em}
\end{wrapfigure}

Once object locations and subqueries are mapped to cells of $\cal C$, we sort them by the Morton codes of the cells, as illustrated in Figure \ref{fig:sorting}. The goal is to store tuples mapped to the same index cell in contiguous memory locations, thus enhancing the \emph{spatial locality} of each parallel block of threads working on subqueries and objects of a given active cell (i.e., a cell having at least one object) during the subsequent \emph{filtering} and \emph{decoding} phases (Sections \ref{sec:filterOverview} and \ref{sec:decodeOverview} respectively).

Indeed, when sorting the subqueries we distinguish between covering and intersecting ones, in order to support the optimizations discussed in Section \ref{sec:optimizations}. In practice, we handle the covering queries in a different way, since the GPU does not need to process them:  
after the sorting operation, all intersecting subqueries, which need to be processed, are placed at the beginning of the subqueries
structure of vectors.
%

Since the GPU sorting algorithm used throughout the pipeline will be the Radix Sort \cite{merrill11}, the complexity of the sorting step is $O(b \cdot (|P| + |\hat{Q}|)) \approx O(|P| + |\hat{Q}|)$, where $\hat{Q}$ denotes the subqueries set.

\subsection{Index Creation and Indexing in \QQ}
\label{par:indexDetOverviewQuad}

The key idea behind \QQ\ is to use a point-region (PR) quadtree as the backbone of its spatial index, exploiting the PR-quadtrees intrinsic ability to partition the space in differently sized parcels containing similar amounts of points. 

\subparagraph*{Index creation (\QQ{}).}

The goal of this phase is to create a space partitioning $\cal C$ over $\cal G$, according to Definition \ref{def:QuadPartitioning}, where each cell of $\cal C$ is a leaf PR-quadtree quadrant that does not contain more than $th_{quad}$ objects. 
This property gives an upper bound to the containment tests computed by each GPU thread in charge of processing a query over all the objects falling in an index cell.

\begin{algorithm2e}
\footnotesize
\DontPrintSemicolon

\Begin
{
$V_P \leftarrow GPUcalculateMortonHash(V_P,I_A,l_{max})$\; \nllabel{line:startSetup}\nllabel{line:calcZIndex}
$V_P \leftarrow GPUradixSort(V_P)$\; \nllabel{line:reorder}

\BlankLine \BlankLine

$I_A \leftarrow \{[0, |P|-1]\}$\; \nllabel{line:beginSetup}
${\cal C} \leftarrow \emptyset$\; 
$l \leftarrow 1$\; \nllabel{line:endSetup}

\Repeat{$(I_A \neq \emptyset) \land (l \leq l_{max})$}
{\nllabel{line:iterQuadBuildStart}
	$I \leftarrow GPUdetectQuadrants(V_P, I_A, l, l_{max})$\; \nllabel{line:detectQuadrants}
	$(I_A, {\cal C}) \leftarrow CPUcheckQuadrants(th_{quad}, I, l, l_{max}, {\cal C})$\; \nllabel{line:checkQuadrants}
	$l_{deep} \leftarrow l$\;
	$l \leftarrow l+1$\;
} \nllabel{line:iterQuadBuildEnd}

\BlankLine \BlankLine

$z_{map} \leftarrow GPUbuildZMap({\cal C},l_{deep})$\; \nllabel{line:buildZMap}
}

\caption{GPU-based PR-quadtree construction}
\label{lst:quadBuild}
\end{algorithm2e}
 
We observe that even if a space partitioning is determined according to local object densities for a particular tick, it can be often reused for consecutive ticks when the spatial distribution does not change significantly. 

Therefore we compute the spatial quadtree partitioning during the first tick, and repeat this partitioning if the objects spatial distribution change significantly, since this event might potentially hinder the performances by increasing the overall amount of containment tests to be computed per query.

The construction of the quadtree proceeds top-down in an iterative manner, starting from the $4$ equally sized quadrants that partition ${\cal G}$, and then splitting iteratively each quadrant containing more than $th_{quad}$ objects. The whole procedure is repeated level-wise, increasing the quadtree depth and splitting overpopulated quadrants if needed.
 
Algorithm \ref{lst:quadBuild} describes this iterative process. 
%
During the initial setup (lines \ref{line:startSetup} -- \ref{line:endSetup}), the function \textit{GPUcalculateMortonHash} (line \ref{line:calcZIndex}) computes the Morton codes \textit{z} of all the objects stored in the structure of vectors $V_p$ at the maximum quadtree level $l_{max}$. 
In practice, in this phase we consider a regular grid having $2^{l_{max}} \times 2^{l_{max}}$ cells. Morton codes
\textit{z} are computed in the same way as done in the \SQ{}/\SQB{} case, 
starting from the index $(i,j)$ of the regular grid where each object falls into.
Subsequently, $V_P$ is reordered by \textit{GPUradixSort} (line \ref{line:reorder}) according to the Morton codes \textit{z}. 
Note that, given the $z$-code at the maximum quadtree level $l_{max}$, we can 
determine the quadrant index $z'$ of any object at any level $l \leq l_{max}$ by simply
truncating the binary representation of the Morton code $z$ previously computed, which is
equivalent to calculating $z'=\frac{z}{4^{l_{max}-l}}$. 
It is worth considering that the object order 
obtained by this  sorting by $z$ is invariant for any level $l \leq l_{max}$ of the quadtree.
In other words, thanks to this sorting and the structural properties of quadtrees, objects contained in any quadtree leaf are memorized contiguously in $V_P$.

Subsequently, the algorithm initializes several variables: the set $\cal C$ of final leaves is initialized to $\emptyset$, the set $I_A$, containing the intervals of the quadrants to split, is initialized by inserting the interval related to the tree root, and, finally, the level $l$ from which the iterative construction starts is set to $l=1$ (lines \ref{line:beginSetup} -- \ref{line:endSetup}).

Then, the algorithm iteratively builds (line \ref{line:iterQuadBuildStart}) the quadtree level by level. \textit{GPUdetectQuadrants} (line \ref{line:detectQuadrants}) identifies the starting and ending positions (i.e., the intervals) of the $l$-level quadtree quadrants related to the $(l-1)$-level quadrants added to $I_A$ for splitting, and store such intervals in \emph{I}. Then, \textit{CPUcheckQuadrants} (line \ref{line:checkQuadrants}) determines which quadrants need further splitting at next level (their intervals are added to $I_A$) and which quadrants represent final leaves (their identifiers are added to $\cal C$). 
The process ends whenever no more quadrants need to be split (i.e., $I_A$ is empty) or the maximum possible quadtree level $l_{max}$ is reached (line \ref{line:iterQuadBuildEnd}). In the latter case, all the quadrants found at level $l_{max}$ are added to $\cal C$. We  postpone the description of \textit{GPUbuildZMap} (line \ref{line:buildZMap}) to a subsequent paragraph (see \textbf{\emph{Indexing moving objects and queries (\QQ)}}).

Functions \textit{GPUcalculateMortonHash}, \textit{GPUradixSort} and \textit{GPUdetectQuadrants} are entirely implemented on GPU. On the other hand, \textit{CPUcheckQuadrants} is executed on the CPU side, since the amount of quadtree quadrants created at each level are typically orders of magnitude lower than $|P|$.\\ 

\noindent \textit{{Simple running example}.}
Let us consider the example reported in Figure \ref{fig:quadConstruction}, where $l_{max} = 2$ and $th_{quad} = 1$.
During the \textit{Initialization} step, each object identified by an ID is associated with the Morton code \emph{z} of the cell $c \in {\cal C}_{l_{max}}$, where ${\cal C}_{l_{max}}$ denotes a uniform grid, associated with the deepest possible quadtree level $l_{max}$ (see the ``Initialization'' grid on the left of the figure).
The pairs $(ID, z)$  are stored in a table (see the ``Unsorted'' table). Subsequently, pairs are sorted 
according to the second elements, i.e., the Morton codes (see the ``Sorted'' table). Next, the algorithm proceeds building the quadtree, starting from \textit{Level 1}, creating iteratively new levels, until at least one quadrant requires to be split ($I_A \neq \emptyset$) or $l_{max}$ is reached.

At each level, the algorithm associates each object with a quadtree quadrant belonging to the currently considered level by computing the corresponding quadrant indices $z'$. In this regard see the ``Iterations'' table in Figure \ref{fig:quadConstruction}, where each row (after the second one) corresponds to an algorithm iteration working on a distinct level of the quadtree. On the right side of the same figure, we can also observe how the quadtree grows up at each iteration/level. 
More specifically, at each iteration the algorithm determines which quadrants need to be split. Objects falling in   quadrants to split are re-assigned by computing the new quadrant indices $z'$ (highlighted in red in the ``Iterations'' table). Quadrants that have not to be split are added to the set of $\cal C$ cells. Objects belonging to the latter kind of quadrants (highlighted in green in the ``Iterations'' table) can be ignored during the successive iterations (the ignored cells are highlighted in grey in the ``Iterations'' table), since these already belong to quadtree leaves. 

\begin{figure}[!h]
    \centering
	\includegraphics[width=0.95\columnwidth]{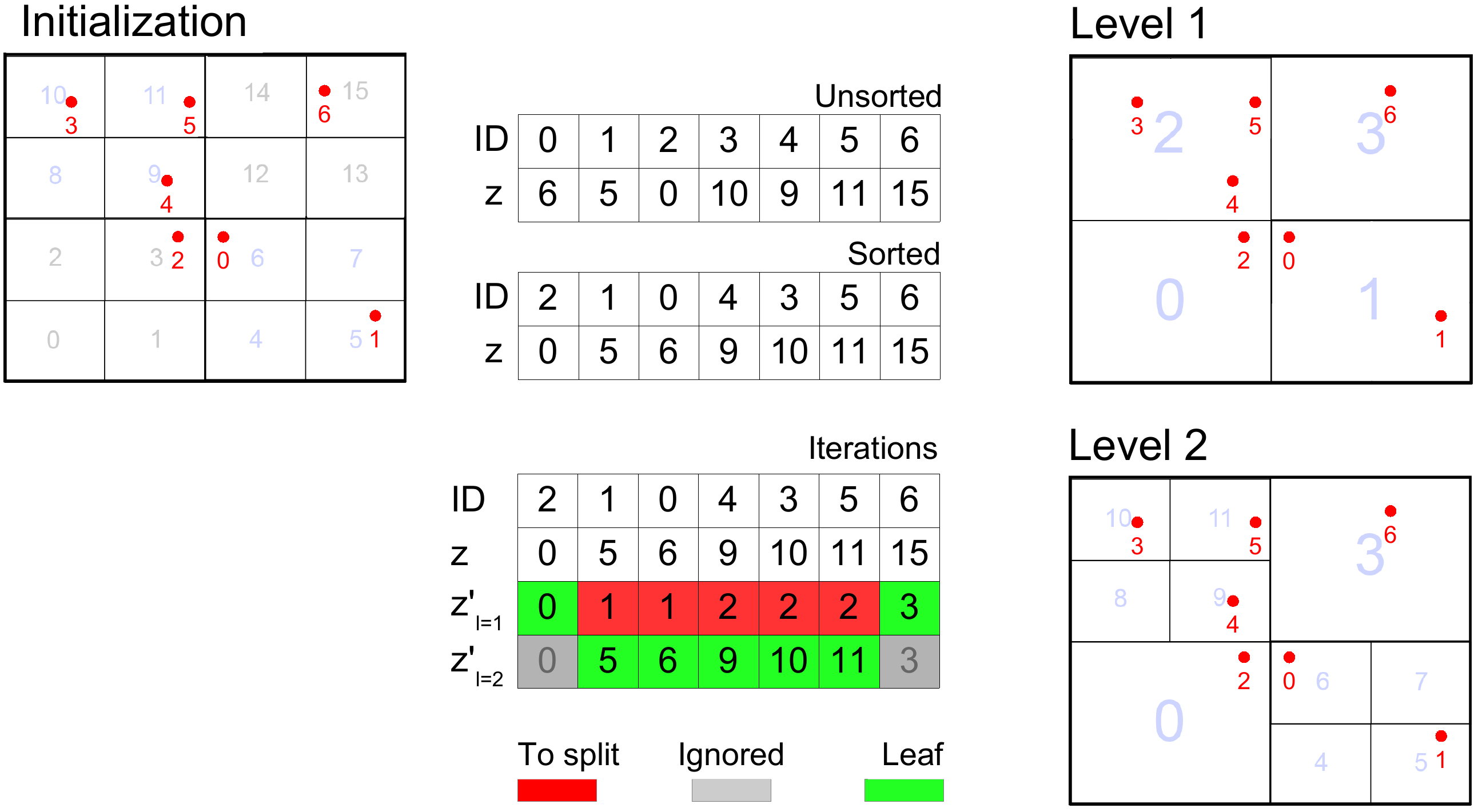}
    \caption{Example of quadtree construction with 7 objects, $th_{quad} = 1$ and $l_{max} = 2$.}
    \label{fig:quadConstruction}
\end{figure}

In the running example the algorithm stops at \textbf{Level 2}, since all the quadrants created at this level contain an amount of objects not greater than $th_{quad}$. Note that also the maximum quadtree level $l_{max} = 2$ 
is reached for 8 leaves of 10.\\

\noindent \textit{{Complexity}.}
The computation of a single Morton code has a fixed cost determined by the number of bits used for coordinate representation; therefore, \textit{GPUcalculateMortonHash} complexity is equivalent to $O(|P|)$. \textit{GPUradixSort} complexity is $O(b \cdot |P|) \simeq O(|P|)$, where $b$ represents the base value during sorting ($b \ll |P|$).
\textit{GPUdetectQuadrants} worst-case complexity is $O(l_{max} \cdot |P| + 2\sum_{l=0}^{l_{max}}4^l)$, where the first term is due to the amount of objects scanned in $V_P$ at each iteration, while $2 \cdot \sum_{l=0}^{l_{max}}4^l = 2 \cdot \frac{1-4^{l_{max}-1}}{1-4}$ represents the maximum amount of starting and ending indices - which has to be written out in memory - related to the $4^l$ quadtree quadrants at any level \textit{l}. We observe that the amount of quadrants created at each level is orders of magnitude lower than $|P|$, hence the related computational overhead is negligible. As a consequence, the average complexity can be safely approximated to $O(l_{max} \cdot |P| + 2\sum_{l=0}^{l_{max}}4^l) \simeq O(l_{max} \cdot |P|)$.
\textit{CPUcheckQuadrants} has a worst-case complexity equal to $\sum_{l=0}^{l_{max}}4^l = \frac{1-4^{l_{max}-1}}{1-4}$. Again, its complexity is practically negligible according to the above considerations.

Summing up, the complexity of the iterative process is dictated by the number of objects processed and the depth $l_{deep} \leq l_{max}$ reached in the quadtree construction, yielding $O(l_{deep} \cdot |P|)$. Since $l_{deep}$ is usually a low constant, the overall complexity can be approximated to $O(|P|)$.

\subparagraph*{Building a lookup table to map coordinates to cells (\QQ{}).}

%
The usual approach for finding the quadtree leaf that corresponds to the coordinates of an object 
would consist in traversing the tree from the root, recursively choosing the relevant node until a leaf is reached. Unfortunately this approach entails repeated irregular memory accesses and a non predictable number of operations for each leaf search. The second issue, in particular, would cause branch divergence and potential sub-optimal occupancy of GPU cores.

For this reason we use a different approach, characterized by a slightly larger memory footprint. Let us suppose that 
the deepest level created in a quadtree ${\cal C}$ is $l_{deep}$, $l_{deep} \leq l_{max}$. Thus we virtually 
divide the space covered by ${\cal C}$ according to a uniform squared grid composed of $2^{\cdot l_{deep}} \times 
2^{\cdot l_{deep}}$ cells, and denote it by ${\cal C}^{l_{deep}}$. In other words, we cover ${\cal C}$ such that each 
quadtree leaf created at level $l_{deep}$ corresponds exactly to a single cell in ${\cal C}^{l_{deep}}$. Thanks to the PR-quadtree properties, any quadtree leaf at a level $l$, $l \leq l_{deep}$, corresponds to the union of $4^{(l_{deep} - l)}$ contiguous cells of ${\cal C}^{l_{deep}}$. Therefore, a mapping between ${\cal C}^{l_{deep}}$ cells and ${\cal C}$ cells can be easily established by means of a lookup table $z_{map}$, which maps each ${\cal C}^{l_{deep}}$ cell, identified by a pair $(i,j)$, to the ${\cal C}$ cell containing it. 

The idea behind this approach is exemplified in Figure \ref{fig:zmap}. The example is derived from the one in Figure \ref{fig:quadConstruction}, and therefore $l_{deep} = 2$. Each $\cal C$ cell (quadtree leaf) is identified by a pair $(l,z)$ (an integer is indeed sufficient to store each pair), where \textit{l} is the leaf level and \textit{z} its Morton code at level \textit{l}, whereas each cell in ${\cal C}^{l_{deep}}$  is associated with the pair $(l,z)$ identifying  the cell of $\cal C$ containing it. We can observe that 4 distinct ${\cal C}^{l_{deep}}$ cells are mapped to the same  $\cal C$ cell $(1,0)$, and other 4 distinct ${\cal C}^{l_{deep}}$ cells are mapped to the same  $\cal C$ cell $(1,3)$.

Therefore, given any pair of coordinates, it is possible to find the associated $\cal C$ cell by first computing the associated ${\cal C}^{l_{deep}}$ cell index, namely a pair $(i,j)$, and then performing a lookup in $z_{map}$. Both operations have constant complexity, even though we have to mention that the performance related to the lookups in $z_{map}$ heavily depends on the ability to exploit the GPUs caching capabilities. Indeed, $z_{map}$ may have a relevant size - depending on $l_{deep}$. In light of this, it is important the \emph{memory layout} of $z_{map}$ to enhance data locality. 

%

As regards the \emph{memory layout} of the bidimensional array $z_{map}$, instead of 
using the typical row-major order memory layout, we access it
according to the Morton code obtained from index pairs $(i,j)$ used to access the array.
Since all objects and queries are first associated with
the Morton code of the cell $(i,j)$ in ${\cal C}^{l_{deep}}$ which contains them,
and then are \emph{sorted} by this code, during the indexing operation described below we access 
$z_{map}$ by exploiting temporal and spatial locality.
This is because when we scan objects and queries that are memorized nearby, we also access nearby 
elements in $z_{map}$.

The initialization of $z_{map}$ is performed entirely on GPU (function $GPUbuildZMap$, line \ref{line:buildZMap} in Algorithm \ref{lst:quadBuild}), by assigning each ${\cal C}$ cell (quadtree leaf) to a GPU streaming multiprocessor, which in turn initializes the interval of cells (elements of the lookup table) in ${\cal C}^{l_{deep}}$ contained by the ${\cal C}$ cell assigned.

The complexity of $GPUbuildZMap$ is $O(|{\cal C}| + |{\cal C}^{l_{deep}}|)$ and, in practical terms, negligible. 

\begin{figure}[!ht]
	\centering
    \begin{minipage}{0.45\columnwidth}
    \centering
	\includegraphics[width=0.7\columnwidth]{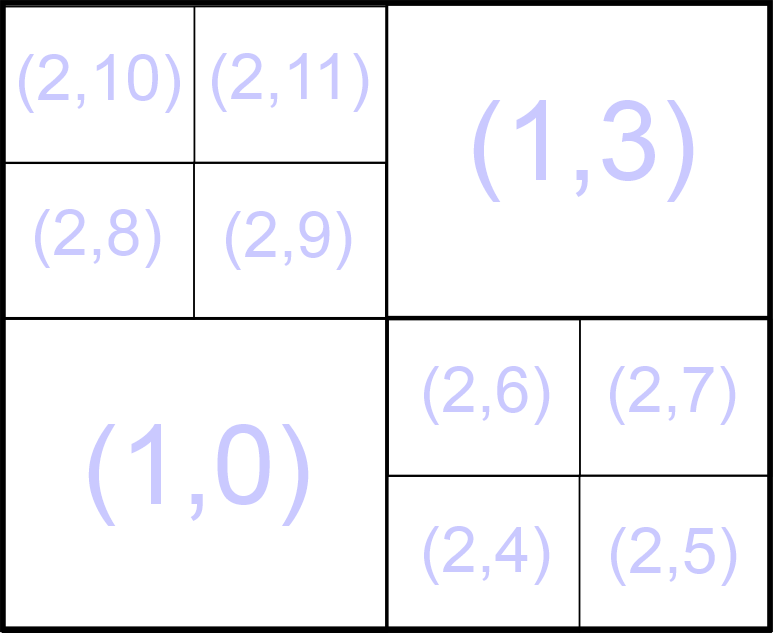}
	\end{minipage}
    \begin{minipage}{0.45\columnwidth}
    \centering
    \includegraphics[width=0.7\columnwidth]{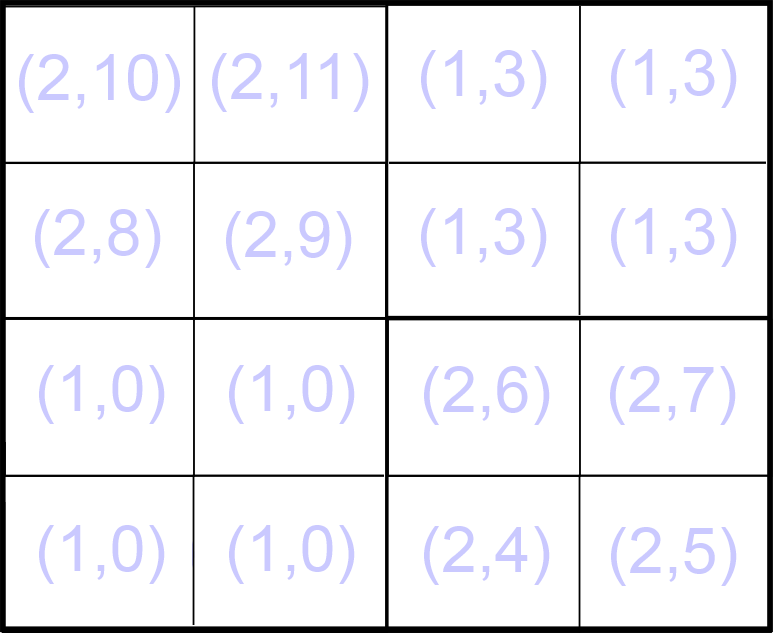}
\end{minipage}
	
\caption{Example of the mapping established by $z_{map}$ between the quadtree-induced grid $\cal C$ (left side) and the uniform grid ${\cal C}^{l_{deep}}$ (right side) related to the quadtree deepest level.}
\label{fig:zmap}
\end{figure}


\subparagraph*{Indexing moving objects and queries (\QQ{}).}
%

The goal of this phase is to map objects locations and queries to $C$ cells (quadtree leaves). Each object location is mapped to a single cell while each query can be potentially mapped to multiple cells (Definition \ref{def:objPosMap}). 

To convert the position of all objects in $P$ to cells identifier $c$ of ${\cal C}$, their 2-dimensional coordinates are first mapped to grid coordinates $(i,j)$ in the ${\cal C}^{l_{deep}}$ grid, where $l_{deep}$ is the deepest level of ${\cal C}$. Subsequently, the Morton codes identifying the cells of ${\cal C}^{l_{deep}}$ are derived from $(i,j)$. Then, objects are \emph{sorted} according to such Morton codes in order to exploit caching when subsequently accessing $z_{map}$, where $z_{map}$ is used to retrieve the final quadtree cell identifier $c=z_{map}[i,j]$, $c \in {\cal C}$, in which the objects fall. 
We remark that objects remain sorted after such mapping thanks to quadtrees structural properties: this will be exploited during the filtering and decoding phases (see Sections \ref{sec:filterOverview} and \ref{sec:decodeOverview}), since query processing happens at cell level.

To convert a range query, characterized by a rectangular region, we need to identify all the relevant cells in ${\cal C}$, i.e., all the cells that spatially intersect the query. 
This process entails to identify, for each query $q$, a set of \textit{subqueries}, each corresponding to the spatial restriction of the rectangular region of $q$ to a relevant cell in ${\cal C}$.
More formally, we have $g(q)=\{c_1,c_2,\ldots,c_n\} \subseteq {\cal C}$ (see Definition \ref{def:objPosMap}), where $q$ intersects or covers each $c_i \in {\cal C}$. We thus refer to each pair $(q,c_i)$ as a \textit{subquery} of $q$.

First, as in the objects case, queries are first associated with a ${\cal C}^{l_{deep}}$ cell, namely their Morton codes, through their reference corner (in case part of their spatial extent falls outside the MBR $\cal G$, only the area in common with $\cal G$ is considered), and then sorted accordingly to such codes in order to exploit caching when subsequently accessing $z_{map}$.

Then, to obtain the subqueries, we start by identifying all the 
${\cal C}^{l_{deep}}$ cells intersected by the query. We map each of these cells identified by a pair $(i,j)$ 
to the corresponding $c \in {\cal C}$ cell by exploiting $z_{map}$. Depending on the spatial distribution, it is very likely to have multiple 
cells of ${\cal C}^{l_{deep}}$ that intersect the range query, and are thus mapped to the same ${\cal C}$ cell. 
This behavior could create \emph{duplicate} subqueries, i.e., the same query mapped multiple times to the same cell of $C$. 
Figure \ref{fig:queries duplicates handling} illustrates the problem and sketches our solution to avoid the presence of multiple subqueries mapped to the same $\mathcal{C}$ cell. 
In the left picture of the figure, we can see how the original query $q_1$ falls over multiple $\cal C$ cells (specifically, 6 distinct cells). Among these, we consider the intersection between $q_1$ and the $\cal C$ cell $(1,3)$ (yellow area). In the right picture, which illustrates the uniform grid ${\cal C}^{l_{deep}}$ associated with ${\cal C}$ through $z_{map}$,  we can note that there are multiple cells on the right-upper part of $q_1$ that map to cell $(1,3)$ in $\cal C$ (specifically, 4 distinct cells of ${\cal C}^{l_{deep}}$). 
Therefore, $q_1$ would yield 4 subqueries 
that map to the same $\cal C$ cell $(1,3)$. To avoid duplicates, we always select the subquery 
having the minimal grid coordinates (highlighted in green).

\begin{figure}[!ht]
	\centering
    \begin{minipage}{0.49\columnwidth}
    \centering
	\includegraphics[width=0.7\columnwidth]{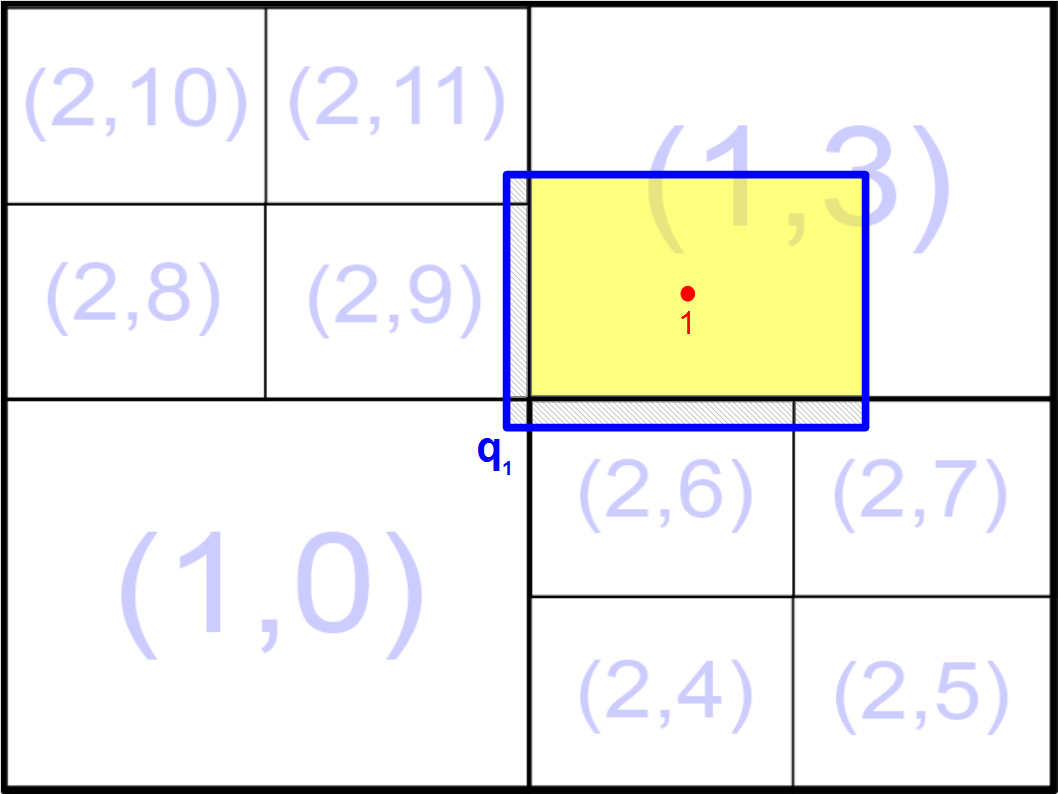}
	\end{minipage}
    \begin{minipage}{0.49\columnwidth}
    \centering
    \includegraphics[width=0.7\columnwidth]{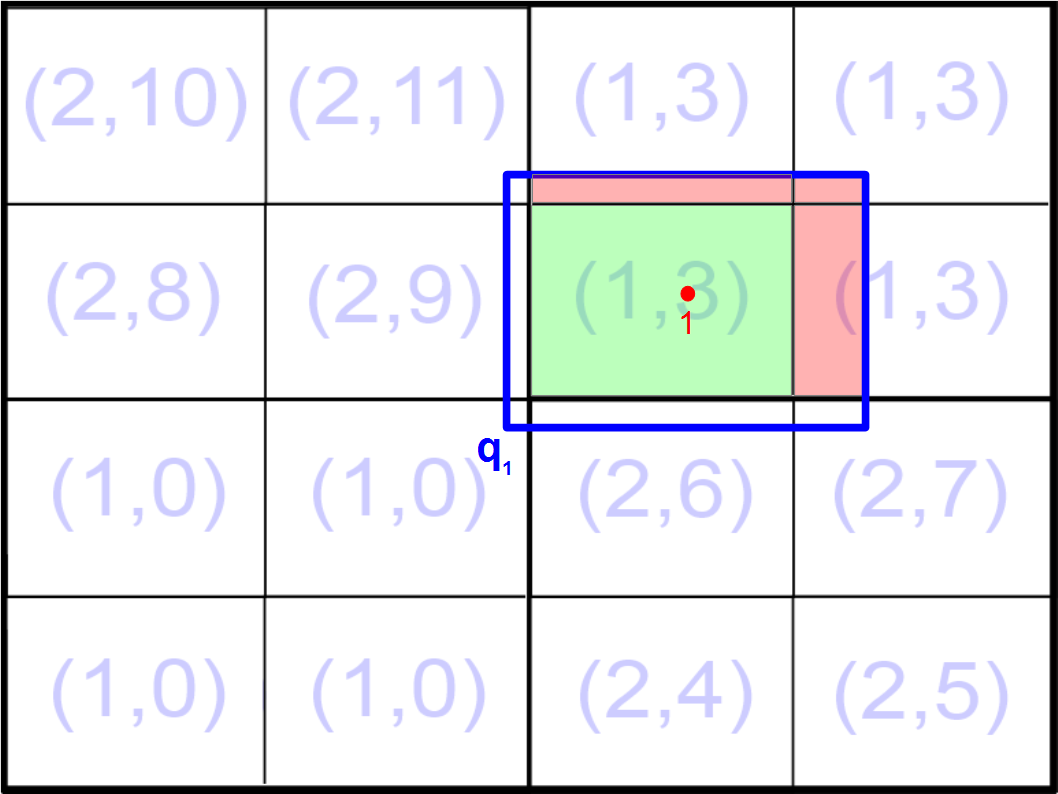}
\end{minipage}
	
\caption{Query indexing example with \QQ.}
\label{fig:queries duplicates handling}
\end{figure}


%

%

The queries are indexed, similarly to \SQ\ (\SQB), in two separate phases. During the first phase the amount of subqueries per each original query is determined. In order to determine the memory location where each subquery will be written, an exclusive prefix sum is performed over the vector containing the amounts of subqueries per query. Then, in the second phase, subqueries are actually written using the information computed during the first phase, and classified according to the \textit{intersecting/covering} dichotomy.

The overall complexity of the indexing phase can be expressed in the following terms:

\begin{itemize}
\item for what is related to the sorting operations needed to optimize the accesses in $z_{map}$, we have $O(|P| + |Q| + b \cdot (|P| + |Q|)) \approx O(2(|P| + |Q|))$: $|P|$ and $|Q|$ are due to $l_{deep}$ Morton codes computations while $b \cdot (|P| + |Q|)$ relates to the actual sorting performed, by means of Radix Sort, over \emph{P} and \emph{Q}.

\item for what is related to the subsequent operations, we have $O(|P| + O(2|Q| \cdot |{\cal C}^{l_{deep}}| + |Q| + 2|\hat{Q}|)$: $|P|$ relates to the lookups in $z_{map}$, $2|Q| \cdot |{\cal C}^{l_{deep}}|$ relates to the query indexing which happens in two separate phases ($|{\cal C}^{l_{deep}}|$ is due to the amount of ${\cal C}^{l_{deep}}$ cells spanned by an original query in the \textit{worst} case) and $2|\hat{Q}|$ relates to the subqueries written during the second phase (lookups in $z_{map}$ are included in the complexity), noting that $|\hat{Q}| = |Q| \cdot |{\cal C}^{l_{deep}}|$ in the worst case.
\end{itemize}

In light of these considerations, the amount of subqueries to be checked during indexing may be relevant, therefore we remark the importance of exploiting caching when accessing $z_{map}$.

\subparagraph*{Sorting (\QQ{}).}

Once subqueries are mapped to $\cal C$ cells, we sort the associated augmented tuples to store  them in contiguous memory locations. The reason of this phase is analogous to the sorting carried out in the \SQ\ and \SQB\ cases.


Note that, unlike the \SQ\ and \SQB\ cases, for \QQ\ we do not need to re-sort the moving objects.
The sorting done during the indexing phase, according to the cell identifiers of ${\cal C}^{l_{deep}}$ grid, 
is enough to guarantee locality during the following query processing phase, thanks to
the quadtrees structural properties. 

As regards subqueries, we have to sort them since there is no guarantee about the order in which they are written in global memory during the indexing phase.
Consequently, the structure of vectors associated with the subqueries tuples, $\hat{Q}$, is sorted on GPU by means of Radix sort according to the identifier associated with the cell
\footnote{While in the \SQ\ and \SQB\ cases this identifier is represented by an integer storing grid coordinates,  in the \QQ\ case it is a pair $(l,z)$, which is indeed conveniently stored as an integer.}.
Moreover, in order to support the optimizations discussed in Section \ref{sec:optimizations}, each identifier is augmented so to signal whether a subquery is either covering or intersecting. In this way, after the sorting operation, all intersecting subqueries are placed at the beginning of their structure of vectors.
%

Since the sorting algorithm is Radix Sort, the complexity of the sorting step is 
$O(b \cdot (|\hat{Q}|)) \approx O(|\hat{Q}|)$.

\subsection{Filtering}
\label{sec:filterOverview}
The goal of the filtering phase is to compute range queries over object locations, and store the
containment test outcomes (i.e., which object locations are contained in each query range) conveniently. Since, by definition,
covering subqueries entirely cover the cell onto which they fall, the filtering phase can be actually limited
to intersecting subqueries, delegating the processing of the former type to the optimization described
in Section \ref{sec:optimizations}.

In this context we conveniently denote by ${\cal C}_\alpha \subseteq {\cal C}$ the set of active cells, i.e., those cells containing at least one object.
 
Both \QQ{} and \SQ{} store the containment test outcomes in form of bitmaps (one per \emph{active} cell), which will be decoded at a later stage in order to obtain a final compact representation of the positive containment test outcomes.

Filtering is performed in parallel: each active cell in $\cal{C}_{\alpha}$ is assigned to a block of GPU threads to obtain a bitmap which refers to object locations and subqueries falling in the corresponding cell.

In the last part of this subsection we also detail the simplifications adopted by the \SQB{} filtering algorithm. In the experimental section we will use \SQB\ as a baseline to assess the benefits of using the bitmaps and an additional (decoding) phase needed to extract the final query results from these.

\subparagraph*{Bitmap layouts.}
Bitmaps are arranged in memory by using two different layouts across the following phases, since each layout better fit specific kinds of operations on GPU.
For each active cell $c \in {\cal C}_{\alpha}$, the filtering phase  initially compute a 2D bitmap $B^{c}$ characterized by an interlaced column-wise layout (Figure \ref{fig:bitmapInterlaced}.a), where each column $q_i$ refers to a single query and each row $b_j$ refers to a fixed block of $w$ object locations inside the cell. The width of each column is $w$ bits (here we assume $w=32$) and the content of a $w$-bit word corresponding to $q_i$ and $b_j$ indicates if the object locations associated with block $b_j$ are contained in the extent of query $q_i$. Thus, a single bitmap element $B_{n,m}$, where $n$ and $m$ are the row and column at bit level, represents the containment test outcome between the ($m / w$)-th query and the ($(n \cdot w) + (m \text{ } mod \text{ } w)$)-th object location.

\begin{figure}[h]%
\centering
\subfloat[Bitmap layout with \textit{interlaced} bit-vector words.]{\includegraphics[width=0.45\columnwidth]{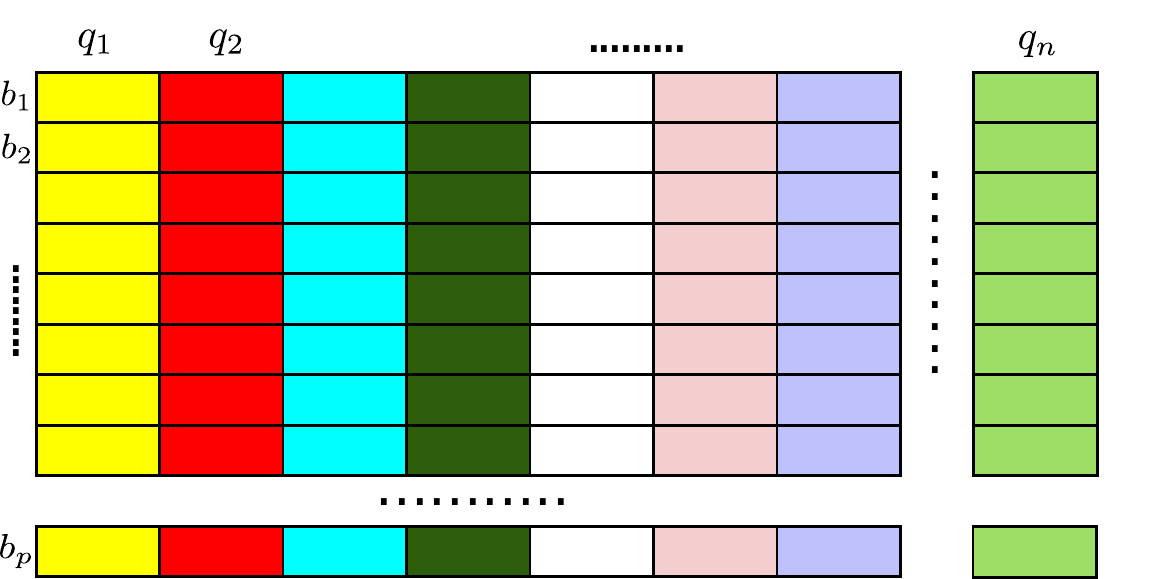}}
\hspace{0.5em}
\subfloat[Bitmap layout with \textit{contiguous} bit-vector words.]{\includegraphics[width=0.45\columnwidth]{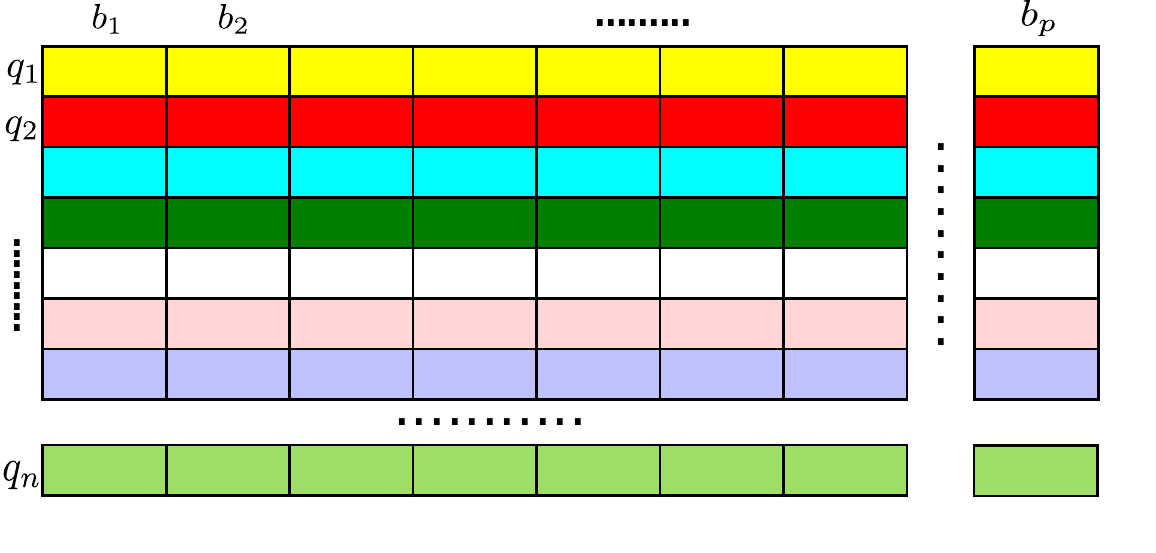}}
\caption{Bitmap layouts used during the processing.}
\label{fig:bitmapInterlaced}
\end{figure}

This interlaced layout choice facilitates coalesced memory accesses when threads in the same warp are in charge of computing containment tests of a consecutive group of subqueries (see \emph{Interlaced bitmap generation}). Indeed, these threads can write results to consecutive memory positions, taking advantage of coalesced memory accesses. Since threads in the same warp perform exactly the same operations, and each thread is in charge of writing a distinct 32-bit word, this solution eliminates the need of any synchronization mechanisms at thread block or global levels. 


While the interlaced layout entails benefits when the bitmap is produced, it hinders the extraction of all the containment test outcomes referred to the same subquery. For this reason, after production, each interlaced bitmap is transposed word-wise to improve the memory throughput when the bitmaps are decoded to  extract the final results. We will refer to this transformation as the \textit{linearization} of interlaced bitmaps. Indeed, in the row-wise layout resulting from the transformation (Figure \ref{fig:bitmapInterlaced}.b), single containment test outcomes are linearly indexed and bit-vectors associated with each subquery have their words arranged consecutively in memory, which favours subquery-wise read coalescing during the decoding phase. The linearization transformation can be expressed as a massively-parallel operation which is efficiently performed on GPU.

\subparagraph*{Filtering -- Interlaced bitmap generation.}
During this stage we divide query result computation to exploit three different kind of parallelism allowed by GPUs. 

\emph{Block parallelism} allows to process independent tasks. Since we are considering subqueries, which are restricted to a specific index cell by definition, the computation of the results in different cells can proceed independently, producing distinct result bitmaps. Thus, active (non-empty) index cells $c \in \cal C_{\alpha}$ are assigned to distinct \emph{blocks} of GPU threads.

Each block of GPU threads is executed asynchronously by the same \emph{streaming multiprocessor} (SM). \emph{Thread parallelism} allows for cooperation among threads in the same block.  Each thread in a block is in charge of computing a distinct subquery that is present in the index cell assigned to the block. Since each bitmap is common for all the subqueries(threads) in a cell(block), the cooperation among threads is used to ensure coordination when writing out the containment test outcomes (0/1) in the bitmap.

Whenever possible it is strongly suggested to orchestrate the thread scheduling to hide memory access latency by having an amount of threads per thread block exceeding the amount of cores per single streaming multiprocessor. However, only subsets of threads can run in parallel at a given time. These subsets of $sz_{warp}$\footnote{32 threads per warp in current GPUs.} synchronous and data parallel threads are called \emph{warps}. Thanks to synchronous execution, \emph{warp parallelism} allows to avoid synchronization operations. Furthermore, threads in the same warp benefit from coalesced memory accesses when they access consecutive (or identical) memory positions, so that several memory accesses are combined in a single transaction.

In our solution all the threads of a warp access the device memory in an optimal way: they read the same input data (object locations) synchronously (this exploits GPU caching), access them consecutively (subqueries, spatial locality, this exploits coalescing) and, thanks to the interlaced bitmap layout, write simultaneously results ($w$-bits bitmap words) to consecutive memory locations (this entails coalesced memory access).

To better explain the latter point, we illustrate in Figure \ref{fig:create_interbitmap} the role of different threads in a thread block during the creation of an interlaced bitmap: each group of
$sz_{warp}$ columns is collectively updated by a warp of threads in the thread block.
\begin{figure}[ht]
  \centering
  \includegraphics[width=0.5\columnwidth]{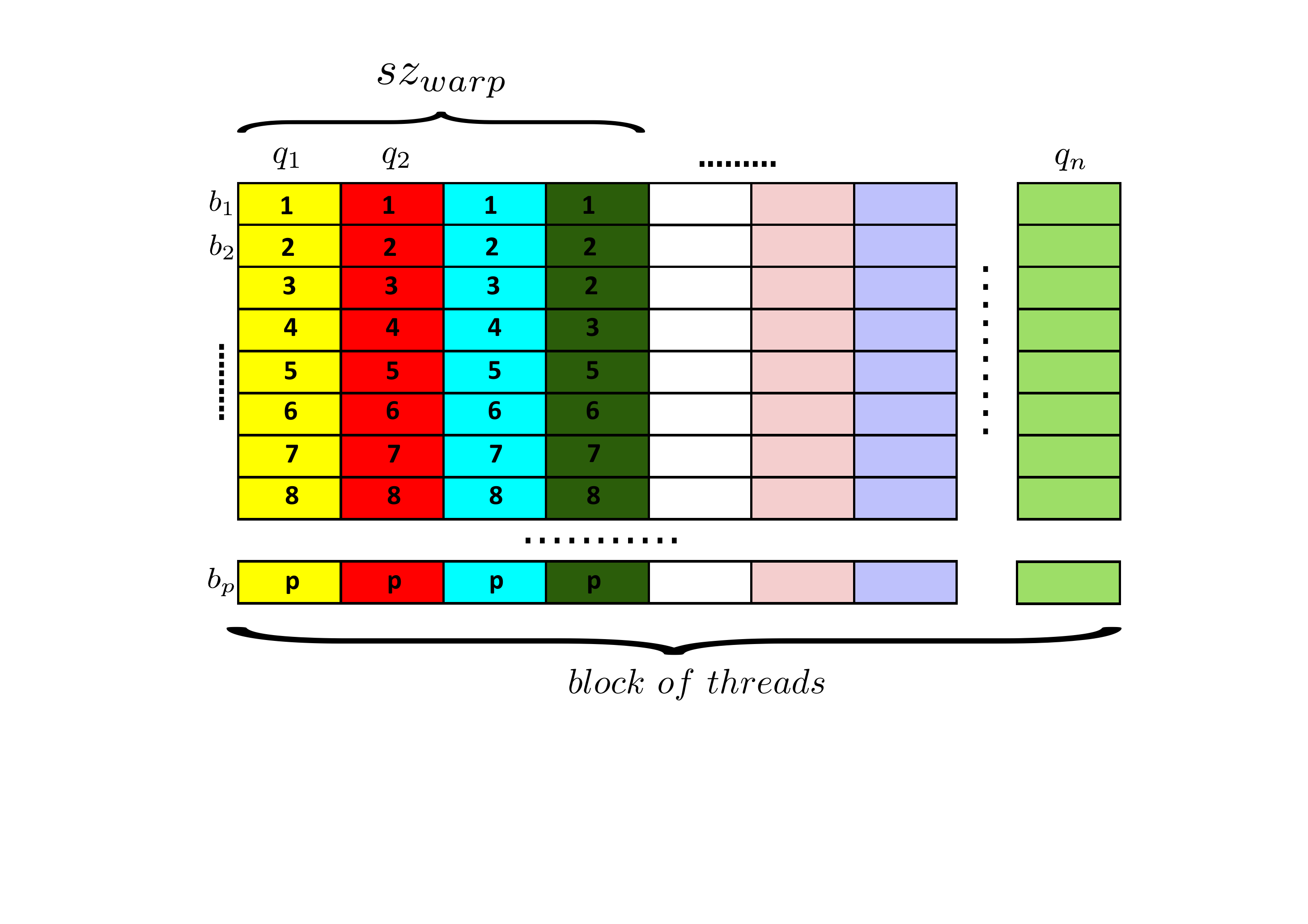}
  \caption{Collective creation of a interlaced bitmap by a block of GPU threads.}
  \label{fig:create_interbitmap}
\end{figure}
Each column contains a set of 32 bit-wide words, $b_1, \ldots, b_p$,
associated with a subquery $q$. The first words (the $b_1$'s)
associated with the various subqueries, and computed by the warp threads,
are stored simultaneously in memory. 
The same holds for the second block of words (the $b_2$'s), which is stored immediately after, and so on.  
The bitmap words updated simultaneously by the threads are stored consecutively thanks to the
interlaced layout of the bitmap. This permits the writes to be \emph{coalesced} .

\begin{algorithm2e}[h]
\DontPrintSemicolon
\SetInd{0.7em}{0.7em}
\Begin
{
$numPoints \leftarrow 0$\;
$wordIndex \leftarrow 0$\;
$wordBitmap \leftarrow 0$\;

\BlankLine \BlankLine

\ForEach{$c \in \cal{C}_\alpha$ $\lstparallel{block}$}
{\nllabel{cellPar}
	\ForEach{$q \in c$ $\lstparallel{thread}$}
	{\nllabel{queryPar}
		\ForEach{$p \in c$}
		{\nllabel{readPoints}
		
			$numPoints \leftarrow numPoints + 1$\;
			\If{$p \in q$}
			{
				$setBit(wordBitmap, p)$\; \nllabel{wordWrites}
			}	
			\If{$numPoints \text{ mod } 32 = 0$}
			{
				$writeBitmap(wordBitmap, wordIndex, q)$\; \nllabel{bitmapWrites}
				$wordBitmap \leftarrow 0$\;
				$wordIndex \leftarrow wordIndex + 1$\;
			}
		}
	}
}
}

\caption{\QQ{} and \SQ{} filtering phase.}
\label{lst:filtering}
\end{algorithm2e}

The pseudocode in Algorithm \ref{lst:filtering} illustrates the main points of the interlaced bitmap generation. Distinct \emph{blocks} of GPU threads process in parallel active index
cells $c \in {\cal C}_{\alpha}$ (line \ref{cellPar}). Each thread in a block is in charge of computing the results of a distinct subquery present in $c$ (line \ref{queryPar}).

%

All threads read the same sequence of object
positions and update a private \emph{32 bit-wide} register that
contains the bitwise information about the presence/absence of 32 distinct object locations in its own range query (line \ref{wordWrites}).
When the threads in a warp have completed the update of the current word (i.e., they have finished to compute a block of 32 containment tests or they have computed all the blocks), all threads proceed by flushing the content of their private registers simultaneously to the global device memory at the right memory displacement (line \ref{bitmapWrites}). The computation goes on until all the subqueries have been computed.

The execution of the inner loop (line \ref{readPoints}) is scheduled by the GPU at warp level: it depends on resource availability and memory access latency, but threads in the same warp are granted to be synchronous. For example, $wordBitmap$ will be completed simultaneously for all the threads in the same warp.

We finally note that all the threads of any warp access the device memory in an optimized way: they read the same input data synchronously (object positions, line \ref{readPoints}) or consecutively (subqueries, line \ref{queryPar}), thus always exploiting data spatial locality. 
Moreover, all threads write simultaneously words that are stored consecutively in memory, thus coalescing the writes and boosting the overall GPU global memory throughput (line \ref{bitmapWrites}).

\subparagraph*{Filtering -- Bitmap linearization.}

The goal of this operation is to transform each bitmap from the interlaced column-wise layout to 
the linearized row-wise one, so that bitmaps can be more efficiently processed during the subsequent decoding phase (Section \ref{sec:decodeOverview}). This transformation is performed on GPU and is depicted in
Figure \ref{fig:linearization}. The Figure shows the work of a single warp composed of $sz_{warp}$ threads.
\begin{figure}[ht]
  \centering
  \includegraphics[width=.6\columnwidth]{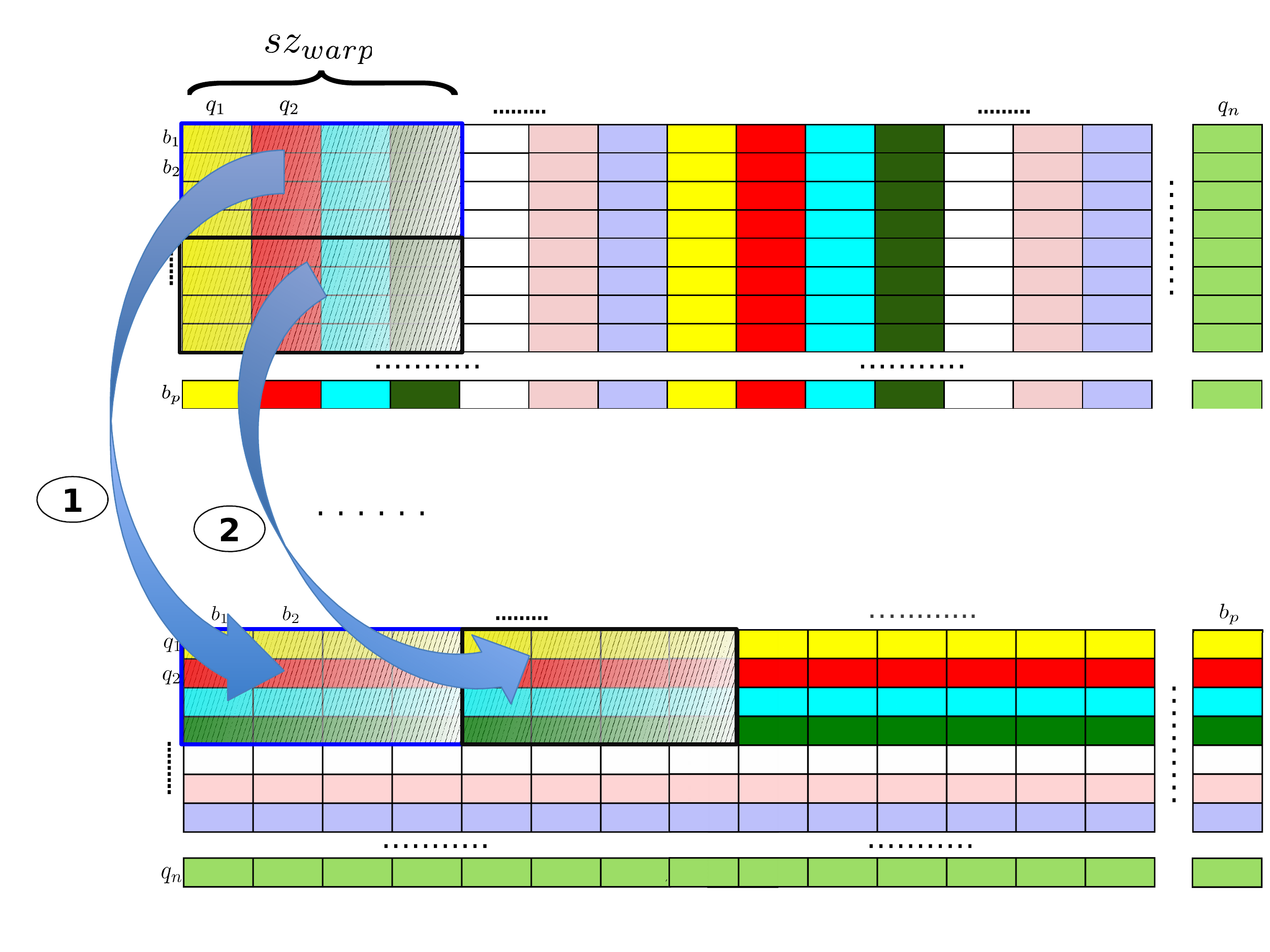}
  \caption{Collective linearization of a group of columns by a warp of GPU threads.}
  \label{fig:linearization}
\end{figure}
All the synchronous threads cooperate to transpose blocks of words,
one block at a time. Each block is made up of $sz_{warp} \times
sz_{warp}$ words. The linearization of a block of words is carried out
in two steps:
\begin{enumerate}
\item First, all the $sz_{warp}$ threads read in parallel the bitmap
  words associated with the subqueries assigned to them (a row of the
  input block of words at a time). The threads proceed until the
  end of the block (for $sz_{warp}$ rows of the block). While reading
  the words associated with subqueries, each thread incrementally prepares a
  linearised block of words in a temporary buffer stored in the fast
  shared memory available to each SM.

\item Once the input block has been read completely and linearized in
  the shared memory, the warp threads start their second step, where
  they move the linearized block to the device global memory. The only
  difference is that they collaborate by writing portions of the
  bitmaps associated with each subquery in parallel. First they write the
  first row of the linearised block in parallel, then the second, and so on.
  This, in turn, entails coalesced writes.
\end{enumerate}


Even if the shared memory has limited size, the block-wise linearization does not saturate the shared memory thanks to the small size of the data that are linearized simultaneously.


%
%

\subparagraph*{Filtering phase in \SQB.}
The \SQB\ algorithm does not use bitmaps and computes the query results on the fly. Therefore, its filtering phase is simpler and does not require subsequent linearization and decoding phases. The pseudocode in Algorithm \ref{lst:filteringSQB} illustrates this simpler strategy.

\begin{algorithm2e}[h]
\DontPrintSemicolon
\SetInd{0.7em}{0.7em}

\Begin
{
\ForEach{$c \in \cal{C}_\alpha$ $\lstparallel{block}$}
{\nllabel{cycleBlockSQB}
	shared $resultBuffer \leftarrow \emptyset$\;
	\ForEach{$q \in c$ $\lstparallel{thread}$}
	{
		\ForEach{$p \in c$}
		{
			\If{$p \in q$}
			{ \nllabel{computeSQB}
				$appendResult(\textit{resultBuffer}, q, p)$\; \nllabel{appendSQB}
			}
			\If{$full(\textit{resultBuffer}) = true$}
			{ \nllabel{fullSQB}
				$flush(\textit{resultBuffer})$\; 
			}
		}
	}
}
}

\caption{\SQB{} filtering phase}
\label{lst:filteringSQB}
\end{algorithm2e}

Each active cell is assigned to a single thread block (line \ref{cycleBlockSQB}). Query results (pairs of object locations and subqueries such that the object location is contained in the subquery range) are computed in parallel (line \ref{computeSQB}) and immediately appended to a buffer in shared memory (line \ref{appendSQB}) common to all the threads in the block. The threads access and update the buffer by means of \emph{atomic operations} in order to guarantee its consistency. Once the buffer contains an amount of results greater than a fixed threshold (line \ref{fullSQB}), the threads synchronize and cooperatively flush its content out to global memory. In order to guarantee the consistency of the data written to the global memory a result counter, shared among all the thread blocks, is accessed and updated atomically.

\subparagraph*{Filtering -- Complexity.}

The \emph{interlaced bitmap generation} (Algorithm \ref{lst:filtering}) represents the time dominant part of the filtering phase. The filtering complexity, determined by the amount of containment tests to be computed, is:
\begin{equation}
\label{eq:eqFiltComp2}
O(\sum_{c \in \cal C_{\alpha}} |\hat{Q}_I^c| \cdot |P^c|),
\end{equation}
where  $C_{\alpha} \subseteq {\cal C}$ is the set of  active grid cells,  $P^c$ the set of object locations in $c$ and $\hat{Q}_I^c$ the intersecting subqueries associated with $c$. 

In general we observe that decreasing the grid cells size yields a smaller number of containment tests, even if the number of intersecting subqueries to manage is larger. 
An arbitrary decrease, however, has negative side effects, such as the fragmentation of the intermediate results in a large number of small bitmaps, which in turn influences negatively the overall running time due to inefficient computational resource usage and scattered memory accesses. Moreover, an arbitrary cell size decrease may induce an intersecting subqueries increase rate eclipsing the decrease rate related to the average amount of objects per active cell, therefore raising the complexity at some point. In light of these considerations we argue there is a trade-off between decreasing the overall number of operations executed and optimizing parallelism and memory access costs.

The \emph{bitmap linearization} has linear complexity  with respect to the number of bitmaps words. Since each bit corresponds to an intersection test, the two subphases has the same complexity.

\subsection{Bitmap decoding}
\label{sec:decodeOverview}

The end product of the filtering phase of both \SQ{} and \QQ{} consists of a set of linearized bitmaps, one per active cell, containing both the positive and the negative containment test outcomes, related to object locations and subqueries associated with the active cells. The goal of the decoding phase is to process such bitmaps in order to extract the final query result set, i.e., the positive occurrences in the bitmaps. 

Accessing the bitmaps content in order to extract the positive occurrences represents a memory and computationally intensive task which, thanks to the linearized layout, can be efficiently and conveniently parallelized on GPU.

The decoding phase, common to \SQ{} and \QQ{}, proceeds as follows: for each intersecting subquery a list of objects identifiers is generated (according to the problem definition in Section \ref{sec:problem}), where each identifier represents a positive occurrence. Since each active cell (bitmap) represents a single task, the decoding operation can progress by decoding the tasks in chunks: this allows us to transmit to the CPU the information related to a previously decoded chunk of bitmaps while the GPU progresses by decoding the next chunk of unprocessed bitmaps. This also allows us 
to overlap computations carried on the GPU with I/O transfers from GPU to CPU. We highlight that the result lists related to intersecting subqueries originating from the same query have no results in common and preserve the identifier of the original query, so it is trivial to merge them to obtain the final result set.

The pseudocode in Algorithm \ref{lst:decoding} illustrates the GPU part of this strategy.

\begin{algorithm2e}[h]
\DontPrintSemicolon
\SetInd{0.7em}{0.7em}
\Begin
{
\ForEach{$bitmap$ $\lstparallel{block}$}
{\nllabel{cycleBlock}
	\ForEach{$q \in Q_{bitmap}$ $\lstparallel{warp}$}
	{\nllabel{cycleWarp}
		$q_{id} \leftarrow loadQueryID(q)$\; \nllabel{loadQueryID}
		\textbf{shared} $q_{bv} \leftarrow loadQueryBitVector(q,bitmap)$\; \nllabel{loadBV}
		$resultSet_q \leftarrow linearScan(q_{bv})$\; \nllabel{resSetScan}
		
		\BlankLine \BlankLine
		
		\ForEach{$r \in resultSet_q$ $\lstparallel{thread}$}
		{
			$p \leftarrow loadPoint(r,bitmap)$\; \nllabel{loadPointID}
			$writePID(p, bitmap)$\; \nllabel{resWrite}
		}
		
		\BlankLine \BlankLine
		
		$writeQueryDetails(q_{id},|resultSet_q|, bitmap)$\;
	} \nllabel{endCycleWarp}
}
}

\caption{Decoding phase}
\label{lst:decoding}
\end{algorithm2e}

Each bitmap refers to a specific index active cell and is decoded by a specific thread block (line \ref{cycleBlock}). Each intersecting subquery referred by the bitmap is assigned to a \textit{warp} (line \ref{cycleWarp}), so that each warp is in charge of several subqueries. Since threads in the same warp are synchronous, the use of warp level granularity allows to safely avoid the use of synchronization mechanisms inside the loop.

Threads in the same warp transfer (line \ref{loadBV}) the words composing the bit vector of the currently considered subquery from device memory to shared memory. Since consecutive threads read consecutive memory positions, this operation yields coalesced memory reads.

Once the transfer is over, each thread in the warp determines the subset of results it will write to global memory so that writes can be coalesced (i.e., the \textit{i}-th warp thread  will write the ($i \text{ mod } warpSize$)-th positive results). This is achieved by using the \textit{linearScan} function (line \ref{resSetScan}). Here, each thread performs concurrently a linear scan over the query bit vector words in order to take note of the positive results it will write to global memory. This operation can be carried on efficiently thanks to the shared memories \textit{broadcast} capability, which avoids bank conflicts between threads in a warp if they are all reading the same address. 
Moreover, each thread can store the information related to the positive results it has to write in its own private registers, since the decoding between lines \ref{cycleWarp} and \ref{endCycleWarp} is scalable with respect to the subqueries bit vectors size (which is fixed for a given bitmap).

Once these information are determined, the warp threads finally perform a collective write of the subquery results, yielding coalesced writes (function \textit{writePID}, line \ref{resWrite}). 
Each warp also knows exactly where the results of each subquery has to be stored in global memory, since the overall amount of results per subquery can be determined during the filtering phase and stored in a vector (therefore, an exclusive prefix sum over the vector returns the correct memory location for each result).

\subparagraph*{Decoding -- Complexity.}
Considered we have one bitmap for each active cell $c$,  the overall decoding complexity (Algorithm \ref{lst:decoding}) is:
\begin{equation}
\label{eq:eqDecComp2}
O(\sum_{c \in \cal C_A}|\hat{Q}_I^c|\cdot|P^c| + \sum_{q' \in \hat{Q}_I^c} |R_{q'}|),
\end{equation}
where $R_{q}$ denotes the result set of a query $q$. The first term is due to the access for each cell $c\in {\cal C}_A$ to the respective bitmap, each containing $|\hat{Q}_I^c| \cdot |P^c|$ bits (number of intersecting subqueries times the number of objects in the cell), while the second term is due to the writes of the intersecting subqueries results. As we highlighted for filtering, the grid cells size indirectly affects the decoding complexity.

\subsection{Optimizations}
\label{sec:optimizations}

\subparagraph*{Covering subqueries optimization -- covering subqueries information notification.}
\SQ, \SQB\ and \QQ\ take advantage of the covering subqueries (we denote their set by $\hat{Q}_C$) in order to reduce the result set the GPU computes, thus saving a relevant amount of GPU computations during the filtering and decoding phases and I/O traffic between the GPU and the CPU during the decoding phase. This is achieved by notifying the CPU the covering subqueries data, together with the object locations enclosed in the $\cal C$ cells, just after the end of the \textit{sorting} phase.

\subparagraph*{Covering subqueries optimization -- covering subqueries result set expansion.}
As soon as the data relevant for reconstructing the $\hat{Q}_C$ result set is notified, the CPU can start its expansion. Indeed, for each $q \in \hat{Q}_C$ we have to consider pairs $(q, c \in {\cal C}_\alpha)$, where each \emph{c} represents the index of a cell entirely covered by \textit{q}. After the final sorting step of the indexing phase, the lists of object locations associated with each cell of ${\cal C}$ are sent to the host memory, so that the CPU can directly access them. Therefore, by looking at these lists, the CPU can immediately extract the result set related to $q$. This operation is performed by the CPU in background, between the end of the \textit{sorting} phase and the end of the \textit{decoding} phase of the GPU.

\subparagraph*{Covering subqueries optimization -- complexity.} The cost related to the covering subqueries result set expansion is $O(\sum_{q \in \hat{Q}_C} |R_{q}|)$,  due to the scan of the list of object locations associated with the cell of each covering subquery. Since this task is carried out by the CPU and overlapped with the tasks performed on GPU, 
in practice it has very little or negligible impacts on the overall execution time.

\subparagraph*{Task scheduling optimization.} 
A proper GPU task scheduling policy can substantially improve the overall execution time by reducing the inactivity time of the GPU streaming multiprocessors in presence of unbalanced workload distributions. In this context we define a single workload GPU \textit{task} as the \textit{set of computations} related to the intersecting subqueries falling inside a specific grid active cell, whereas the \textit{computational weight} of a given task is the amount of containment tests associated with it, i.e., the product between the amount of intersecting subqueries and the amount of object locations falling inside the related active cell. 
%

In general we can take into consideration three high-level GPU task scheduling strategies when assigning the workload tasks to the GPU streaming multiprocessors \cite{cederman2008dynamic}, namely, the static task list strategy (which is the default one used by CUDA), the task queue based strategy and the task stealing strategy.

%

Since the computational weight of each task is known a-priori once the sorting phase is performed, and that new sub-tasks cannot be created at run-time, we deem that the first strategy, together with a reordering of the task list according to the tasks computational weight, is the best one for the scenarios considered; indeed, this strategy has the effect of batching together the execution of tasks having similar computational costs at the negligible cost related to the need of accessing the task list atomically whenever an idling GPU streaming multiprocessor available to take in charge the first non-assigned task (atomic access is required to ensure a single execution for each task).

This optimization is used during the filtering and decoding phases when assigning tasks (active $\cal C$ cells) to streaming multiprocessors.

\hide
{
\section{Computational Complexity}
\label{sec:costModel}

In this section we express analytically the (per tick) complexity of the filtering and decoding phases. We focus on these phases since the complexities of the algorithms underlying the preceding ones (\textit{index creation}, \textit{moving objects and queries indexing} and \textit{sorting}) are already reported in Section \ref{sec:commonParts}.

We recall that $P$, $Q$, and $R$ denote the up-to-date moving object positions, the 
non-obsolete queries issued by the objects, and the complete result set associated 
with a generic tick, respectively.
Since only active cells ${\cal C}_A$ have to be
considered when computing the results, we denote by $P^c$, $Q^c$, and $R^c$ respectively, the 
objects, the queries, and the results associated with an active cell $c
\in \cal {\cal C}_A \subseteq C$.
In addition, since from $Q$ two sets of covering/intersecting subqueries $\hat{Q}_C$ and $\hat{Q}_I$
are created, we denote by $\hat{Q}_C^c$ and $\hat{Q}_I^c$ respectively the 
covering and intersecting parts of queries associated with each active cell $c$.

\subsection{Filtering, decoding and covering subqueries result set expansion cost analysis}
\label{subsec:sqCost}

For each active cell $c \in C_A \subseteq {\cal C}$, the filtering phase has to compute the containment test outcomes between the intersecting subqueries and the object falling in $c$. Therefore the associated complexity can be expressed as:
\begin{equation}
\label{eq:eqFiltComp2}
\sum_{c \in \cal C_A} |\hat{Q}_I^c| \cdot |P^c|.
\end{equation}

The subsequent step consists in extracting the positive containment test outcomes from the bitmaps. Again, considered that we have one bitmap for each active cell, the overall complexity of the decoding phase is:
\begin{equation}
\label{eq:eqDecComp2}
\sum_{c \in \cal C_A}|\hat{Q}_I^c|\cdot|P^c| + \sum_{q' \in \hat{Q}_I^c} |R_{q'}|,
\end{equation}

The first term is due to the access for each cell $c\in {\cal C}_A$ to the respective bitmaps, each one containing $|P^c| \cdot |\hat{Q}_I^c|$ bits, while the second term is due to the writes of the intersecting subqueries results.

Finally, the cost related to the covering subqueries result set expansion described in Section \ref{sec:optimizations} is equal to

\begin{equation}
\label{eq:eqCovComp}
\sum_{q \in \hat{Q}_C} |R_{q}|,
\end{equation} 

which is due to the scan of the list of object locations associated with the cell of each covering subquery. As mentioned before, this task is brought on by the CPU and overlapped with the tasks performed on GPU between the end of the \textit{sorting} phase and the end of the \textit{decoding} phase. As a consequence, it usually has very little or negligible impacts on the overall execution time. 

In general we note how decreasing the grid cells size entails a smaller number of containment tests, even if the number of intersecting subqueries to manage is larger. 
However, an arbitrary decrease has negative side effects, such as the fragmentation of the intermediate results in a large number of small bitmaps, which in turn influences the overall running time negatively due to inefficient computational resource usage and scattered memory access. Moreover, this may produce a considerable increase in the amount of intersecting subqueries, which in turn increases the complexities described by Equations \ref{eq:eqFiltComp2} and \ref{eq:eqDecComp2}. Indeed, we argue there is a trade-off between decreasing the overall number of operations executed and optimizing parallelism and memory access costs.

\label{subsec:qqCost}

\subsection{Cost for Uniform  Spatial Distributions}
The equations described in Section \ref{subsec:sqCost} are independent with respect to objects distribution properties, so they apply to all possible scenarios. However, skewed distributions affect negatively the effectiveness of parallelization, since they entail unbalanced GPU resources usage, and it would be interesting to formalize these effects. 

Unfortunately, it is quite difficult to capture this potential loss of parallelism analytically, and it is even more difficult to do this if the spatial index considered is not based on uniform grids; thus, we restrict our analysis to scenarios in which objects are distributed uniformly and issue fixed-sized squared range queries centered on their location. Even if the resulting analysis is limited, it nonetheless allow to derive some interesting considerations related to the dimensioning of grid cells. 

Our analysis will be based on the following parameters: the object spatial density $\rho$, expressed as objects per areal unit, the query rate $qr$, expressed as the probability that an object issues a query during a time tick, the query area $qa$, the grid cell area $ca$ and the total number of cells $nc$.

From these parameters we can derive the overall number of objects locations, $|P| = \rho \cdot ca \cdot nc$, and the expected amount of queries per tick, $|Q|=|P|\cdot qr = \rho \cdot ca\cdot nc\cdot qr $.
For what relates to queries, however, we need to take into consideration the way they are indexed (Definition \ref{def:objPosMap}) and thus discern between intersecting and covering subqueries: given that we are limiting our analysis to spatial indices based on uniform grids (like the one used by \SQ), it is possible to derive the amount of intersecting and covering subqueries from the geometric properties of the originating queries. Indeed, given a set of queries and a grid cell size we can estimate these two quantities as:
\begin{equation}
\label{eq:nr_isect}
\begin{split}
|\hat{Q}_I| \approx \frac{\sum_{q \in Q} perimeter(q)}{cell\_side},\\
|\hat{Q}_C| \approx \frac{\sum_{q \in Q} area(q)}{cell\_side^2} - |\hat{Q}_I|.
\end{split}
\end{equation}

The first formula counts the cells along the query perimeters while the second derives from the observation that the covering subqueries are those corresponding to cells that are part of the area of some query but does not intersect the query perimeter. We observe that Equation \ref{eq:nr_isect} is not limited to squared range queries or uniform spatial distributions when considering uniform grids (like \SQ), while for \QQ\ is applicable only when the materialized quadtree corresponds to a uniform grid (i.e., ${\cal C} = {\cal C}^{l_{deep}}$).

We now proceed determining the costs related to the filtering and decoding phases. Remembering that these phases process only intersecting subqueries, their costs can be computed as the expected cost of processing a single grid cell multiplied by the overall number of cells, \textit{nc}. If the average number of object locations per cell is $\rho \cdot ca$, the intersecting subqueries spatial density is $\rho_{\hat{Q}_I}$, expressed as the amount of intersecting subqueries per areal unit, and the average number of intersecting queries per cell is $qr \cdot \rho_{\hat{Q}_I} \cdot ca$, the filtering cost can be directly derived from Equation \ref{eq:eqFiltComp2} as:
\begin{equation} 
\label{eq:filter2}
O\Bigg(nc \cdot (\rho \cdot ca) \cdot (qr \cdot \rho_{\hat{Q}_I} \cdot ca)\Bigg)
\end{equation}

For what relates to the decoding phase,
we have that if the average number of results per intersecting subquery is $qa_{\hat{Q}_I} \cdot \rho$, where $qa_{\hat{Q}_I}$ is the intersecting subqueries average area, from Equation \ref{eq:eqDecComp2} we have:
\begin{equation}
\label{eq:decode2}
O\Bigg( nc \cdot (\rho \cdot ca) \cdot (qr \cdot(\rho_{\hat{Q}_I} \cdot ca))\, +\, nc \cdot (qr \cdot \rho_{\hat{Q}_I} \cdot ca) \cdot (\rho \cdot qa_{\hat{Q}_I}) \Bigg)
\end{equation} 

Both for filtering and decoding we observe that the finer the grid is, the higher is $\rho_{\hat{Q}_I}$ and the lower is $\rho$. Therefore, both filtering and decoding have individual minima obtained by finding a cell size entailing the best tradeoff between $\rho_{\hat{Q}_I}$ and $\rho$. This in turn requires to find a cell size minimizing the combined filtering and decoding cost.

Finally, we saw how the cost related to the covering subqueries result set expansion is decoupled from the filtering and decoding phases, since it is distributed among various pipeline phases and performed on CPU in background. Its overall cost is due to the scan of the object locations inside the cells covered by the covering subqueries and can be expressed as:

\begin{equation}
\label{eq:covering}
O\Bigg(nc \cdot (qr \cdot \rho_{\hat{Q}_C} \cdot ca) \cdot (\rho \cdot qa_{\hat{Q}_C}) \Bigg),
\end{equation} 

where $\rho_{\hat{Q}_C}$ is the amount of covering subqueries per aerial unit while $qa_{\hat{Q}_C} = ca$ represents the covering subqueries area.

}

\section{Experimental setup}
\label{sec:setup}

All the experiments are conducted on a PC equipped with an Intel Core i3 560 CPU, running at
3,2 GHz, with 4 GB RAM, and an Nvidia GTX 560 GPU with 1 GB of RAM coupled with CUDA 5.5.
The OS is Ubuntu 12.04.
We exploit a publicly available framework \cite{sow2012software, sowell2013experimental} for both workload generation and testing. 
The framework comes with a number of sequential, CPU-based iterated spatial join algorithms. Among these, the \emph{Synchronous Trasversal} algorithm (\CPU) is shown to be consistently the best~\cite{sowell2013experimental} and thus we compare our GPU-based approaches with this algorithm.
%

As regards our GPU-based proposed solutions (\QQ\ and \SQ), we slightly modified the framework in order to offload the most time-consuming parallelizable tasks to the GPU, while delegating the others (mostly related to the GPU management) to the CPU.

We use three types of synthetic datasets: \emph{(i)} \textit{uniform datasets}, in which moving objects are distributed uniformly in the space; \emph{(ii)} \textit{gaussian datasets}, in which moving objects tend to gather around multiple \textit{hotspots} by following a normal distribution. The skewness in the gaussian datasets depends on the number of hotspots: the more the hotspots are, the more the objects tend to be uniformly distributed; \emph{(iii)} \textit{network datasets}, in which moving objects are distributed uniformly over the edges of a bidirectional graph representing a road network. In our experiments we use the San Francisco road network, derived from TIGER/Line files. This kind of datasets are characterized by a mild skewness, due to the constraint on the positioning of the objects. 
All the datasets are created using the generator provided by the framework, which is partly derived from the Brinkoff generator \cite{brinkhoff2002generator}.

In all tests we compute repeated range queries over 30 ticks. To model object movements the framework generates 30 instances of each dataset, one for each tick.
%

%
Table \ref{tab:param} summarizes the main parameters used to generate the datasets. The listed parameters apply to all the datasets, except for the amount of hotspots which is relevant for gaussian datasets only. The framework uses a generic spatial distance unit $u$ (e.g., meters).

\begin{table}[htdp]
\centering
\small
\begin{tabular}{|c|p{.65\textwidth}|}
  \hline\hline
  \emph{Spatial region} & 
  All tests occur in a squared spatial region with side length of $22500\ u$.
  \\ \hline
  \emph{Amount of objects} & 
  We vary the number of moving objects from $100K$ to $1500K$. In some tests the number of moving objects is fixed and the exact amount is explicitly stated in their descriptions.
  \\ \hline
  \emph{Objects maximum speed} &
  In all tests the maximum speed of each object is fixed to $200\ u$ per tick ($\Delta t$), where the objects are allowed to change their speed \cite{sowell2013experimental}. In general, changes in speed may slightly alter the objects distribution but do not change the distribution general properties.
  \\ \hline
  \emph{Query rate} &  The percentage of objects that issue a range query during every tick is always set to 100\%.
  \\ \hline
  \emph{Query size} & 
  All queries in a test are squared and, depending on the experiment, they may be all equally sized or not. We vary the side length in the range $[200\ u, 800\ u]$.
  The default value is $200\ u$.
  \\ \hline  
  \emph{Amount of ticks} & 
Whenever not specified, the default amount of ticks, corresponding to different snapshots of a dataset, is 30. Consecutive snapshots are expected to exhibit slight changes, according to the properties of the dataset spatial distribution.
  \\ \hline  
  \emph{Query location} & 
  All the queries 
  are centered around the objects issuing them.
  \\ \hline  
  \emph{Amount of hotspots} & 
  Depending on the experiments goals and specificities, the amount of hotspots is varied in the $[10, 150]$ range. Whenever not specified the default value used is $25$.
  \\ \hline  \hline
\end{tabular}
\caption{\label{tab:param}
Data and workload generation parameters.}
\end{table}



The decoded results are produced by the GPU in blocks, i.e., for each query/subquery a list of (positive) results is produced, whereas for \CPU\ the results are produced one by one. To avoid bias in the performance comparison, we thus force \QQ\ and \SQ\ to report the GPU-generated results to the framework in pairs, hence expanding the lists of objects belonging to the result set of each query.  


\section{Experimental evaluation}
\label{sec:experiments}

%
The experimental studies conducted for this work are introduced below, and are denoted by $S1,\ldots,S8$:

\begin{itemize}
\item[\emph{S1}] We study how a \emph{lock-free data structure}, like the bitmap
proposed to encode the intermediate output of the range queries, entails considerable improvements over a baseline GPU algorithm that recurs to locks to assure the result buffer consistency. 

\item[\emph{S2}] We analyze the advantages coming from the \emph{covering subquery optimization} in reducing computations (and related I/O traffic) performed on the GPU side.

\item[\emph{S3}] We show how a proper GPU \emph{task/block scheduling} can improve the \SQ\ and \QQ\ performances by reducing the workload unbalances deriving from skewed spatial distributions.

\item[\emph{S4}] We study how the data distribution skewness influences the choice of the optimal grid coarseness, focusing on \SQ.

\item[\emph{S5}] We study how \QQ\ is able to automatically adapt to the spatial data distribution of a dataset, even when the distribution is highly skewed.

\item[\emph{S6}] We analyze the impact of various spatial distributions on the performance. To this regard we study mean and dispersion index related to the amount of objects per active cell achieved by \SQ\ and \QQ, and the relationships that these measures have with some important features that impact the performance of the system, such as the overall amount of subqueries and the proportion of covering/intersecting ones.
%
%

\item[\emph{S7}] This study analyzes the sensitivity of the \SQ\ and \QQ\ performances with respect to datasets characterized by different spatial distribution properties, such as the amount of objects and the query area.

\item[\emph{S8}] This final study analyzes how the main factors characterizing the datasets, such as the amount of objects, the query rate, the query area and the skewness, affect the system bandwidth $\beta$ (as defined in Section \ref{sec: qos considerations}).
\end{itemize}

\noindent It is worth remarking that in the final \emph{S6}...\emph{S8} studies we turn on all the optimizations devised for both \SQ\ and \QQ. In particular, a relevant amount of computations is avoided by distinguishing between covering and intersecting subqueries; we exploit the lock-free bitmaps to store the intersecting subqueries intermediate results; we heuristically balance the workload between the GPU SMs by reordering the tasks/blocks to be scheduled, from the heaviest to the most lightweight; finally, we always adopt the best possible grid coarseness for \SQ, given any dataset, by means of an oracle, although this strategy cannot be adopted in practical settings since it requires additional work to profile the \SQ\ performance for the various datasets.



\subsection{Analysis on the benefits coming from the usage of bitmaps (S1)}

In this study we evaluate the benefits coming from the usage of a lock-free data structure such as the bitmaps proposed. We focus on the filtering and decoding steps, since these are the only phases during 
which the bitmaps are utilized in order to significantly improve the performances. 
We limit this comparison to \SQ\ and \SQB, since \QQ\ filtering and decoding phases are similar to \SQ\ ones, and the benefits over a naive lock-based technique are thus analogous.

We briefly remember that in \SQB\ the query result set is computed and transmitted to the CPU counterpart on the fly, during the filtering step, thus avoiding the need of a subsequent decoding phase. 
Moreover, since the results of different queries can be interleaved in the output stream, 
each result is represented as a pair (\emph{query:point}).

\begin{figure}[h!]
\centering
    \includegraphics[width=.5\columnwidth]{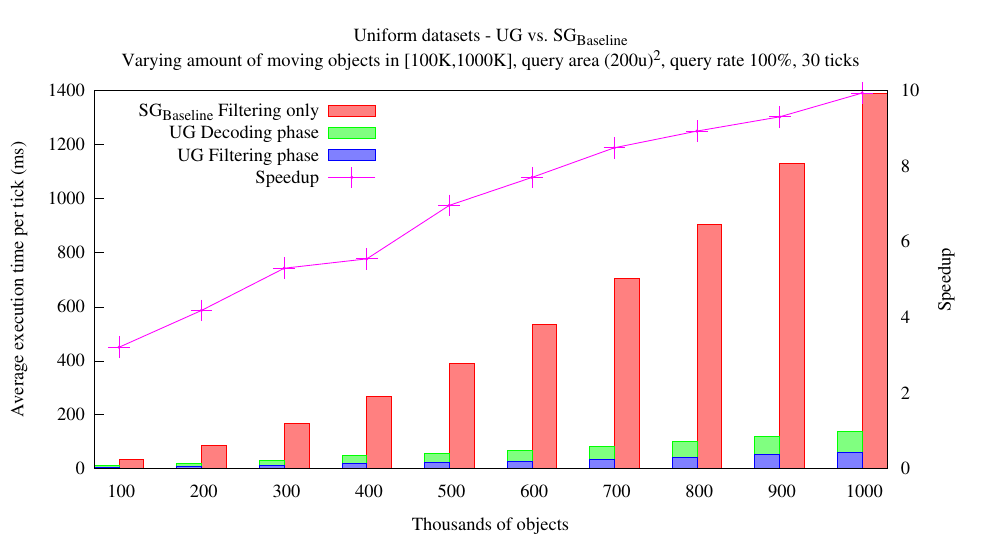}
    \caption{\SQ\ vs. \SQB\ time analysis for uniform datasets by varying the number of objects in [100K,1000K]. 
The histograms of the \SQ\ filtering and decoding phases are time-stacked for clarity purposes.}
    \label{fig:plot_SQ_vs_SQB_Challenge}
\end{figure}


In Figure \ref{fig:plot_SQ_vs_SQB_Challenge}  we can see how \SQ\ outperforms \SQB, even when considering modest workloads. 
For our purposes, it is enough to compare \SQ\ and \SQB\ on a uniform spatial distribution of object/queries.
Note the line representing the speedup obtained by \SQ\ over \SQB\ per single experiment: the \SQB\ performance gets worse when the number of objects (and thus the output size) increases. This indicates that the main bottleneck of \SQB\ is the synchronization mechanisms adopted, which affect negatively the performance when the amount of results increases.

\subsection{Covering subqueries optimization (S2)}
\label{sec:coveringStudy}

The overall goal of this study is to analyze the benefits coming from the covering subqueries optimization described in Section \ref{sec:optimizations}.
We briefly remember that this technique aims to speed up the query processing in three ways: 
first, by reducing the overall amount of containment tests performed by the GPU; 
second, by reducing the amount of results determined at the GPU side, and thus the amount of data the GPU has to send back to the CPU once the filtering and the decoding phases are over; third, by leaving the CPU in charge of expanding the covering subqueries result sets, 
during the same time that the GPU processes the intersecting subqueries.

We focus on \QQ, given its more advanced spatial indexing and considered that, in terms of query covering 
management, \SQ\ carries out the same operations per each subquery covering an 
index cell. We compare two different versions of \QQ : the former, denoted by \textsf{Covering ON}, is the version 
where the covering subquery optimization is exploited. The latter, denoted by \textsf{Covering OFF}, does not exploit the knowledge about the covering queries, thus considering all the subqueries as intersecting.

In our experiments we want to independently focus on two key parameters: spatial distribution \textit{skewness} and \textit{query area}.
The skewness consistently influences \emph{(i)} the \textit{ratio} between covering and intersecting subqueries, and \emph{(ii)} the \textit{weight} of the covering subqueries in terms of the percentage of generated results.
As regards \emph{(i)}, the more the objects tend to gather in specific places, the more \QQ\ refines the grid in those areas, in turn increasing the aforementioned ratio. As regards \emph{(ii)}, the smaller the average  size of the \QQ\ cells is, the higher the probability that a subquery area completely covers a grid cell. Note that \QQ\ materializes dynamically the index cells, and generate smaller cells in correspondence to the spatial regions with higher object density. On these small and highly populated cells the ratio of covering subqueries to intersecting ones gets larger. 
This in turn increases the overall amount of results obtained from the covering subqueries, with positive returns on the performance.
%

Query area is another important factor, because it directly influences the ratio between covering and intersecting subqueries.
Hence, we want to analyze how much this parameter influences the performances, aside from the skewness.

\begin{figure}[h!]
\includegraphics[width=\columnwidth]{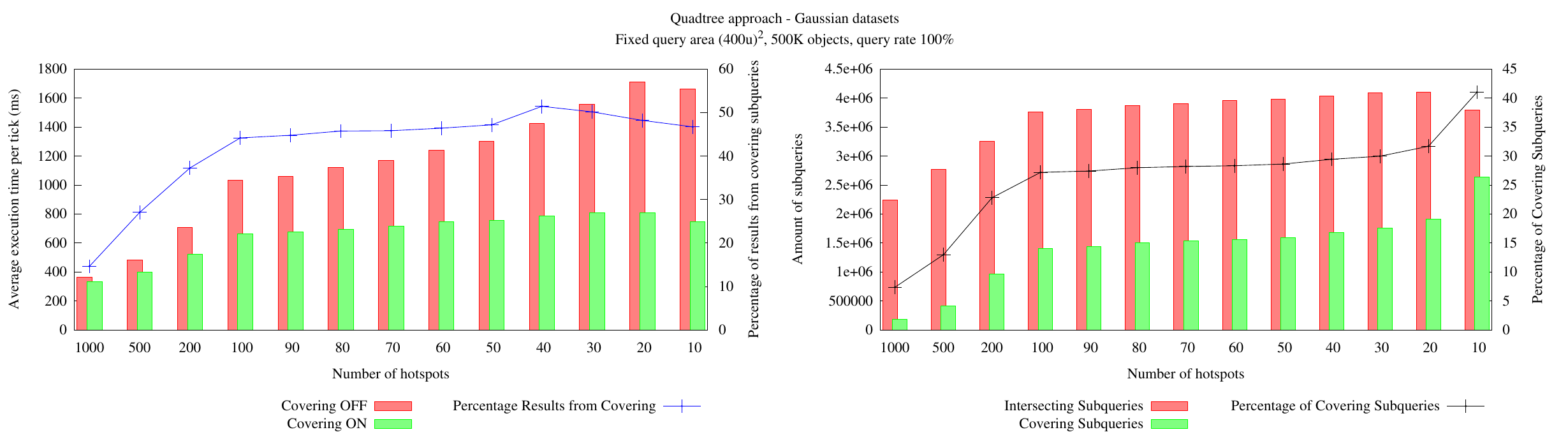} 
\caption{Gaussian datasets, 500K objects, query area $(400u)^2$, varying hotspots in [50,1000], query rate 100\%, Covering ON vs OFF.}
\label{fig:plot_Gaussian_CoveringChallenge}
\end{figure}

In the first batch of experiments we  vary
the skewness of the object spatial distribution in a set of gaussian datasets, by changing the number 
of hotspots in the interval [10,1000], while the amount of objects and the query area are kept fixed at 500K and $(400u)^2$, respectively.
The left plot of Figure \ref{fig:plot_Gaussian_CoveringChallenge} 
shows how the exploitation of the covering subqueries greatly reduces the overall execution time, 
in particular when the skewness increases by reducing the amount of hotspots.
The more the skewness is, the more the percentage of results coming from the covering subqueries is, thus increasing the performance gap between the two versions when the skewness gets larger.
Finally, the right plot of Figure \ref{fig:plot_Gaussian_CoveringChallenge} gives a further explanation of the observed behaviour, and shows that the skewness is directly proportional to (or equivalently, the number of hotspots 
is inversely proportional to) the covering/intersecting ratio.
%

\begin{figure}[h!]
\includegraphics[width=\columnwidth]{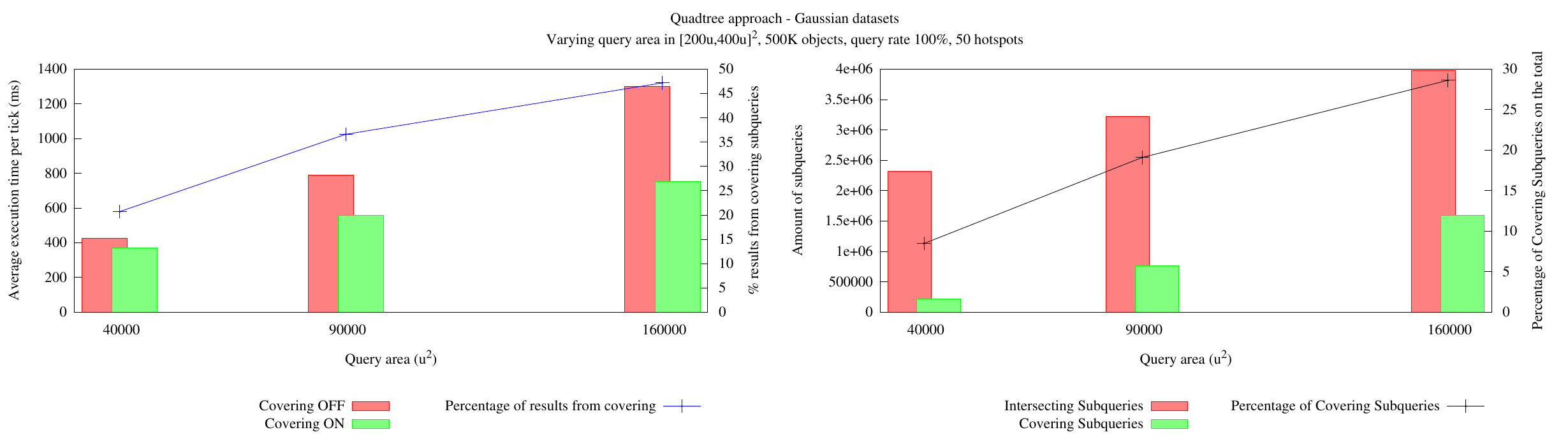} 
\caption{Gaussian datasets with 50 hotspots, 500K objects, query area varied in $[(200u)^2,(400u)^2]$, query rate 100\%, Covering ON vs. OFF.}
\label{fig:plot_GaussianVQA_CoveringChallenge}
\end{figure}

In the second batch of experiments we focus on a gaussian dataset having a fixed amount of 50 hotspots,  thus characterized by a 
moderately skewed distribution. We vary the query area in the $[(200u)^2,(400u)^2]$ range, while 
the amount of objects is kept fixed at 500K.
Figure \ref{fig:plot_GaussianVQA_CoveringChallenge} shows that, when the query area is increased,  
the gap between the two versions of \QQ\ gets larger as well, 
while the covering/intersecting subqueries ratio strictly follows the trend.

In conclusion, we argue that the percentage of results coming from the covering subqueries determines the extent of the advantages possibly coming from this optimization.
This figure is essentially determined by the skewness characterizing the spatial object distribution,
while the query area amplifies or reduces this phenomenon by changing the query results redistribution ratio between the covering and the intersecting subqueries.

\subsection{Task scheduling policy (S3)}
\label{sec:expTaskScheduling}

In this section we investigate how the task scheduling optimization described in Section \ref{sec:optimizations} can substantially improve the overall execution time of \SQ\ and \QQ\ by redistributing more evenly the workload among the GPU streaming multiprocessors.
Besides analyzing the execution times in this study we also collect and study profiling data concerning the containment tests actually carried on by every streaming multiprocessor.


\begin{figure}[ht]
\includegraphics[width=\columnwidth]{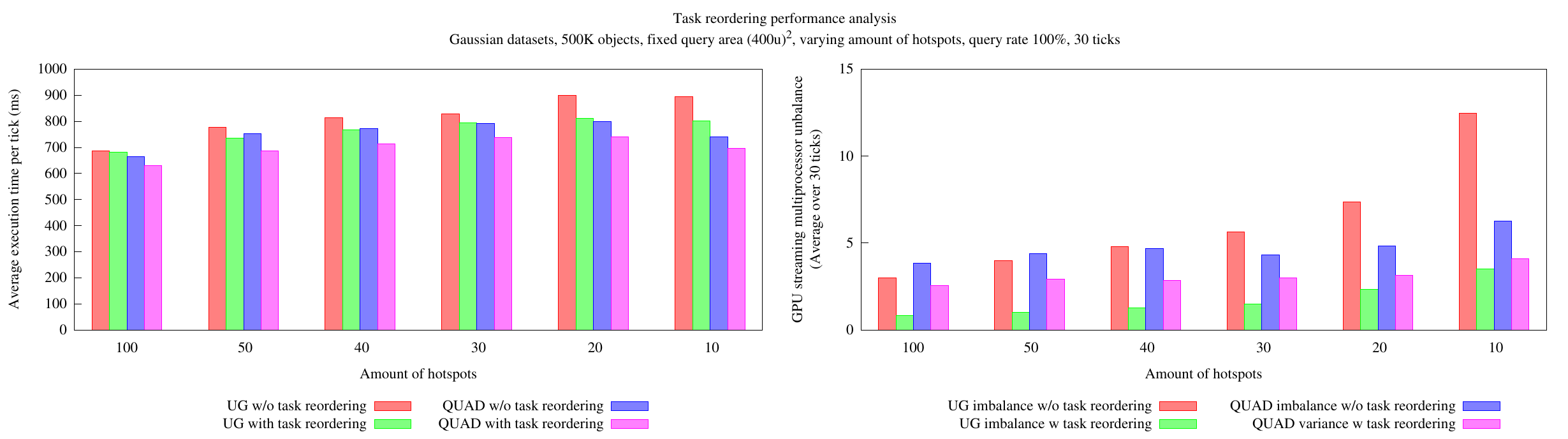} 
\caption{Analysis on the performances and workload redistribution among the GPU streaming multiprocessor with and without the static task list reordering - gaussian datasets, 500K objects, query area $(400u)^2$, query rate 100\%, varying amount of hotspots. The left plot refers to the execution times observed while the right one refers to the profiling data collected during the filtering phase (decoding phase data is analogous).}
\label{fig:plot_gaussian_scheduling}
\end{figure}

The left plot in Figure \ref{fig:plot_gaussian_scheduling} shows that the reordering always reduces the execution times of \SQ\ and \QQ. 
Note that in the case of \SQ\ the reordering entails higher performance improvements when the skewness gets large, while \QQ\ improvements are always moderate due to the ability of its underlying spatial indexing to dynamically produce tasks/blocks of similar weights. 
%

We study in depth this behaviour by profiling the execution of the GPU SMs. In particular, we collect the per-tick amount of containment tests performed by each SM, and check whether the observed performance trends are reflected in workload unbalances among the SMs. 
In this context we define the \textit{SM imbalance} measure during a single time tick as the \textit{relative difference} between the highest amount of containment tests performed by a single GPU streaming multiprocessor with the lowest amount performed by an another SM. Then, we compute the average of this measure across the ticks in order to characterize the average workload unbalance. From the right plot in Figure \ref{fig:plot_gaussian_scheduling}, we see how the trend of the \textit{SM imbalance} follows the trend of the execution time, observed in the left plot of the same figure for both \QQ\ and  \SQ.
Hereinafter, all the experiments will be conducted by using the task list reordering optimization.

\subsection{Data skewness and optimal grid coarseness for \SQ\ (S4)}
\label{sec:expSkew}

The following set of experiments aims to show that the best coarseness used for the uniform grid onto which the \SQ\ spatial indexing relies depends on the specificities of the spatial distribution characterizing the objects at each tick.
We therefore aim to show how it is not possible to find a unique optimal MBR \emph{split factor} (i.e., the number of columns/rows in which the MBR is decomposed) that holds for all the datasets. Even more, we show how each pipeline phase has its own optimal MBR split factor given a single dataset.

We first focus on a gaussian dataset characterized by a mild skewness (150 hotspots), and study how the \SQ\ performance 
changes (during a single time tick) by varying the split factor. 
We decompose the overall execution time in three macro phases, namely the \emph{indexing}, \emph{filtering}, and \emph{decoding} phases, where the former includes the \textit{index creation} and the \textit{object/query indexing} phases (Sections \ref{par:indexDetOverviewSQ} and \ref{par:indexDetOverviewQuad}).
%
%
\begin{figure}[!ht]
    \begin{minipage}{0.5\columnwidth}
      \includegraphics[width=\columnwidth]{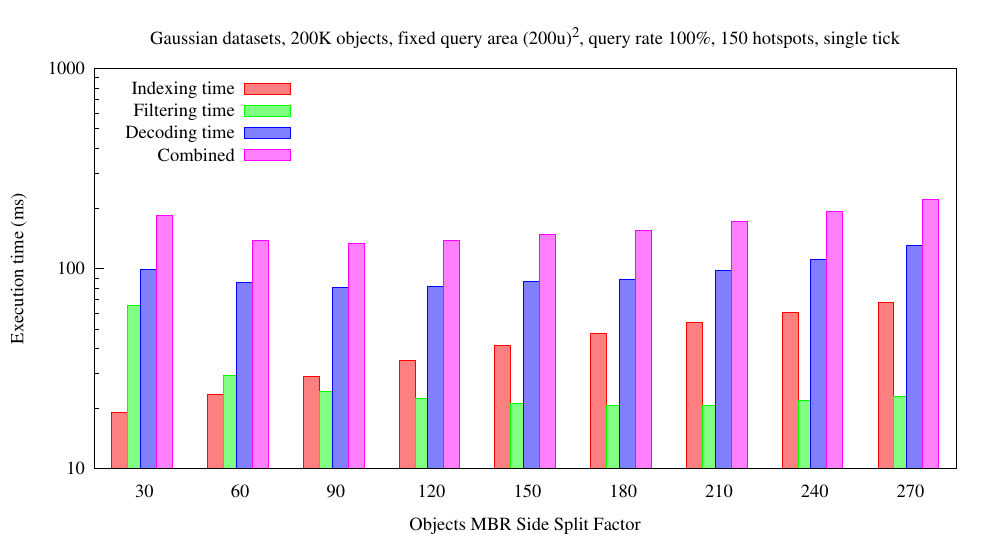} 
\caption{Gaussian dataset, 200K objects, query area $(400u)^2$, query rate 100\%, 150 hotspots. The optimal value is equal to 110. Logscale on the y-axis is conveniently used to magnify small differences in the filtering execution times.}
\label{fig:plot_gaussian1}
    \end{minipage}
    \hspace{1em}
    \begin{minipage}{0.5\columnwidth}
	\includegraphics[width=\columnwidth]{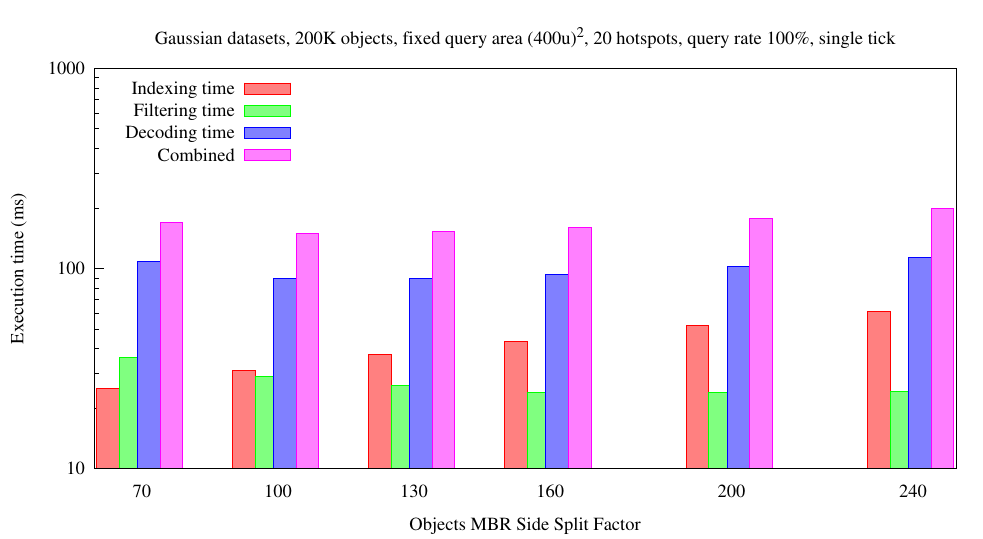} 
\caption{Gaussian dataset, 200K objects, query area $(400u)^2$, query rate 100\%, 20 hotspots. The optimal value is equal to 95. Logscale on the y-axis is conveniently used to magnify small differences in the filtering execution times.}
\label{fig:plot_gaussian2}
    \end{minipage}
\end{figure}
Figure \ref{fig:plot_gaussian1} shows how the \textit{indexing} time gets larger when we increase the MBR side split factor, as expected according to the costs described in Sections \ref{par:indexDetOverviewSQ} and\ref{par:indexDetOverviewQuad}, due to the increase in the amount of subqueries created.
As for the \textit{filtering} phase, we see how the execution times trend exhibit a minimum. 
In general, too small split factors imply very large cells, few or none covering subqueries and potentially large workload unbalances, depending on the skewness. On the other hand, when the split factor is too large too many subqueries may be created, as well as there could be many active cells with small amounts of objects: this may represent a serious pitfall for an efficient usage of the memory/computational resources of a GPU.  
The same reasonings hold for the \textit{decoding} phase as well.
The overall execution time (the \textit{Combined} bar in the plots) has a minimum obtained by using an optimal split factor equal to 110.

%

We replicate the same set of experiments with a consistently skewed gaussian dataset (Figure \ref{fig:plot_gaussian2}). The trends observed in 
Figure \ref{fig:plot_gaussian1} are confirmed, although the \SQ\ optimal split factor value is different (95) due to the different dataset characteristics. This confirms that datasets having different spatial properties require the materialization of grids having different spatial characteristics in order to achieve the best possible performance.
%
%

\subsection{Data skewness and optimal cell size for \QQ\ (S5)}
\label{sec:SkewQQ}

As already described in Section \ref{par:indexDetOverviewQuad}, in \QQ\ the size of the various cells is determined dynamically on the basis of data distribution and according to $th_{quad}$, a threshold determining whether a quadtree quadrant needs to be split at the next level according to the amounts of objects it contains at the time tick the quadtree is computed. Thus, we need to determine an optimal value for $th_{quad}$ which hopefully does not change for datasets characterized by different object spatial distribution or query areas. 
%

\begin{figure}[h]
    \begin{minipage}{0.5\columnwidth}
      \includegraphics[width=\columnwidth]{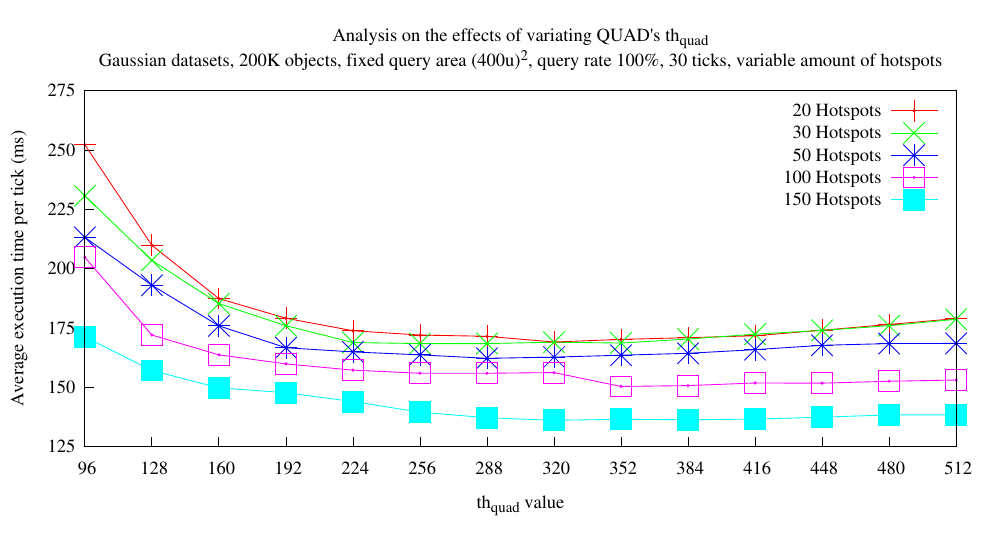} 
\caption{Performance analysis with different \QQ\ $th_{quad}$ values when varying the skewness degree.}
\label{fig:plot_variateQUAD_Hotspots}
    \end{minipage}
    \hspace{1em}
    \begin{minipage}{0.5\columnwidth}
	\includegraphics[width=\columnwidth]{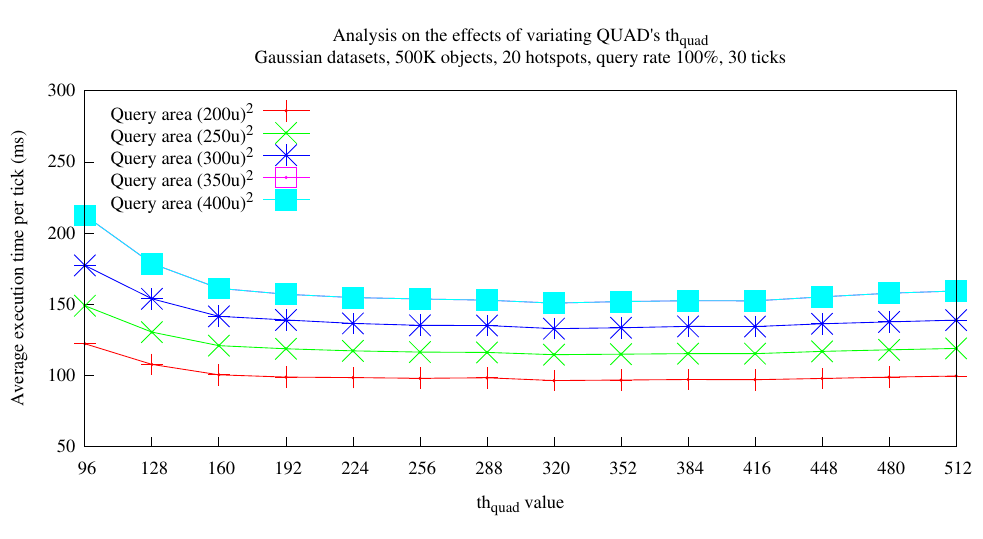} 
\caption{Performance analysis with different \QQ\ $th_{quad}$ values when considering different query areas.}
\label{fig:plot_variateQUAD_VQA}
    \end{minipage}
\end{figure}

Figure \ref{fig:plot_variateQUAD_Hotspots} refers to a set of experiments in which, 
given a set of gaussian datasets with different amounts of hotspots, $th_{quad}$ is varied in order to observe how \QQ\ behaves. 
The amount of objects characterizing each dataset is set to 200K, which allows the exploration of an extensive range of $th_{quad}$ values: lower $th_{quad}$ values increase the amount of resulting subqueries, which in turn increase the amount of GPU memory required for storing the subqueries.
%
 Figure \ref{fig:plot_variateQUAD_VQA} refers to a similar set of experiments in which the query area is varied among the datasets while the other characteristics are kept fixed.

In general we see how \QQ\ is resilient to dataset changes thanks to its low sensitivity with respect to $th_{quad}$, allowing an easy tuning of the system. Moreover, the search for an optimal $th_{quad}$ is not so crucial, given the stability exhibited by \QQ\ for an ample interval of values. Increasing the query area has just the effect of increasing the execution times, while the trend remains the same for all the curves.
%
Considering the results obtained above, in the experiments that follow we set $th_{quad} = 384$.

\subsection{Impact of spatial distribution skewness on the performance (S6)}
%
\label{par:expSkewPart1}

In this study we want to observe how \SQ\ and \QQ\ perform when varying the skewness degree by considering a set of gaussian datasets having different amounts of hotspots. In the experiments that follow we 
keep fixed the amount of objects (500K), the query area ($400u^2$) and the query rate ($100\%$), whereas we vary the amount of hotspots in the $[10,200]$ interval. For \SQ\ and \QQ\ we exploit all the optimizations, including the oracle used by \SQ\ (even though unusable in a practical setting).

Figure \ref{fig:plot_Gaussian_HOTSPOT_Timings} shows that \SQ\ and \QQ\ have similar performances until the skewness becomes consistent, i.e., the amount of hotspots gets below 20. This is confirmed by
the fact that \QQ\ is able to maintain stable and consistent speedups with respect to \CPU, even in presence of extremely skewed distributions, while \SQ\ slightly degrades.

\begin{figure}[h]
    \begin{minipage}{0.5\columnwidth}
      \includegraphics[width=\columnwidth]{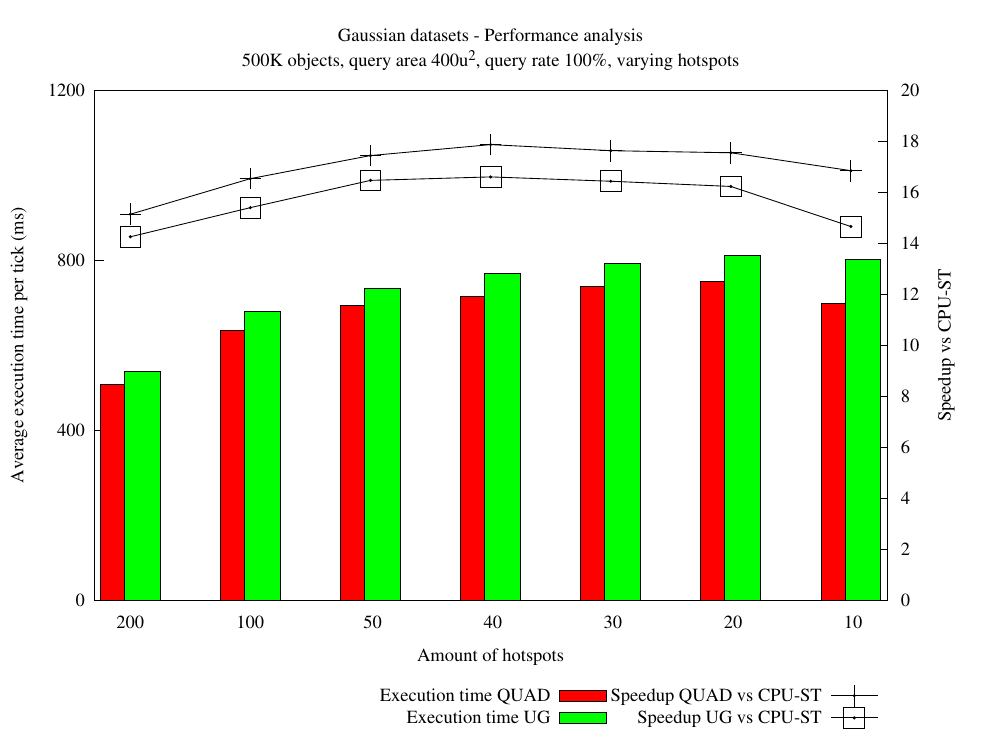}
      \caption{Gaussian datasets, 500K objects, query area $(400u)^2$, amount of hotspots varied in
[10,200], average running times per tick and speedup with respect to \CPU.}
\label{fig:plot_Gaussian_HOTSPOT_Timings}
    \end{minipage}
    \hspace{1em}
    \begin{minipage}{0.5\columnwidth}
	\includegraphics[width=\columnwidth]{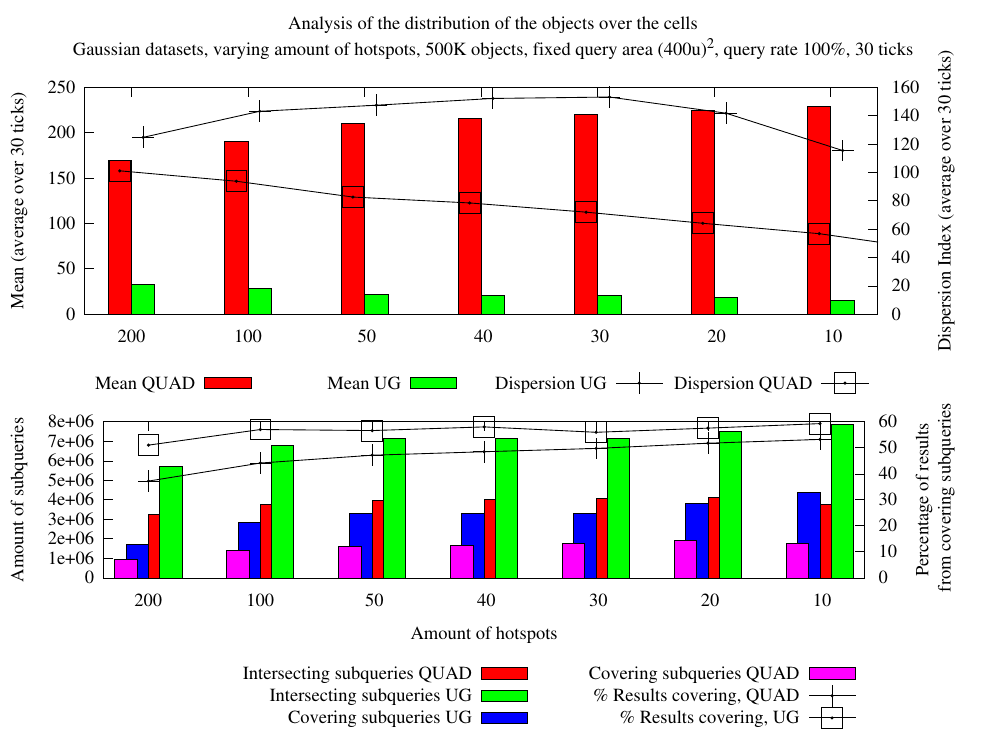} 
	\caption{Gaussian datasets, 500K objects, query area $(400u)^2$, amount of hotspots varied in
[10,200], mean and dispersion index over the grid active cells.}
	\label{fig:plot_Gaussian_HOTSPOT_Stats}
    \end{minipage}
\end{figure}

We try to explain the observed performances in terms of the ability of \SQ\ and \QQ\ in 
redistributing the objects among the grid cells. To this end, we compute the \textit{mean} and \textit{variance} of the amount of objects in each grid active cell and the associated \textit{dispersion index} ($D = \sigma^2 / \mu$) characterizing the distribution of the objects over the active cells.
In general it is expected that, the finer a grid is, the lower the resulting mean and dispersion index are, although these figures are heavily influenced by the skewness characterizing the dataset. 
%
The mean and the dispersion indices obtained by \SQ\ and \QQ\ are shown in the top plot of Figure \ref{fig:plot_Gaussian_HOTSPOT_Stats}. 
\SQ\ always yields remarkably higher dispersion indices and lower means than the \QQ\ ones. 
The very low means observed for \SQ\ depend on the very fine uniform grid exploited, needed to avoid 
heavy populated cells which would entail very expensive tasks to be execute by a single GPU SM.
Indeed, the ability of \QQ\ in properly redistributing workloads associated with objects living in densely populated regions is confirmed by the overall amounts of (intersecting/covering) subqueries produced (Figure \ref{fig:plot_Gaussian_HOTSPOT_Stats}, bottom graph) - amounts which are remarkably lower than the ones obtained by \SQ. 
For example, the amount of subqueries obtained when analyzing a very skewed dataset (such as the gaussian one with 10 hotspots) is about 12 millions for \SQ, and approximatively half for \QQ. Even if the size of the \SQ\ cells is very small, and thus the probability that a subqueries ``covers'' a grid cell gets large, the same plot shows that the proportion of results coming from covering subqueries in \QQ\ are almost on a par with the one obtained by \SQ.
%
Finally, the remarkable smaller count of subqueries allows \QQ\ to have lower GPU memory requirements 
than \SQ\ when generating and computing the subqueries.

\subsection{Performance analysis for different spatial distributions, amount of objects, and query areas (S7)}

In this study we analyze the performances of \SQ\ and \QQ\ with datasets characterized by different spatial distributions. In the experiments that follow we also vary the amount of objects and the query areas. 
For \SQ\ and \QQ\ we exploit all the optimizations.
The goal is to show how \QQ\ is generally able to outperform \SQ, even if the latter relies 
on an expensive, and thus unfeasible, performance profiling (oracle) in order to select the best possible uniform grid coarseness for any dataset. 


\paragraph*{\textbf{Variable amount of moving objects.}}
\label{par:expVE}

\begin{figure}[!h]
    \begin{minipage}{0.5\columnwidth}
      \includegraphics[width=\columnwidth]{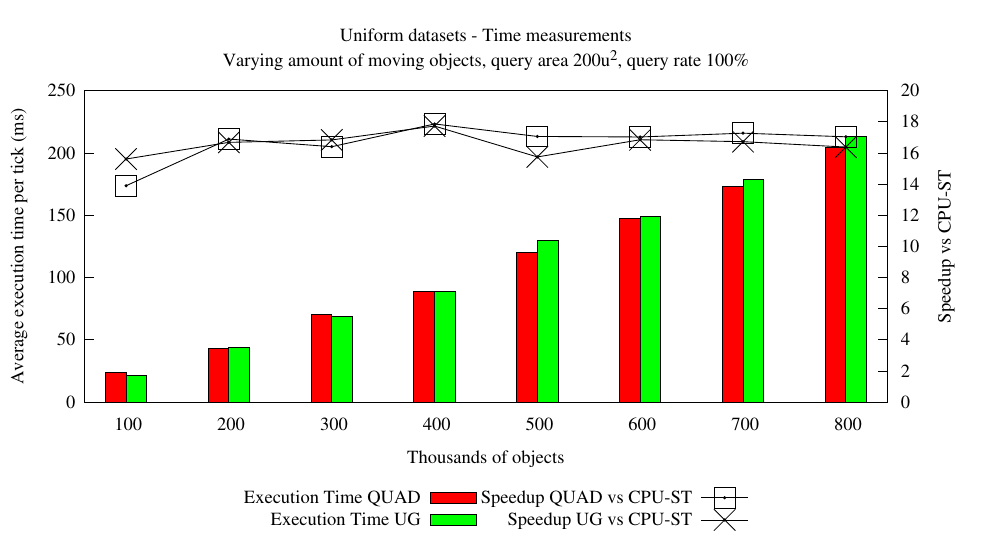}
     \\
     \includegraphics[width=\columnwidth]{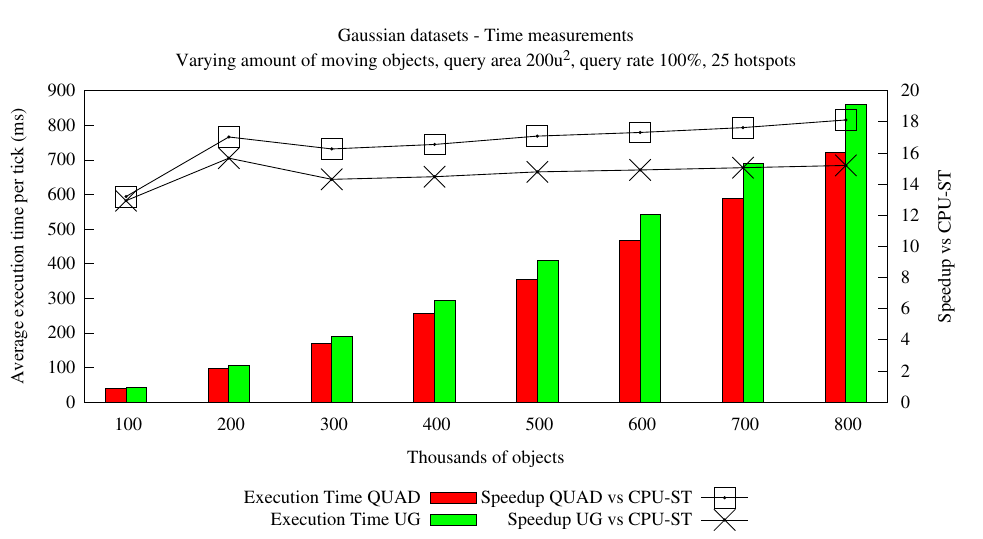}
     \\
     \includegraphics[width=\columnwidth]{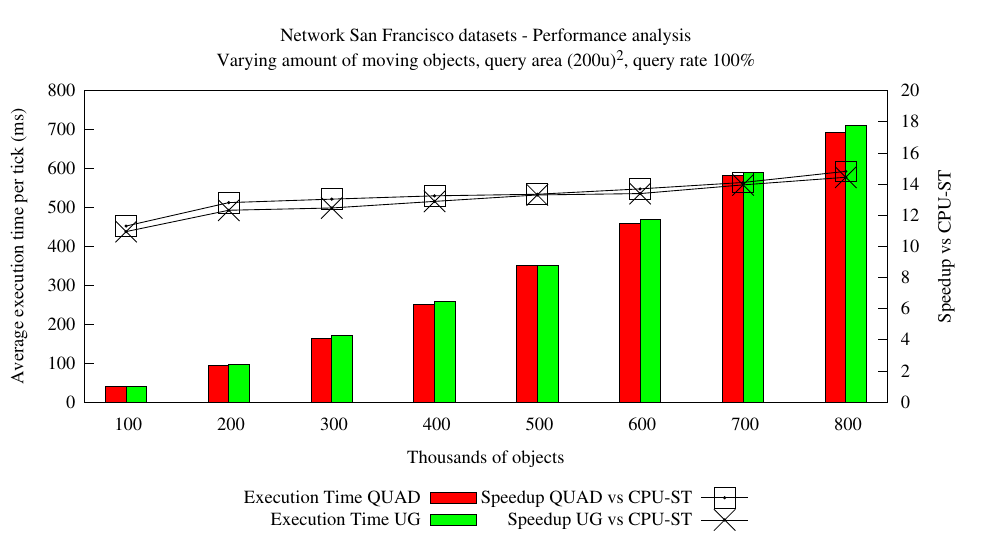}
\caption{Varying the number of objects: average running time per tick and speedup versus \CPU. 
From top to bottom: uniform datasets, gaussian datasets with 25 hotspots, and San Francisco Network datasets. 
}
\label{fig:plot_varying_objects}
    \end{minipage}
    \hspace{0.5em}
    \begin{minipage}{0.5\columnwidth}
    \includegraphics[width=\columnwidth]{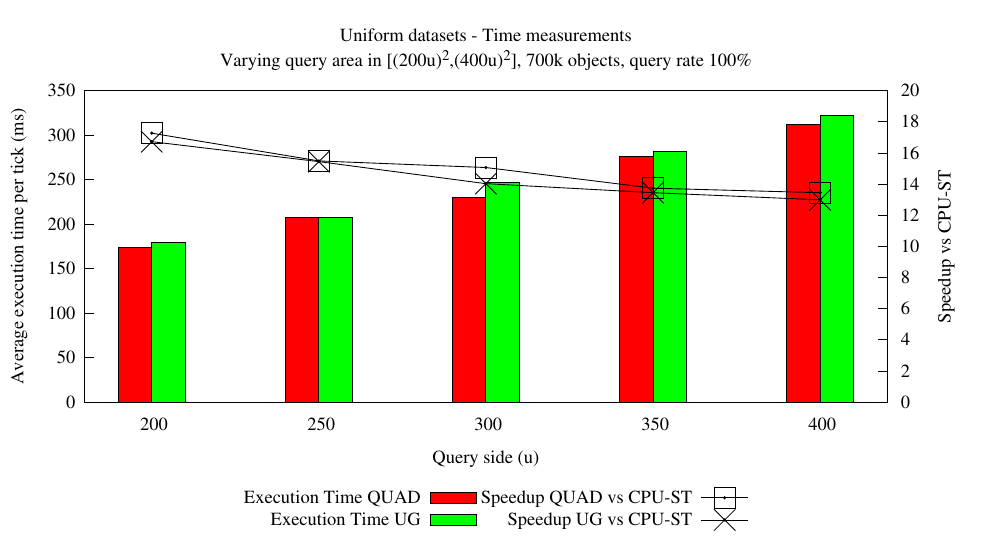}
    \\
    \includegraphics[width=\columnwidth]{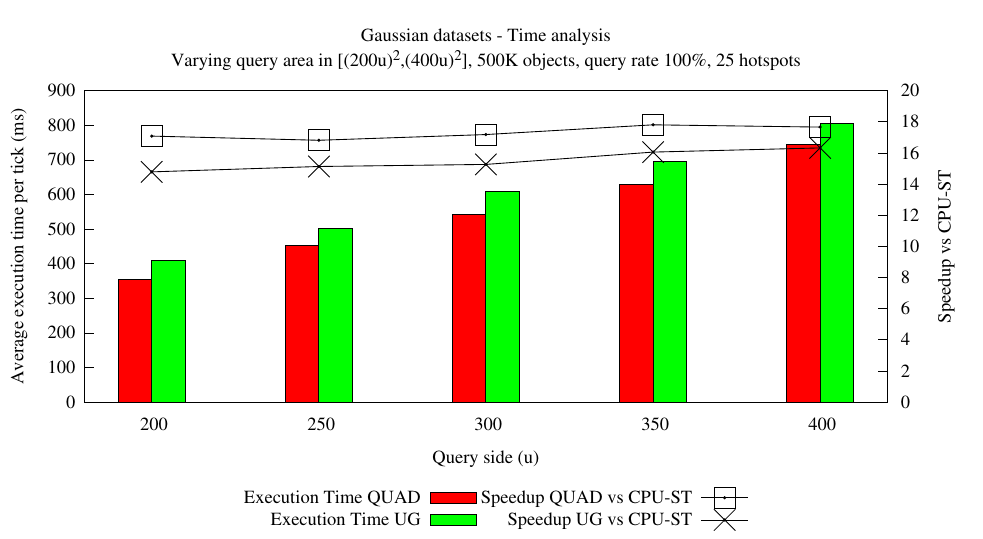}
    \\
    \includegraphics[width=\columnwidth]{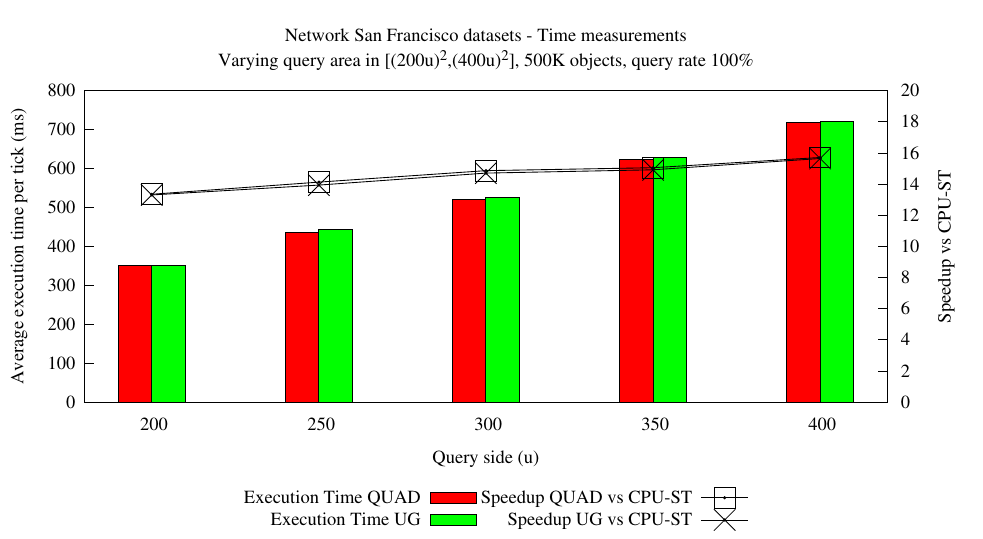}
\caption{Varying the query area: average running time per tick and speedup versus \CPU. 
From top to bottom: uniform datasets, gaussian datasets with 25 hotspots, and San Francisco Network datasets.}
\label{fig:plot_varying_qarea}
    \end{minipage}
\end{figure}

In these experiments we exploit three types of datasets - uniform, gaussian and network-based -
where we keep fixed the query rate and the query area at $100\%$ and $(200u)^2$, respectively.
For the gaussian datasets, the number of hotspots is fixed to 25.
Figure \ref{fig:plot_varying_objects} shows the execution times and the speedups versus \CPU\ 
for these three types of datasets when varying the amount of objects.

When uniform distributions are considered, \SQ\ and \QQ\ exhibit similar performances as expected.
%
When gaussian datasets are considered, 
\SQ\ and \QQ\ exhibit stable and consistent performances, with \QQ\ performing noticeably better than \SQ. 
%
Finally, on network datasets \SQ\ and \QQ\ perform closely since these datasets are characterized by a very limited skeweness, with \QQ\ performing slightly better.


\paragraph*{\textbf{Variable query area.}} 
\label{par:expVQA}

In this batch of experiments, whose results are shown in Figure \ref{fig:plot_varying_qarea},  
we vary the query area. All the queries are equally sized during a single experiment, while the amount of objects is fixed (700K for uniform, 500K for gaussian and network), as well as the query rate ($100\%$) and the number of hotspots (25) for the gaussian datasets.
With uniform distributions \SQ\ and \QQ\ again perform similarly. With gaussian distributions, \SQ\ and \QQ\ maintain consistent performances, even though the advantage of \QQ\ over \SQ\ still holds. Finally, with network datasets \SQ\ and \QQ\ are almost on par, as already observed in the first batch of experiments, with a very slight advantage for \QQ.

\paragraph*{\textbf{Variable amount of objects and variable query area.}} 
\label{par:expVEVQA}

In these experiments we consider different amounts of objects, each one issuing a query 
whose area is decided independently of the other objects and according to
%
a uniform distribution in the  
$[(200u)^2,(400u)^2]$ range.

For this experiments, we again consider uniform, gaussian (25 hotspots) and network datasets. In all the cases considered, the query rate is fixed at 100\%.

\begin{figure}[!h]
    \begin{minipage}{0.5\columnwidth}
      \includegraphics[width=\columnwidth]{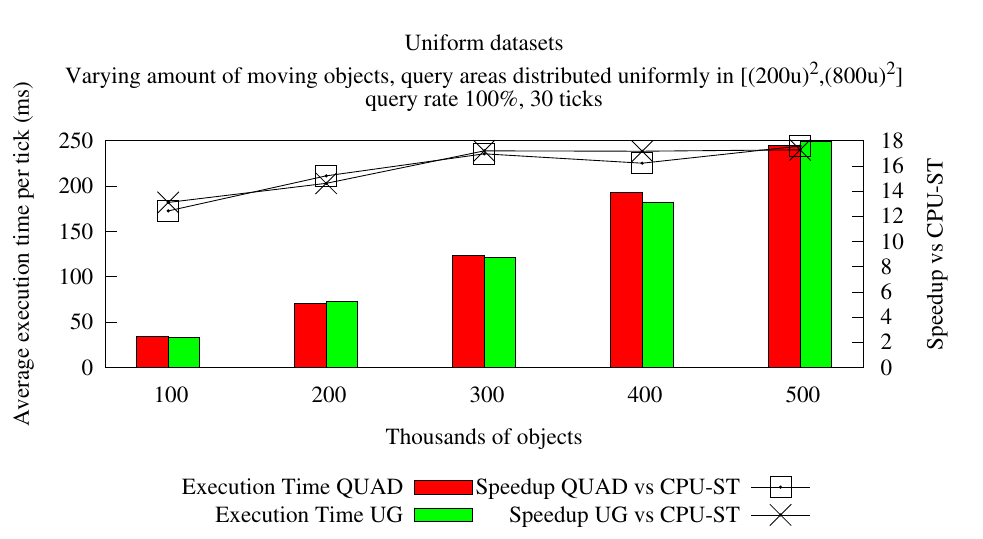}
    \end{minipage}
    \hspace{0.5em}
    \begin{minipage}{0.5\columnwidth}
	\includegraphics[width=\columnwidth]{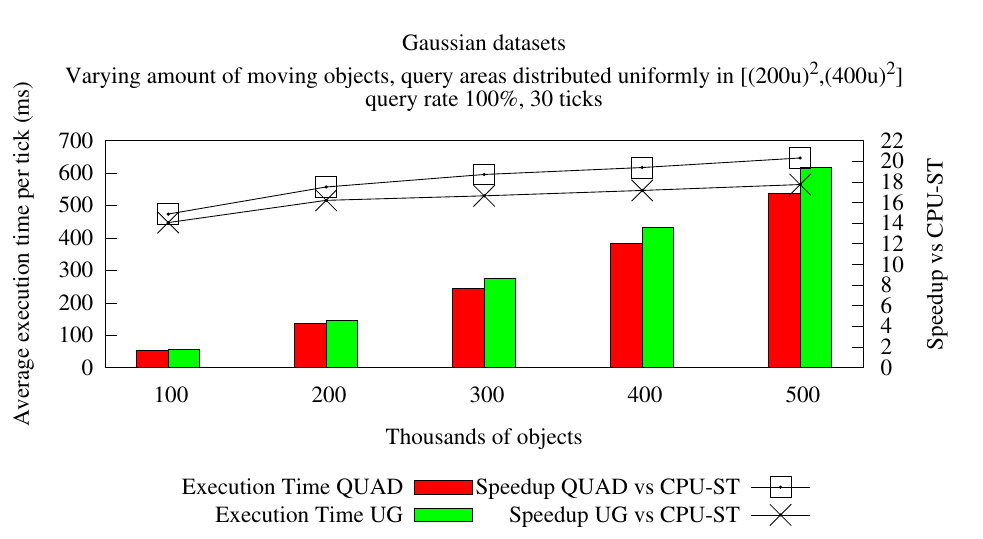}
    \end{minipage}
    \begin{center}
\includegraphics[width=.5\columnwidth]{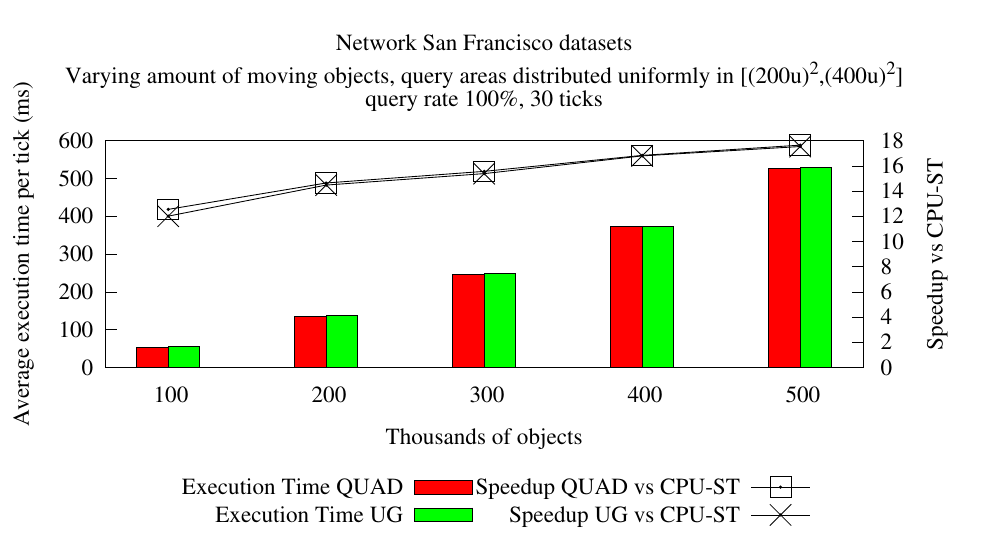}
\end{center}
\caption{Variably sized queries: average running time per tick and speedup versus \CPU.}
\label{fig:plot_network_VEVQA_timings}
\end{figure}

The average execution times per tick measured in  those tests are summarized in Figure \ref{fig:plot_network_VEVQA_timings}. We observe that for that in the first and last case the two algorithms perform likewise, whereas \QQ{} outperform \SQ{} in the second case. Thus, even when the same dataset contains queries having different and uniformly distributed area, we can observe the same trends we highlighted in Study S7 for Figure \ref{fig:plot_varying_objects}.

\subsection{Bandwidth analysis (S8)}

From lemma \ref{lemma:timely2} in Section \ref{sec: qos considerations} we have that the system bandwidth $\beta$, expressed as the amount of queries processed per time unit (indeed, we use the second), is one of the crucial parameters in order to determine a suitable tick duration $\Delta t$, along with a given latency requirement $\lambda$ and a maximum amount of queries which may occur during $\Delta t$, $Q_{max}$. Since $\lambda$ and $Q_{max}$ are fixed, the crucial parameter becomes $\beta$.

Consequently, the goal of this study is to observe how the bandwidth $\beta$ of a given system reacts to a set of dominant factors, such as the \emph{amount} of moving objects, the \emph{query rate} (i.e., the factor of moving objects issuing a query during a time unit), the \emph{query area}, and the \emph{skewness}. \QQ\ will be used to conduct all the experiments.

Figure \ref{fig:plot_bandwidth_VE_VS} presents the results of the first batch of experiments, where we test the behaviour of $\beta$ with respect to different amounts of objects and degrees of skewness. In order to conduct these experiments a set of gaussian datasets were considered. From the Figure we see how the system bandwidth decreases whenever the amount of objects or skewness degree (ranging from uniform-like distributions - 10000 hotspots - to moderately skewed ones - 25 hotspots) increase, due to an increase in the overall amount of containment tests and results that 
the underlying system must handle in the same time unit. We also observe how highly skewed datasets produce
the most notable negative consequences on the performance, thus requiring particular care.

\begin{figure}[!h]
\begin{minipage}{0.5\columnwidth}
\includegraphics[width=\columnwidth]{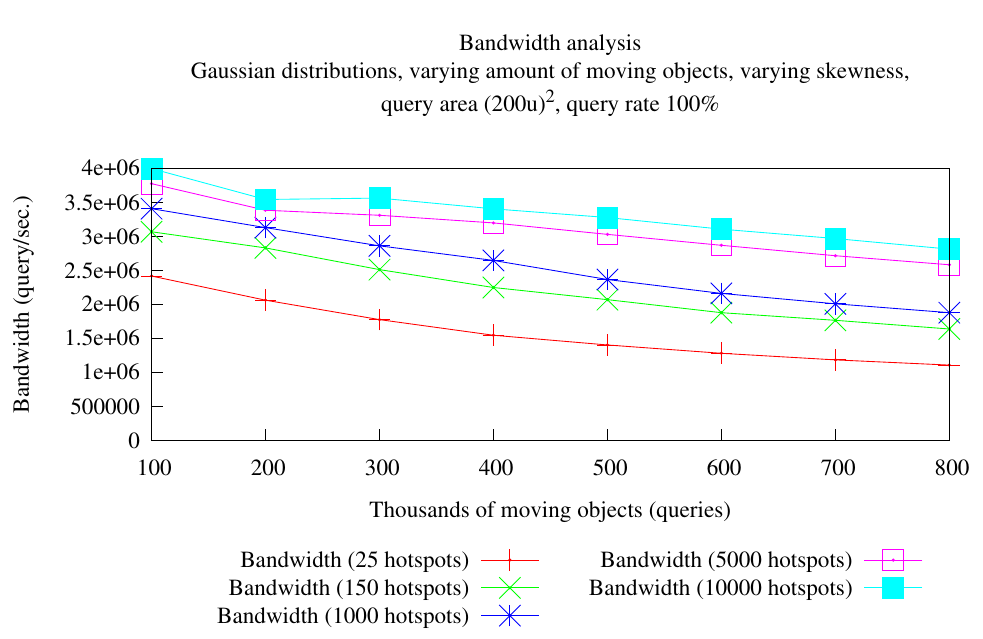}
\caption{System bandwidth analysis when varying the amount of moving objects or the dataset skewness. 
}
\label{fig:plot_bandwidth_VE_VS}
\end{minipage}
\hspace{1em}
\begin{minipage}{0.5\columnwidth}
\includegraphics[width=\columnwidth]{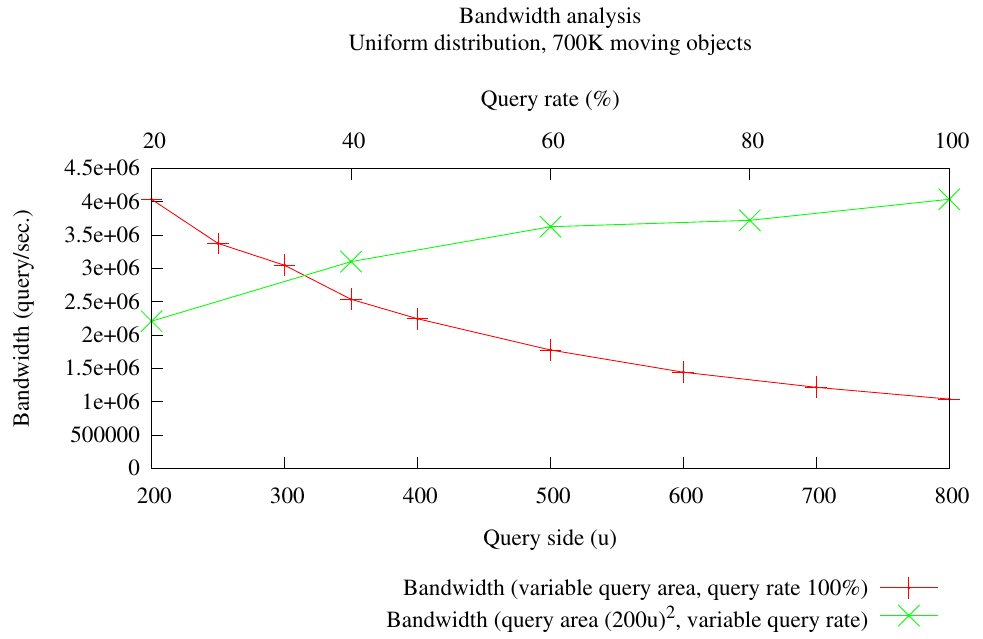}
\caption{System bandwidth analysis when varying the query area or the query rate. 
}
\label{fig:plot_bandwidth_VQA_VQR}
\end{minipage}
\end{figure}

Figure \ref{fig:plot_bandwidth_VQA_VQR} reports the results related to the second batch of experiments, where we analyze the behavior of $\beta$ with respect to the query rate (we observe it corresponds to changing $Q_{max}$) and the query area. 
To this end we consider a set of uniformly distributed datasets characterized by different query rates and query areas.
We observe that increasing the query area decreases the system bandwidth, due to a quadratic increase in the amounts of containment tests and results the system must handle per time unit.
As regards  query rate, we see how the bandwidth increases whenever this parameter is increased. 
Even if this phenomenon may seem counter-intuitive at first, we observe that the action of increasing the query rate has the effect of increasing linearly (and not quadratically) the amount of containment tests and results produced.
These increases, however, are compensated by an increased efficiency of the system. That is, the GPU resources are 
more utilized and thus better exploited, and this in turn increases the overall bandwidth. We note that we observed the same behavior for all of the spatial distribution we considered and for a wide range of choices of parameters.

\section{Related work}
\label{sec:relwork}		

The idea of using the abundant and cheap computational power offered by GPUs in order
to boost spatial joins computations dates back to the era when GPUs did not offer real
general purpose computing capabilities and the use of OpenGL or DirectX APIs were
needed in order to have access to their resources \cite{sun2003hardware}.
The potential of GPUs was clearly understood, but the scenarios, the problems, and
the approaches considered at that time were quite different from those covered in this work.

As pointed out in an extensive review \cite{sowell2013experimental}, the need for
managing continuously incoming and evolving spatial data can be addressed by using simple, light-weight and, in many cases, throwaway data structures.
However, it is crucial that data structures and algorithms contend effectively with skewed data and avoid redundant spatial joins and bad workload distributions as much as possible. In such review the authors claims Synchronous Traversal to be the top performer across several datasets. However, when considering its multi-threaded variant the speedup yielded by Synchronous Traversal (up to $6\times$ with 12 cores) does not follow a linear behaviour as the amount of cores increases, due to inter-thread dependencies and challenges related to finding a proper way to partition the workload among the cores. Indeed, these are serious challenges which we try to tackle in the present work.

Recent studies \cite{vsidlauskas2012parallel, sid11} show how uniform
grid-based solutions are particularly attractive when managing continuously incoming
and evolving spatial data in main-memory multi-core settings. Even if these works do not consider
the architectural peculiarities and limitations of the GPUs, they nonetheless highlight how regular grids,
in general, represent a natural basis for GPU parallelization strategies thanks to their structural regularity \cite{lauterbach2009fast}.

Other works consider the problem of building R-Trees (and possible derivations) from scratch \cite{sowell2013experimental}, even recurring to hybrid approaches based on the combined use of CPU and GPU for range queries computation \cite{luo2012parallel, yu2011parallel}.
While the goals of some of these works are different with respect from the ones of the present work, it is interesting to notice how solving certain problems is particularly recurrent and challenging when processing massive spatial data by using massively parallel architectures, i.e., (i) finding a solution able to distribute the workload in the most uniform way (depending also on the spatial data distribution), (ii) arranging spatial data by using proper GPU-friendly light-weight regular data structures which allow to use the GPUs features effectively, and (iii) exploiting spatial locality as much as possible.

In a recent work~\cite{karras12} in the context of collision detection in computer graphics, the 
the author focuses on extremely fast and efficient GPU-based construction and lookup algorithms for binary radix trees when performing real-time collision detection between 3D objects (thus addressing a similar problem with respect to the one addressed in this work). While the algorithms proposed are able to handle elegantly the skewness possibly characterizing the data, the work doesn't consider the problems of detecting and having to write out huge amounts of results in very short time intervals. The first problem increases remarkably the amount of lookups and traversals in the trees needed to compute a query, while the second problem entails serious issues mainly related to memory throughput maximization and how to avoid memory access contention.
%

%
Previous works~\cite{lauterbach2009fast} already pointed out the advantages of using point-region quadtrees for partitioning a low dimensional space when using the GPUs, thanks to the direct relationship between the quadtrees structural properties and the Morton codes~\cite[Ch.\ 2]{har2011geometric}\cite{raman2008converting}. Indeed, quadtrees fit extremely well the GPUs architectural features, hence allowing to devise fast and efficient algorithms.

We are unaware of existing studies tackling the problem of repeatedly computing sets of range queries over continuously moving objects by means of an hybrid CPU/GPU approach. The most closely related work is focused on point-in-polygon joins \cite{zhang2012pointInPolygon}. 
This work considers scenarios characterized by massive amounts (possibly millions) of static entities, represented by points, and sets of polygons (in the order of few thousands) potentially covering the entities: the goal is to speed up the point-in-polygon tests by exploiting the computational power of GPUs using a novel approach stemming from the traditional \textit{filtering and refinement} schema. This work is similar to the present work in that it exploits point-region quadtrees in order to index the points, and thus improve the workload distribution when determining which point-in-polygon tests have to be computed. However, relevant differences separate the two works: (i) the entities are static, (ii) joins are computed between huge amounts of points and limited sets of polygons (instead of huge sets of range queries issued by the same entities) and (iii) polygons are indexed (through their bounding boxes) by means of a uniform grid and subsequently paired (for the final refinement phase) with sets of potentially overlapped points. This is in turn achieved by indexing each quadtree quadrant minimum bounding rectangle (enclosing the quadrant points) by means of the same uniform grid.
In light of this, we deem that a comparison with our proposals would be not interesting, since the other work tackles different scenarios and consequently does not consider a set of relevant issues having far-reaching consequences, above all the issues related to the continuous management of huge sets of queries and results.

\section{Conclusions}
\label{sec:conclusions}

In this paper we present a novel method, relying on scalable grid-based spatial indices and on ad-hoc data structures, 
capable of computing massive amounts of repeated range queries over continuously moving objects. Since the range 
queries are repeatedly issued by the moving objects, also the queries continuously change their issuing 
location over time. The solution proposed is the first known to exploit GPUs in order to speed-up the query processing and, at the same time, effectively contend with skewed spatial distributions of objects and queries. To achieve these goals, we introduce an hybrid CPU/GPU query processing pipeline. We leverage bitmap-based intermediate data structures as well as quadtree-based spatial index, to exploit effectively the GPUs architectural features.

We extensively test our solution to study its sensitivity to parameters and data distribution. In such experiments we prove several arguments, above all that our solution (i) outperforms a baseline GPU approach thanks to the introduction of lock-free bitmaps used to handle intermediate query results and (ii) achieves a significant performance gain with respect to the best known CPU sequential competitor. We also show that (iii) our proposal is able to outperform an advanced GPU-based uniform grid-based solution, since solutions relying on such indices are unable to fully capture the data skewness - an ability which is needed in order to yield much more uniform workload distributions on GPUs.

As a future direction of research we plan to reuse, at least partly, the hybrid CPU/GPU processing pipeline introduced in this work in order to tackle the computation of repeated \emph{k-nearest neighbours} queries issued by continuously moving objects. Indeed, this class of queries has vast amounts of potential applications, therefore speeding up significantly their processing would represent an important contribution.

\bibliographystyle{unsrt}
\bibliography{gpujoin}

\begin{thebibliography}{10}

\bibitem{Satish2009}
Nadathur Satish, Mark Harris, and Michael Garland.
\newblock Designing efficient sorting algorithms for manycore gpus.
\newblock In {\em Proceedings of the 2009 IEEE International Symposium on
  Parallel\&Distributed Processing}, IPDPS '09, pages 1--10, Washington, DC,
  USA, 2009. IEEE Computer Society.

\bibitem{lee2010debunking}
Victor~W. Lee, Changkyu Kim, Jatin Chhugani, Michael Deisher, Daehyun Kim,
  Anthony~D. Nguyen, Nadathur Satish, Mikhail Smelyanskiy, Srinivas Chennupaty,
  Per Hammarlund, Ronak Singhal, and Pradeep Dubey.
\newblock Debunking the 100x gpu vs. cpu myth: An evaluation of throughput
  computing on cpu and gpu.
\newblock In {\em Proceedings of the 37th Annual International Symposium on
  Computer Architecture}, ISCA '10, pages 451--460, New York, NY, USA, 2010.
  ACM.

\bibitem{lettich2014gpu}
Claudio Silvestri, Francesco Lettich, Salvatore Orlando, and Christian~S.
  Jensen.
\newblock Gpu-based computing of repeated range queries over moving objects.
\newblock In {\em Proceedings of the 2014 22Nd Euromicro International
  Conference on Parallel, Distributed, and Network-Based Processing}, PDP '14,
  pages 640--647, Washington, DC, USA, 2014. IEEE Computer Society.

\bibitem{sowell2013experimental}
Benjamin Sowell, Marcos~Vaz Salles, Tuan Cao, Alan Demers, and Johannes Gehrke.
\newblock An experimental analysis of iterated spatial joins in main memory.
\newblock {\em Proc. VLDB Endow.}, 6(14):1882--1893, September 2013.

\bibitem{brinkhoff1993efficient}
Thomas Brinkhoff, Hans-Peter Kriegel, and Bernhard Seeger.
\newblock Efficient processing of spatial joins using r-trees.
\newblock {\em SIGMOD Rec.}, 22(2):237--246, June 1993.

\bibitem{dittrich2011movies}
Dittrich J., Blunschi L., and Vaz~Salles M.A.
\newblock {MOVIES}: indexing moving objects by shooting index images.
\newblock {\em Geoinformatica}, 15(4):727--767, 2011.

\bibitem{gray1993transaction}
Gray J. and Reuter A.
\newblock {\em Transaction processing: concepts and techniques}.
\newblock Morgan Kaufmann Publishers, 1993.

\bibitem{vsidlauskas2012parallel}
{\v{S}}idlauskas D., {\v{S}}altenis S., and Jensen C.S.
\newblock Parallel main-memory indexing for moving-object query and update
  workloads.
\newblock In {\em Proc. of {ACM SIGMOD} Conf.}, pages 37--48, 2012.

\bibitem{kornacker1997concurrency}
Kornacker M., Mohan C., and Hellerstein J.M.
\newblock Concurrency and recovery in generalized search trees.
\newblock In {\em Proc. of {ACM SIGMOD} Conf.}, pages 62--72, 1997.

\bibitem{hong2009analytical}
Hong S. and Kim H.
\newblock An analytical model for a {GPU} architecture with memory-level and
  thread-level parallelism awareness.
\newblock {\em {ACM SIGARCH Computer Architecture News}}, 37(3):152--163, 2009.

\bibitem{jia2012characterizing}
Wenhao Jia, Kelly~A Shaw, and Margaret Martonosi.
\newblock Characterizing and improving the use of demand-fetched caches in
  {GPUs}.
\newblock In {\em Proceedings of the 26th ACM international conference on
  Supercomputing}, pages 15--24. ACM, 2012.

\bibitem{patterson2012comparch}
Hennessy J.L. and Patterson D.A.
\newblock {\em Computer Architecture - A Quantitative Approach (5. ed.)}.
\newblock Morgan Kaufmann, 2012.

\bibitem{sid11}
{\v{S}}idlauskas D., Ross K.A., Jensen C.S., and {\v{S}}altenis S.
\newblock Thread-level parallel indexing of update intensive moving-object
  workloads.
\newblock In {\em Proc. of {SSTD} Conf.}, pages 186--204, 2011.

\bibitem{har2011geometric}
Sariel Har-peled.
\newblock {\em Geometric Approximation Algorithms}.
\newblock American Mathematical Society, Boston, MA, USA, 2011.

\bibitem{raman2008converting}
Rajeev Raman and David~S Wise.
\newblock Converting to and from dilated integers.
\newblock {\em Computers, IEEE Transactions on}, 57(4):567--573, 2008.

\bibitem{lauterbach2009fast}
C.~Lauterbach, M.~Garland, S.~Sengupta, D.~Luebke, and D.~Manocha.
\newblock Fast bvh construction on gpus.
\newblock {\em Computer Graphics Forum}, 28(2):375--384, 2009.

\bibitem{sengupta07}
Sengupta S., Harris M., Zhang Y., and Owens J.D.
\newblock Scan primitives for {GPU} computing.
\newblock In {\em {Proc. of ACM SIGGRAPH Symposium on Graphics Hardware}},
  pages 97--106, 2007.

\bibitem{merrill11}
Merrill D. and Grimshaw A.S.
\newblock High performance and scalable radix sorting: a case study of
  implementing dynamic parallelism for {GPU} computing.
\newblock {\em {Parallel Processing Letters}}, 21(2):245--272, 2011.

\bibitem{Pharr2005}
Matt Pharr and Randima Fernando.
\newblock {\em GPU Gems 2: Programming Techniques for High-Performance Graphics
  and General-Purpose Computation (Gpu Gems)}.
\newblock Addison-Wesley Professional, 2005.

\bibitem{cederman2008dynamic}
Daniel Cederman and Philippas Tsigas.
\newblock On dynamic load balancing on graphics processors.
\newblock In {\em Proceedings of the 23rd ACM SIGGRAPH/EUROGRAPHICS symposium
  on Graphics hardware}, pages 57--64. Eurographics Association, 2008.

\bibitem{sow2012software}
Sowell B., Vaz~Salles M., Cao T., Demers A., and Gehrke J.
\newblock Indexing framework.
\newblock {\small http://www.cs.cornell.edu/\texttildelow sowell/indexing/}.

\bibitem{brinkhoff2002generator}
Thomas Brinkhoff.
\newblock A framework for generating network-based moving objects.
\newblock {\em GeoInformatica}, 6(2):153--180, 2002.

\bibitem{sun2003hardware}
Sun C., Agrawal D., and El~Abbadi A.
\newblock Hardware acceleration for spatial selections and joins.
\newblock In {\em Proc. of {ACM SIGMOD} Conf.}, pages 455--466, 2003.

\bibitem{luo2012parallel}
Luo L., Wong M.D.F., and Leong L.
\newblock Parallel implementation of r-trees on the gpu.
\newblock In {\em IEEE Conf. on Design Automation}, pages 353--358, 2012.

\bibitem{yu2011parallel}
Yu~B., Kim H., Choi W., and Kwon D.
\newblock Parallel range query processing on {R-tree} with graphics processing
  unit.
\newblock In {\em {IEEE Conf. on Dependable, Autonomic and Secure Computing}},
  pages 1235--1242, 2011.

\bibitem{karras12}
Tero Karras.
\newblock Maximizing parallelism in the construction of {BVHs}, octrees, and
  k-d trees.
\newblock In {\em High Performance Graphics}, pages 33--37, 2012.

\bibitem{zhang2012pointInPolygon}
Zhang J. and You S.
\newblock Speeding up large-scale point-in-polygon test based spatial join on
  {GPUs}.
\newblock In {\em {Proc. of ACM SIGSPATIAL Intl. Wksp. on Analytics for Big
  Geospatial Data}}, pages 23--32, 2012.

\end{thebibliography}

\end{document}